\newenvironment{ottdefnblock}[3][]{ \framebox{\mbox{#2}} \quad #3 \\[0pt]}{}
\newcommand{\ottnt}[1]{\mathit{#1}}
\newcommand{\ottmv}[1]{\mathit{#1}}
\newcommand{\ottkw}[1]{\mathbf{#1}}
\newcommand{\ottsym}[1]{#1}
   \newcommand\SkipToFmtEnd{}%
   \newcommand\EndFmtInput{}%
   \long\def\SkipToFmtEnd#1\EndFmtInput{}%
\newcommand\ReadOnlyOnce[1]{\@ifundefined{#1}{\@namedef{#1}{}}\SkipToFmtEnd}
\DeclareFontFamily{OT1}{cmtex}{}
\DeclareFontShape{OT1}{cmtex}{m}{n}
  {<5><6><7><8>cmtex8
   <9>cmtex9
   <10><10.95><12><14.4><17.28><20.74><24.88>cmtex10}{}
\DeclareFontShape{OT1}{cmtex}{m}{it}
  {<-> ssub * cmtt/m/it}{}
\DeclareFontShape{OT1}{cmtt}{bx}{n}
  {<5><6><7><8>cmtt8
   <9>cmbtt9
   <10><10.95><12><14.4><17.28><20.74><24.88>cmbtt10}{}
\DeclareFontShape{OT1}{cmtex}{bx}{n}
  {<-> ssub * cmtt/bx/n}{}
\newcommand{\Conid}[1]{\mathit{#1}}
\newcommand{\Varid}[1]{\mathit{#1}}
\newcommand{\anonymous}{\kern0.06em \vbox{\hrule\@width.5em}}
\renewcommand{\leq}{\leqslant}
\newdimen\mathindent\mathindent\leftmargini}%
\def\resethooks{%
  \global\let\SaveRestoreHook\empty
  \global\let\ColumnHook\empty}
\newcommand*{\savecolumns}[1][default]%
  {\g@addto@macro\SaveRestoreHook{\savecolumns[#1]}}
\newcommand*{\restorecolumns}[1][default]%
  {\g@addto@macro\SaveRestoreHook{\restorecolumns[#1]}}
\newcommand*{\aligncolumn}[2]%
  {\g@addto@macro\ColumnHook{\column{#1}{#2}}}
\newcommand{\onelinecommentchars}{\quad-{}- }
\newcommand{\commentbeginchars}{\enskip\{-}
\newcommand{\commentendchars}{-\}\enskip}
\newcommand{\visiblecomments}{%
  \let\onelinecomment=\onelinecommentchars
  \let\commentbegin=\commentbeginchars
  \let\commentend=\commentendchars}
\newcommand{\invisiblecomments}{%
  \let\onelinecomment=\empty
  \let\commentbegin=\empty
  \let\commentend=\empty}
\newlength{\blanklineskip}
\newcommand{\hsindent}[1]{\quad}
\let\hspre\empty
\let\hspost\empty
\newcommand{\hsnewpar}[1]%
  {{\parskip=0pt\parindent=0pt\par\vskip #1\noindent}}
\newcommand{\hscodestyle}{}
\newcommand{\sethscode}[1]%
  {\expandafter\let\expandafter\hscode\csname #1\endcsname
   \expandafter\let\expandafter\endhscode\csname end#1\endcsname}
   \let\hspre\(\let\hspost\)%
   \let\hspre\(\let\hspost\)%
\newcommand{\plainhs}{\sethscode{plainhscode}}
\def\codeframewidth{\arrayrulewidth}
   \let\endoflinesave=\\
   \framedhslinecorrect\endoflinesave{.5ex}\hline
\newcommand{\framedhslinecorrect}[2]%
  {#1[#2]}
\def\column##1##2{}%
   \newcommand\>[1][]{}\newcommand\<[1][]{}\newcommand\\[1][]{}%
   \def\fromto##1##2##3{##3}%
\let\orighscode=\hscode
   \let\origendhscode=\endhscode
   \def\endhscode{\def\hscode{\endgroup\def\@currenvir{hscode}\\}\begingroup}
\def\hscode{\endgroup\def\@currenvir{hscode}}}%
   \global\let\hscode=\orighscode
   \global\let\endhscode=\origendhscode}%
    \def\noeditingmarks{}
      \newcommand{\note}[1]{{\color{blue}{\begin{itemize} \item {#1} \end{itemize}}}}
      \newenvironment{alt}{\color{red}}{}
      \newcommandx{\jp}[2][1=]{\todo[linecolor=purple,backgroundcolor=purple!25,bordercolor=purple,#1]{#2}}
      \newcommandx{\csongor}[2][1=]{\todo[linecolor=blue,backgroundcolor=blue!25,bordercolor=purple,#1]{#2}}
      \newcommandx{\rae}[2][1=]{\todo[linecolor=magenta,backgroundcolor=magenta!25,bordercolor=magenta,#1]{RAE: #2}}
      \newcommandx{\nw}[2][1=]{\todo[linecolor=green,backgroundcolor=green!25,bordercolor=green,#1]{NW: #2}}
      \newcommandx{\unsure}[2][1=]{\todo[linecolor=red,backgroundcolor=red!25,bordercolor=red,#1]{#2}}
      \newcommandx{\info}[2][1=]{\todo[linecolor=green,backgroundcolor=green!25,bordercolor=green,#1]{#2}}
      \newcommandx{\change}[2][1=]{\todo[linecolor=blue,backgroundcolor=blue!25,bordercolor=blue,#1]{#2}}
      \newcommandx{\inconsistent}[2][1=]{\todo[linecolor=blue,backgroundcolor=blue!25,bordercolor=red,#1]{#2}}
      \newcommandx{\critical}[2][1=]{\todo[linecolor=blue,backgroundcolor=blue!25,bordercolor=red,#1]{#2}}
      \newcommand{\improvement}[1]{\todo[linecolor=pink,backgroundcolor=pink!25,bordercolor=pink]{#1}}
      \newcommandx{\resolved}[2][1=]{\todo[linecolor=OliveGreen,backgroundcolor=OliveGreen!25,bordercolor=OliveGreen,#1]{#2}} 
      \newcommand{\note}[1]{}
      \newcommand{\unsure}[2][1=]{}
      \newcommand{\info}[2][1=]{}
      \newcommand{\change}[2]{}
      \newcommand{\inconsistent}[2]{}
      \newcommand{\critical}[2]{}
      \newcommand{\improvement}[1]{}
      \newcommand{\resolved}[2]{}
      \newcommand{\csongor}[2][1=]{}
      \newcommand{\jp}[2][1=]{}
      \newcommandx{\rae}[2][1=]{}
      \newcommandx{\nw}[2][1=]{}
  \newcommand{\constraintcolour}{\color{RoyalBlue}}
  \newcommand{\constraintfont}[1]{{\constraintcolour#1}}
  \newcommand{\multiplicitycolour}{\color{RoyalBlue}}
  \newcommand{\multiplicityfont}[1]{{\multiplicitycolour#1}}
  \newcommand{\vdashi}{%
      \mathrel{%
          \vdash\hspace*{-4pt}%
          \raisebox{0.9pt}{\scalebox{.66}{\(\blacktriangleright\)}}%
      }%
  }
  \newcommand{\vdashs}{⊢_{\mathsf{s}}}
  \newcommand{\vdashsimp}{⊢_{\mathsf{s}}^{\mathsf{a}}}
  \newcommand{\scale}{\constraintfont{\cdot}}
  \newcommand{\constraintop}[1]{\mathop{\constraintfont{#1}}}
  \newcommand{\aand}{\constraintop{\&}}
  \DeclareMathOperator*{\bigaand}{\vcenter{\hbox{\Large\&}}}
  \newcommand{\lollycirc}{\raisebox{-0.255ex}{\scalebox{1.4}{$\circ$}}}
  \newcommand{\Lolly}{\constraintop{=\kern-1.1ex \lollycirc}}
  \newcommand{\FatArrow}{\constraintop{\Rightarrow}}
  \newcommand{\RLolly}{\mathop{\constraintfont \circledless}}
  \newcommand{\qtensor}{\constraintop{\otimes}}
  \newcommand{\dsterm}[2]{\llbracket #2 \rrbracket_{#1}}
  \newcommand{\dstype}[1]{\llbracket #1 \rrbracket}
  \newcommand{\dsevidence}[1]{\llbracket #1 \rrbracket^{\mathbf{ev}}}
  \newcommand{\keyword}[1]{\mathbf{#1}}
  \newcommand{\klet}{\keyword{let}}
  \newcommand{\kcase}{\keyword{case}}
  \newcommand{\packbox}{\raisebox{-0.08ex}{\scalebox{0.88}{$\square$}}}
  \newcommand{\kin}{\keyword{in}}
  \newcommand{\kunpack}{\klet\packbox}
  \newcommand{\bnfeq}{\mathrel{\Coloneqq}}
  \newcommand{\bnfor}{\mathrel{\mid}}
  \theoremstyle{acmplain}
  \newtheorem{theorem}{Theorem}[section]
  \newtheorem{lemma}[theorem]{Lemma}
  \newtheorem{corollary}[theorem]{Corollary}
  \newtheorem{property}[theorem]{Property}
  \theoremstyle{acmdefinition}
  \newtheorem{definition}[theorem]{Definition}
\begin{document}

\title{Linearly Qualified Types}
\subtitle{Generic inference for capabilities and uniqueness}

\author{Arnaud Spiwack}
\orcid{0000-0002-5985-2086}
\affiliation{
  \institution{Tweag}
  \city{Paris}
  \country{France}
}
\email{arnaud.spiwack@tweag.io}
\author{Csongor Kiss}
\orcid{0000-0002-0195-2420}
\affiliation{
  \institution{Imperial College London}
  \city{London}
  \country{United Kingdom}
}
\email{csongor.kiss14@imperial.ac.uk}
\author{Jean-Philippe Bernardy}
\orcid{0000-0002-8469-5617}
\affiliation{
  \institution{University of Gothenburg}
  \city{Gothenburg}
  \country{Sweden}
}
\email{jean-philippe.bernardy@gu.se}
\author{Nicolas Wu}
\orcid{0000-0002-4161-985X}
\affiliation{
  \institution{Imperial College London}
  \city{London}
  \country{United Kingdom}
}
\email{n.wu@imperial.ac.uk}
\author{Richard A.~Eisenberg}
\orcid{0000-0002-7669-9781}
\affiliation{
  \institution{Tweag}
  \city{Paris}
  \country{France}
}
\email{rae@richarde.dev}

\keywords{GHC, Haskell, linear logic, linear types,
  constraints, qualified types, inference}

\begin{CCSXML}\begin{hscode}\SaveRestoreHook
\column{B}{@{}>{\hspre}l<{\hspost}@{}}%
\column{E}{@{}>{\hspre}l<{\hspost}@{}}%
\>[B]{}\Varid{ccs2012}\mathbin{>}{}\<[E]%
\ColumnHook
\end{hscode}\resethooks
   <concept>
       <concept_id>10011007.10011006.10011008.10011024</concept_id>
       <concept_desc>Software and its engineering~Language features</concept_desc>
       <concept_significance>500</concept_significance>
       </concept>
   <concept>
       <concept_id>10011007.10011006.10011008.10011009.10011012</concept_id>
       <concept_desc>Software and its engineering~Functional languages</concept_desc>
       <concept_significance>500</concept_significance>
       </concept>
   <concept>
       <concept_id>10011007.10011006.10011039</concept_id>
       <concept_desc>Software and its engineering~Formal language definitions</concept_desc>
       <concept_significance>500</concept_significance>
       </concept>
 </ccs2012>
\end{CCSXML}

\ccsdesc[500]{Software and its engineering~Language features}
\ccsdesc[300]{Software and its engineering~Functional languages}
\ccsdesc[300]{Software and its engineering~Formal language definitions}

\begin{abstract}
  A linear parameter must be consumed exactly once in the body of its
  function. When declaring resources such as file handles and manually
  managed memory as linear arguments, a linear type system can verify
  that these resources are used safely.
  However, writing code with explicit linear arguments requires bureaucracy.
  This paper presents \emph{linear constraints}, a front-end feature for
  linear typing that
  decreases the bureaucracy of working with linear types.
  Linear constraints are implicit linear arguments that are
  filled in automatically by the compiler.
  We present linear constraints as a qualified type system,
  together with an inference algorithm which extends
  \textsc{ghc}'s existing constraint solver algorithm. Soundness of
  linear constraints is ensured by the fact that they desugar into
  Linear Haskell.
\end{abstract}

\maketitle

\newcommand{\maybesmall}{\small}

\begin{acks}
  Jean-Philippe Bernardy is supported by grant
  \grantnum{clasp}{2014-39} from the \grantsponsor{clasp}{Swedish
    Research Council}{https://www.vr.se}, which funds the Centre for
  Linguistic Theory and Studies in Probability (CLASP) in the
  Department of Philosophy, Linguistics, and Theory of Science at the
  University of Gothenburg.
  Nicolas Wu is supported by
  \grantsponsor{EPSRC}{EPSRC}{https://www.ukri.org/councils/epsrc/}
  Grant \grantnum{EPSRC}{EP/S028129/1}.

\end{acks}

\section{Introduction}
\label{sec:introduction}

Linear type systems have seen a renaissance in recent years in
various programming communities. Rust's ownership system guarantees
memory safety for systems programmers, Haskell's \textsc{ghc}~9.0 includes
support for linear types, and even
dependently typed programmers can now use linear types with Idris 2.
All of these systems are vastly different in ergonomics and
scope. Rust uses dedicated syntax and code generation
to support management of resources, while Linear Haskell is a
type system change without any other impact on the compiler, such as in the
code generator or runtime system.\rae{But actually we'd \emph{like} to have
an impact on the runtime system to allow update-in-place.} Linear Haskell
is designed to be general purpose, but
using its linear arguments to emulate Rust's ownership model is a tedious exercise. It
requires
the programmer to carefully thread resource tokens.

To get a sense of the power and the tedium of using linear types, consider the
following function:
\begin{hscode}\SaveRestoreHook
\column{B}{@{}>{\hspre}l<{\hspost}@{}}%
\column{3}{@{}>{\hspre}l<{\hspost}@{}}%
\column{8}{@{}>{\hspre}l<{\hspost}@{}}%
\column{19}{@{}>{\hspre}l<{\hspost}@{}}%
\column{E}{@{}>{\hspre}l<{\hspost}@{}}%
\>[B]{}\Varid{read2AndDiscard}\mathbin{::}\Conid{MArray}\;\Varid{a}⊸(\Conid{Ur}\;\Varid{a},\Conid{Ur}\;\Varid{a}){}\<[E]%
\\
\>[B]{}\Varid{read2AndDiscard}\;\Varid{arr0}\mathrel{=}{}\<[E]%
\\
\>[B]{}\hsindent{3}{}\<[3]%
\>[3]{}\mathbf{let}\;{}\<[8]%
\>[8]{}(\Varid{arr1},\Varid{x}){}\<[19]%
\>[19]{}\mathrel{=}\Varid{read}\;\Varid{arr0}\;\mathrm{0}{}\<[E]%
\\
\>[8]{}(\Varid{arr2},\Varid{y}){}\<[19]%
\>[19]{}\mathrel{=}\Varid{read}\;\Varid{arr1}\;\mathrm{1}{}\<[E]%
\\
\>[8]{}(){}\<[19]%
\>[19]{}\mathrel{=}\Varid{free}\;\Varid{arr2}{}\<[E]%
\\
\>[B]{}\hsindent{3}{}\<[3]%
\>[3]{}\mathbf{in}\;(\Varid{x},\Varid{y}){}\<[E]%
\ColumnHook
\end{hscode}\resethooks
This function reads the first two elements of an array and returns them after
deallocating the array.
Linearity enables the array library to ensure that there is only one reference
to the array, and therefore it can be mutated in-place without violating
referential transparency. Let us stress that this uniqueness property
is an invariant of the array library, not an intrinsic property of
linear functions.

After the array has been freed, it is no longer
possible to read or write to it.
Notice that the \ensuremath{\Varid{read}} function consumes the array and returns a
fresh array, to be used in future operations. Operationally, the array remains
the same, but each operation assigns a new name to it, thus facilitating tracking
references statically. Finally, \ensuremath{\Varid{free}} consumes the array without returning a
new one, statically guaranteeing that it can no longer be used.
The values \ensuremath{\Varid{x}} and \ensuremath{\Varid{y}} read from the array are returned; their
types include elements wrapped by the \ensuremath{\Conid{Ur}}
(pronounced ``unrestricted'',  and corresponding to the
``!'' operator of \citet{girard-linear-logic}) type, allowing them to be used
arbitrarily many times. This works because \ensuremath{\Varid{read2AndDiscard}} takes a restricted-use array
containing unrestricted elements.
In a non-linear language, one would have to forgo referential transparency to
handle mutable operations either by using a monadic interface or allowing
arbitrary effects.
Compare the above function with what one would write in a non-linear, impure
language:
\begin{hscode}\SaveRestoreHook
\column{B}{@{}>{\hspre}l<{\hspost}@{}}%
\column{3}{@{}>{\hspre}l<{\hspost}@{}}%
\column{9}{@{}>{\hspre}l<{\hspost}@{}}%
\column{14}{@{}>{\hspre}l<{\hspost}@{}}%
\column{E}{@{}>{\hspre}l<{\hspost}@{}}%
\>[B]{}\Varid{read2AndDiscard}\mathbin{::}\Conid{MArray}\;\Varid{a}\to (\Varid{a},\Varid{a}){}\<[E]%
\\
\>[B]{}\Varid{read2AndDiscard}\;\Varid{arr}\mathrel{=}{}\<[E]%
\\
\>[B]{}\hsindent{3}{}\<[3]%
\>[3]{}\mathbf{let}\;{}\<[9]%
\>[9]{}\Varid{x}{}\<[14]%
\>[14]{}\mathrel{=}\Varid{read}\;\Varid{arr}\;\mathrm{0}{}\<[E]%
\\
\>[9]{}\Varid{y}{}\<[14]%
\>[14]{}\mathrel{=}\Varid{read}\;\Varid{arr}\;\mathrm{1}{}\<[E]%
\\
\>[9]{}(){}\<[14]%
\>[14]{}\mathrel{=}\Varid{unsafeFree}\;\Varid{arr}{}\<[E]%
\\
\>[B]{}\hsindent{3}{}\<[3]%
\>[3]{}\mathbf{in}\;(\Varid{x},\Varid{y}){}\<[E]%
\ColumnHook
\end{hscode}\resethooks
This non-linear version does not guarantee that there is a unique reference to
the array, so freeing the array is a potentially unsafe operation.
However, it is simpler because there is less bureaucracy to manage: we are
clearly interacting with the \emph{same} array throughout, and this version makes
that apparent.
We see here a clear tension between extra safety and clarity of code---one
we wish, as language designers, to avoid. 
%
How can we get the compiler to see that the array is used safely
without explicit threading?

Following well-known ideas~\citep{DBLP:conf/popl/CraryWM99,DBLP:conf/esop/SmithWM00,DBLP:conf/tic/WalkerM00}, our approach is to let arrays be unrestricted, but
associate linear capabilities (such as \ensuremath{\constraintfont{\Conid{Read}}}, \ensuremath{\constraintfont{\Conid{Write}}}) to them.
In fact, we show in this paper that such linear capabilities
are the natural analogue of Haskell's type class
constraints to the setting of linear types.
We call these new constraints \emph{linear constraints}.
Like class constraints,
linear constraints are propagated implicitly by the compiler.
Like linear arguments, they can safely be used to track resources such as arrays
or file handles. Thus, linear constraints are the combination of these two
concepts, which have been studied independently elsewhere
\citep{OutsideIn,LinearHaskell,hh-ll,resource-management-for-ll-proof-search}.

With our extension, we can write a new pure version of \ensuremath{\Varid{read2AndDiscard}} which does
not require explicit threading of the array:
\begin{hscode}\SaveRestoreHook
\column{B}{@{}>{\hspre}l<{\hspost}@{}}%
\column{3}{@{}>{\hspre}l<{\hspost}@{}}%
\column{9}{@{}>{\hspre}l<{\hspost}@{}}%
\column{18}{@{}>{\hspre}c<{\hspost}@{}}%
\column{18E}{@{}l@{}}%
\column{21}{@{}>{\hspre}l<{\hspost}@{}}%
\column{E}{@{}>{\hspre}l<{\hspost}@{}}%
\>[B]{}\Varid{read2AndDiscard}\mathbin{::}\constraintfont{(\Conid{Read}\;\Varid{n},\Conid{Write}\;\Varid{n})}\Lolly \Conid{UArray}\;\Varid{a}\;\Varid{n}\to (\Conid{Ur}\;\Varid{a},\Conid{Ur}\;\Varid{a}){}\<[E]%
\\
\>[B]{}\Varid{read2AndDiscard}\;\Varid{arr}\mathrel{=}{}\<[E]%
\\
\>[B]{}\hsindent{3}{}\<[3]%
\>[3]{}\mathbf{let}\;{}\<[9]%
\>[9]{}\packbox\Varid{x}{}\<[18]%
\>[18]{}\mathrel{=}{}\<[18E]%
\>[21]{}\Varid{read}\;\Varid{arr}\;\mathrm{0}{}\<[E]%
\\
\>[9]{}\packbox\Varid{y}{}\<[18]%
\>[18]{}\mathrel{=}{}\<[18E]%
\>[21]{}\Varid{read}\;\Varid{arr}\;\mathrm{1}{}\<[E]%
\\
\>[9]{}\packbox(){}\<[18]%
\>[18]{}\mathrel{=}{}\<[18E]%
\>[21]{}\Varid{free}\;\Varid{arr}{}\<[E]%
\\
\>[B]{}\hsindent{3}{}\<[3]%
\>[3]{}\mathbf{in}\;(\Varid{x},\Varid{y}){}\<[E]%
\ColumnHook
\end{hscode}\resethooks
The only changes from the impure version are that this version explicitly requires having
read and write access to the array,
and pattern-matching against $\packbox$ (read ``pack'') is necessary in order to access the linear constraint
packed in the result of \ensuremath{\Varid{read}} and \ensuremath{\Varid{free}}. (\Fref{sec:implicit-existentials}
suggests how we can get rid of $\packbox$, too.)
Crucially, the resource representing the ownership of the
array is a linear constraint and is separate from the array itself, which no
longer needs to be threaded manually.

Our contributions are as follows:
\begin{itemize}
\item A system of qualified types that allows a constraint assumption
  to be given a multiplicity (linear or unrestricted). Linear assumptions are used precisely
  once in the body of a definition
  (\Fref{sec:qualified-type-system}). This system supports examples
  that have motivated the design of several resource-aware systems,
  such as ownership \textit{à la} Rust (\Fref{sec:memory-ownership}), or
  capabilities in the style of Mezzo~\cite{mezzo-permissions} or
  \textsc{ats}~\cite{AtsLinearViews}; accordingly, our system points
  towards a possible unification of these lines of research.

\item Applications of this qualified type system to allow writing
  \begin{itemize}
  \item resource-aware algorithms without explicit threading (\Fref{sec:memory-ownership}); and
  \item functions whose result can only be used linearly (\Fref{sec:Unique-constraint})
  \end{itemize}

\item An inference algorithm that respects the multiplicity of
  assumptions. We prove that this algorithm is sound with respect to
  our type system~(\Fref{sec:type-inference}). It consists of
  \begin{itemize}
  \item a constraint generation
    algorithm~(\cref{sec:constraint-generation}). The language of
    generated constraints tracks multiplicities.
  \item a solver~(\cref{sec:constraint-solver}) for the generated
    constraints, which restricts proof-search algorithms for linear
    logic in order to be \emph{guess
      free}~\cite[Section~6.4]{OutsideIn}. A guess-free algorithm
    ensures that constraint inference is predictable and insensitive
    to small changes in the source program; it is necessarily
    incomplete.
  \end{itemize}
\end{itemize}
Our language is given semantics by desugaring into a core language based
on that of \citet{LinearHaskell}.
Our design is intended to work well with other features of Haskell and
\textsc{ghc} extensions. Indeed, we have a prototype implementation (\cref{sec:implementation}).

\section{Background: Linear Haskell}
\label{sec:linear-types}

This section, mostly cribbed from \citet[Section 2.1]{LinearHaskell},
describes our baseline approach, as released in \textsc{ghc}~9.0.
Linear Haskell adds a new type of functions,
dubbed \emph{linear functions}, and written \ensuremath{\Varid{a}\mathbin{⊸}\Varid{b}}.\footnote{The linear function
  type and its notation come from linear
  logic~\cite{girard-linear-logic}, to which the phrase \emph{linear
    types} refers. All the various design of linear typing in the
  literature amount to adding such a linear function type, but details
  can vary wildly. See~\citet[Section 6]{LinearHaskell} for an analysis
  of alternative approaches.} A linear function consumes its
  argument exactly once. Linear Haskell defines it as follows: 

\begin{quote}
\ensuremath{\Varid{f}\mathbin{::}\Varid{a}\mathbin{⊸}\Varid{b}} guarantees that if \ensuremath{(\Varid{f}\;\Varid{u})} is consumed exactly once,
then the argument \ensuremath{\Varid{u}} is consumed exactly once.
\end{quote}
To make sense of this statement we need to know what ``consumed exactly once'' means.
Our definition is based on the type of the value concerned:
\begin{definition}[Consume exactly once]~ \label{def:consume}
\begin{itemize}
\item To consume a value of atomic base type (like \ensuremath{\Conid{Int}}) exactly once, just evaluate it.
\item To consume a function exactly once, apply it to one argument, and then consume its result exactly once.
\item To consume a pair exactly once, pattern-match on it, and then consume each component exactly once.
\item In general, to consume a value of an algebraic datatype exactly once, pattern-match on it,
  and then consume all its linear components exactly once.
\end{itemize}
\end{definition}
\subsection{Multiplicities}
\label{sec:multiplicities}
The usual arrow type \ensuremath{\Varid{a}\to \Varid{b}} can be recovered using \ensuremath{\Conid{Ur}}, as \ensuremath{\Conid{Ur}\;\Varid{a}\mathbin{⊸}\Varid{b}}, but Linear
Haskell provides a first-class treatment of \ensuremath{\Varid{a}\to \Varid{b}}, thus ensuring
backwards compatibility with Haskell. In practice, the type-checker
records the \emph{multiplicity} of every introduced variable: \(1\) for linear
arguments and \(ω\) for unrestricted ones. This way, one can give a unified treatment
of both arrow types~\citep{linearity-and-pi-calculus}.
$$
\begin{array}{lcll}
    \multiplicityfont{ \pi }  ,   \multiplicityfont{ \rho }   & \bnfeq &   \multiplicityfont{ \ottsym{1} }   \bnfor   \multiplicityfont{ \omega }   & \text{Multiplicities}
\end{array}
$$

We stress that a multiplicity of \(1\) restricts \emph{how the variable can be used}. It does not
restrict \emph{which values can be substituted for it}.
In particular, a linear function cannot assume that it is given the
unique pointer to its argument.  For example, if \ensuremath{\Varid{f}\mathbin{::}\Varid{a}\mathbin{⊸}\Varid{b}}, then
the following is fine:
\begin{hscode}\SaveRestoreHook
\column{B}{@{}>{\hspre}l<{\hspost}@{}}%
\column{E}{@{}>{\hspre}l<{\hspost}@{}}%
\>[B]{}\Varid{g}\mathbin{::}\Varid{a}\to (\Varid{b},\Varid{b}){}\<[E]%
\\
\>[B]{}\Varid{g}\;\Varid{x}\mathrel{=}(\Varid{f}\;\Varid{x},\Varid{f}\;\Varid{x}){}\<[E]%
\ColumnHook
\end{hscode}\resethooks
The type of \ensuremath{\Varid{g}} makes no guarantees about how it uses \ensuremath{\Varid{x}}.
In particular, \ensuremath{\Varid{g}} can pass \ensuremath{\Varid{x}} to \ensuremath{\Varid{f}}.

Pattern matching on a value of type \ensuremath{\Conid{Ur}\;\Varid{a}} yields a payload of multiplicity
$  \multiplicityfont{ \omega }  $, even when the scrutinee has multiplicity $  \multiplicityfont{ \ottsym{1} }  $. In general, given a
multiplicity set, the desired
(sub)structural rules can be obtained by endowing multiplicities with the appropriate
semiring structure~\citep{abel_unified_2020}. In
this paper, we use the same multiplicity structure as Linear Haskell:\footnote{Even though linear Haskell additionally supports multiplicity
polymorphism, we do not support multiplicity polymorphism on
constraint arguments.  Linear Haskell takes advantage of multiplicity
polymorphism to avoid duplication of higher-order functions. The
prototypical example is \ensuremath{\Varid{map}\mathbin{::}(\Varid{a}\;\mathop{\to_{\multiplicityfont{m}}}\;\Varid{b})\to [\mskip1.5mu \Varid{a}\mskip1.5mu]\;\mathop{\to_{\multiplicityfont{m}}}\;[\mskip1.5mu \Varid{b}\mskip1.5mu]}, where \ensuremath{\mathop{\to_{\multiplicityfont{m}}}} is the notation for a function arrow of
multiplicity \ensuremath{\multiplicityfont{\Varid{m}}}.  First-order functions, on the other
hand, do not need multiplicity polymorphism, because linear functions
can be $\eta$-expanded into unrestricted functions as explained in
\cref{sec:linear-types}. Higher-order functions whose arguments are
themselves constrained functions are rare, so we do not yet see the
need to extend multiplicity polymorphism to apply to constraints.
Futhermore, it is not clear how to extend the constraint solver of
\cref{sec:constraint-solver} to support multiplicity-polymorphic
constraints.}${}^,$\footnote{Here and in the rest of the paper we adopt the convention that
equations defining a function by pattern matching are marked with a
$\left\{\right.$ to their left.}
$$
\begin{array}{c@{\qquad\qquad}c}
\left\{
  \begin{array}{lcl}
      \multiplicityfont{ \pi  \ottsym{+}  \rho }   & = &   \multiplicityfont{ \omega }  
  \end{array}
\right.
&
\left\{
  \begin{array}{lcl}
      \multiplicityfont{  \ottsym{1} {⋅} \pi  }   & = &   \multiplicityfont{ \pi }   \\
      \multiplicityfont{  \omega {⋅} \pi  }   & = &   \multiplicityfont{ \omega }  
  \end{array}
\right.
\end{array}
$$

\subsection{Shortcomings of Linear Haskell that we address}
The \ensuremath{\Varid{read}} function in \cref{sec:introduction} consumes the array it
operates on. Therefore, the same array can no longer be used in
further operations: doing so would result in a type error.  To resolve
this, a new name for the same array is produced by each operation.

From the perspective of the programmer, this is unwanted boilerplate.
Minimizing such boilerplate is the main aim of this paper.  Our
approach is to let the array be non-linear, and let its
capabilities (\emph{i.e.} having read or write access) be \emph{linear
constraints}. Once these capabilities are consumed, the array can no
longer be read from or written to without triggering a compile time
error.

A further drawback of today's Linear Haskell is that
the programmer cannot restrict how a linear function is used.
For example, suppose we want to use linear types to create
a pure interface to arrays that supports in-place mutation;
the interface is safe only if we guarantee that arrays cannot
be aliased. Because the result of a hypothetical \ensuremath{\Varid{newArray}}
function can be stored in an unrestricted variable (of multiplicity
$  \multiplicityfont{ \omega }  $), the linearity system cannot prevent its aliasing.
Instead, \citet[Fig.~2]{LinearHaskell} use a continuation-passing
style to enforce non-aliasing.


\section{Working With Linear Constraints}
\label{sec:what-it-looks-like}
\jp{Why is this here? This current presentation is stuttering. IMO. It should go back somewhere in sec. 4.}
\begin{figure}%
  \maybesmall
  \centering
  \begin{subfigure}{.3\linewidth}%
    \noindent%
\begin{hscode}\SaveRestoreHook
\column{B}{@{}>{\hspre}l<{\hspost}@{}}%
\column{6}{@{}>{\hspre}l<{\hspost}@{}}%
\column{E}{@{}>{\hspre}l<{\hspost}@{}}%
\>[B]{}\phantom{(\mathbf{type}\;\constraintfont{\Conid{RW}\;\Varid{n}}\mathrel{=}\constraintfont{(\Conid{Read}\;\Varid{n},\Conid{Write}\;\Varid{n})})}{}\<[E]%
\\
\>[B]{}\Varid{new}{}\<[6]%
\>[6]{}\mathbin{::}\Conid{Int}\to (\Conid{MArray}\;\Varid{a}⊸\Conid{Ur}\;\Varid{r})⊸\Conid{Ur}\;\Varid{r}{}\<[E]%
\\
\>[B]{}\Varid{write}\mathbin{::}\Conid{MArray}\;\Varid{a}⊸\Conid{Int}\to \Varid{a}\to \Conid{MArray}\;\Varid{a}{}\<[E]%
\\
\>[B]{}\Varid{read}\mathbin{::}\Conid{MArray}\;\Varid{a}⊸\Conid{Int}\to (\Conid{MArray}\;\Varid{a},\Conid{Ur}\;\Varid{a}){}\<[E]%
\\
\>[B]{}\Varid{free}\mathbin{::}\Conid{MArray}\;\Varid{a}⊸(){}\<[E]%
\ColumnHook
\end{hscode}\resethooks
\caption{Linear Types}
\label{fig:linear-interface}
  \end{subfigure}
  \hfill
  \begin{subfigure}{.55\linewidth}
\begin{hscode}\SaveRestoreHook
\column{B}{@{}>{\hspre}l<{\hspost}@{}}%
\column{6}{@{}>{\hspre}l<{\hspost}@{}}%
\column{E}{@{}>{\hspre}l<{\hspost}@{}}%
\>[B]{}\mathbf{type}\;\constraintfont{\Conid{RW}\;\Varid{n}}\mathrel{=}\constraintfont{(\Conid{Read}\;\Varid{n},\Conid{Write}\;\Varid{n})}{}\<[E]%
\\
\>[B]{}\Varid{new}{}\<[6]%
\>[6]{}\mathbin{::}\constraintfont{\constraintfont{\Conid{Linearly}}}\Lolly \Conid{Int}\to \exists\;\Varid{n}.\Conid{UArray}\;\Varid{a}\;\Varid{n}\RLolly\constraintfont{\Conid{RW}\;\Varid{n}}{}\<[E]%
\\
\>[B]{}\Varid{write}\mathbin{::}\constraintfont{\Conid{RW}\;\Varid{n}}\Lolly \Conid{UArray}\;\Varid{a}\;\Varid{n}\to \Conid{Int}\to \Varid{a}\to ()\RLolly\constraintfont{\Conid{RW}\;\Varid{n}}{}\<[E]%
\\
\>[B]{}\Varid{read}\mathbin{::}\constraintfont{\Conid{Read}\;\Varid{n}}\Lolly \Conid{UArray}\;\Varid{a}\;\Varid{n}\to \Conid{Int}\to \Conid{Ur}\;\Varid{a}\RLolly\constraintfont{\Conid{Read}\;\Varid{n}}{}\<[E]%
\\
\>[B]{}\Varid{free}\mathbin{::}\constraintfont{\Conid{RW}\;\Varid{n}}\Lolly \Conid{UArray}\;\Varid{a}\;\Varid{n}\to (){}\<[E]%
\ColumnHook
\end{hscode}\resethooks
\caption{Linear Constraints}
\label{fig:constraints-interface}
  \end{subfigure}
\caption{Interfaces for mutable arrays}
\end{figure}

Consider the Haskell function \ensuremath{\Varid{show}}:
\begin{hscode}\SaveRestoreHook
\column{B}{@{}>{\hspre}l<{\hspost}@{}}%
\column{E}{@{}>{\hspre}l<{\hspost}@{}}%
\>[B]{}\Varid{show}\mathbin{::}\constraintfont{\Conid{Show}\;\Varid{a}}\FatArrow \Varid{a}\to \Conid{String}{}\<[E]%
\ColumnHook
\end{hscode}\resethooks
In addition to the function arrow \ensuremath{\to }, common to all functional
programming languages, the type of this function features a constraint arrow \ensuremath{\FatArrow }.
Everything to the
left of a constraint arrow is called a \emph{constraint}, and will be
highlighted in \constraintfont{blue} throughout the paper. Here
\ensuremath{\constraintfont{\Conid{Show}\;\Varid{a}}} is a
class constraint.

Constraints are handled implicitly by
the typechecker. That is, if we want to \ensuremath{\Varid{show}} the integer \ensuremath{\Varid{n}\mathbin{::}\Conid{Int}} we would write \ensuremath{\Varid{show}\;\Varid{n}}, and the typechecker is responsible for proving that \ensuremath{\constraintfont{\Conid{Show}\;\Conid{Int}}} holds, without
intervention from the programmer.

For our \ensuremath{\Varid{read2AndDiscard}} example, the \ensuremath{\constraintfont{(\Conid{Read}\;\Varid{n},\Conid{Write}\;\Varid{n})}} (abbreviated as \ensuremath{\constraintfont{\Conid{RW}\;\Varid{n}}})
constraint represents read and write
access to the array tagged with the type variable \ensuremath{\Varid{n}}. (The full \textsc{api} under consideration appears
in \cref{fig:constraints-interface}.)
That is, the constraint \ensuremath{\constraintfont{\Conid{RW}\;\Varid{n}}}
is provable if and only if the array tagged with \ensuremath{\Varid{n}} is readable and writable.
This constraint is linear: it must be consumed (that is, used as an assumption
in a function call) exactly once.
In order to manage linearity implicitly, this paper introduces a
linear constraint arrow (\ensuremath{\Lolly }), much like Linear Haskell introduces a linear
function arrow (\ensuremath{⊸}). Constraints to the left of a linear constraint
arrow are \emph{linear constraints}.
Using the linear constraint \ensuremath{\constraintfont{\Conid{RW}\;\Varid{n}}}, we can give
the following type to \ensuremath{\Varid{free}}:

\begin{hscode}\SaveRestoreHook
\column{B}{@{}>{\hspre}l<{\hspost}@{}}%
\column{E}{@{}>{\hspre}l<{\hspost}@{}}%
\>[B]{}\Varid{free}\mathbin{::}\constraintfont{\Conid{RW}\;\Varid{n}}\Lolly \Conid{UArray}\;\Varid{a}\;\Varid{n}\to (){}\<[E]%
\ColumnHook
\end{hscode}\resethooks

There are a few things to notice:
\begin{itemize}
\item We have introduced a new type variable \ensuremath{\Varid{n}}. In contrast,
  the version in \Cref{fig:linear-interface} without linear
  constraints  has type \ensuremath{\Varid{free}\mathbin{::}\Conid{MArray}\;\Varid{a}⊸()}.
  The type variable \ensuremath{\Varid{n}} is a type-level tag used to identify the array.
\item The run-time variable representing the array can now be used multiple
  times. Instead of restricting the use of this variable,
  the linear constraint \ensuremath{\constraintfont{\Conid{RW}\;\Varid{n}}} controls access to the
  array.
\item If we have a single, linear, \ensuremath{\constraintfont{\Conid{RW}\;\Varid{n}}}
  available, then after \ensuremath{\Varid{free}} there will not be any \ensuremath{\constraintfont{\Conid{RW}\;\Varid{n}}}
  left to use, thus preventing the array from being used after freeing.
  This is precisely what we were trying to achieve.
\end{itemize}

The above deals with freeing an array and ensuring that it cannot be used afterwards.
However, we still need to explain how a constraint \ensuremath{\constraintfont{\Conid{RW}\;\Varid{n}}} can come
into scope.
The type of \ensuremath{\Varid{new}} with linear constraints is as follows:
\begin{hscode}\SaveRestoreHook
\column{B}{@{}>{\hspre}l<{\hspost}@{}}%
\column{6}{@{}>{\hspre}l<{\hspost}@{}}%
\column{E}{@{}>{\hspre}l<{\hspost}@{}}%
\>[B]{}\Varid{new}{}\<[6]%
\>[6]{}\mathbin{::}\constraintfont{\constraintfont{\Conid{Linearly}}}\Lolly \Conid{Int}\to \exists\;\Varid{n}.\Conid{UArray}\;\Varid{a}\;\Varid{n}\RLolly\constraintfont{\Conid{RW}\;\Varid{n}}{}\<[E]%
\ColumnHook
\end{hscode}\resethooks

This type, too, illustrates several new aspects:
\begin{itemize}
\item The \ensuremath{\constraintfont{\constraintfont{\Conid{Linearly}}}} constraint is a linear constraint, though
it takes no type parameter. It restricts the result of \ensuremath{\Varid{new}}
to be used linearly, meaning that any variable that stores the result of \ensuremath{\Varid{new}} must
have multiplicity $  \multiplicityfont{ \ottsym{1} }  $. \ensuremath{\constraintfont{\constraintfont{\Conid{Linearly}}}} is explained more fully in \Fref{sec:Unique-constraint}.

\item Because \ensuremath{\Conid{UArray}} is now parameterised by a type-level tag \ensuremath{\Varid{n}}, \ensuremath{\Varid{new}} must
return a \ensuremath{\Conid{UArray}} with a fresh such \ensuremath{\Varid{n}}. This is achieved through returning an
existentially-quantified type\footnote{There is a variety of ways that existential types can
  be worked into a language. The existentials that we use here may be understood
  as a generalisation of those presented by \citet[Chapter
  24]{tapl}.
  However, \citet{existentials} work out an
  approach that makes linear constraints even easier to use, as we
  discuss in \Fref{sec:implicit-existentials}. In order to separate
  concerns, we do not build our formalism on \citet{existentials},
  instead modeling our existentials on the more widely known formulation described by \citet{tapl}.
  We additionally freely
omit the \ensuremath{\exists\;\Varid{a}_{1}\mathbin{...}\Varid{a}_{\Varid{n}}.} or \ensuremath{\RLolly\constraintfont{\Conid{Q}}} parts when they are
empty.
  The idea of using existentials to return linear capabilities can be attributed to \citet{DBLP:conf/esop/FluetMA06}.
} packing the type variable \ensuremath{\Varid{n}}. Such types are introduced with the $\packbox$ constructor.

\item Not only do we need a fresh type variable \ensuremath{\Varid{n}}, but we also need to introduce
the linear constraint \ensuremath{\constraintfont{\Conid{RW}\;\Varid{n}}} for use in subsequent calls to \ensuremath{\Varid{read}} and \ensuremath{\Varid{free}}.
Our existentials also allow packing a constraint, thanks to the \ensuremath{\RLolly} operator.
\end{itemize}

With all these features working together, we see that \ensuremath{\Varid{new}} returns a non-duplicable
\ensuremath{\Conid{UArray}} tagged with \ensuremath{\Varid{n}}, accessible only when the \ensuremath{\constraintfont{\Conid{RW}\;\Varid{n}}} constraint is available.

We must also ensure that \ensuremath{\Varid{read}} can both promise to operate only on a readable array
and that the array remains readable afterwards. That is, \ensuremath{\Varid{read}} must both consume
a linear constraint \ensuremath{\constraintfont{\Conid{Read}\;\Varid{n}}} and also produce a fresh linear constraint
\ensuremath{\constraintfont{\Conid{Read}\;\Varid{n}}}, as we see in \cref{fig:constraints-interface}, and repeated
here:
\begin{hscode}\SaveRestoreHook
\column{B}{@{}>{\hspre}l<{\hspost}@{}}%
\column{E}{@{}>{\hspre}l<{\hspost}@{}}%
\>[B]{}\Varid{read}\mathbin{::}\constraintfont{\Conid{Read}\;\Varid{n}}\Lolly \Conid{UArray}\;\Varid{a}\;\Varid{n}\to \Conid{Int}\to \Conid{Ur}\;\Varid{a}\RLolly\constraintfont{\Conid{Read}\;\Varid{n}}{}\<[E]%
\ColumnHook
\end{hscode}\resethooks
We have now seen all the ingredients needed to write
the \ensuremath{\Varid{read2AndDiscard}} example as in \cref{sec:introduction}.

\subsection{Minimal Examples}
\label{sec:examples}

To get a sense of how the features that we introduce should behave, we now
look at some simple examples.\jp{This seems either way too late or redundant. How can anyone understand the previous section without already understanding this? I suggest either to 1. move this into the Linear Haskell section (using regular arguments). or 2. delete.} Using constraints to represent
limited resources allows the typechecker to reject certain classes
of ill-behaved programs. Accordingly, the following examples show the
different reasons a program might be rejected.

In what follows, we will be using a constraint \ensuremath{\constraintfont{\Conid{C}}} that is consumed by the \ensuremath{\Varid{useC}}
function.
\begin{hscode}\SaveRestoreHook
\column{B}{@{}>{\hspre}l<{\hspost}@{}}%
\column{E}{@{}>{\hspre}l<{\hspost}@{}}%
\>[B]{}\Varid{useC}\mathbin{::}\constraintfont{\Conid{C}}\Lolly \Conid{Int}{}\<[E]%
\ColumnHook
\end{hscode}\resethooks
The type of \ensuremath{\Varid{useC}} indicates that it consumes the linear resource \ensuremath{\Conid{C}} exactly once.

\subsubsection{Dithering}

We reject this program:
\begin{hscode}\SaveRestoreHook
\column{B}{@{}>{\hspre}l<{\hspost}@{}}%
\column{E}{@{}>{\hspre}l<{\hspost}@{}}%
\>[B]{}\Varid{dithering}\mathbin{::}\constraintfont{\Conid{C}}\Lolly \Conid{Bool}\to \Conid{Int}{}\<[E]%
\\
\>[B]{}\Varid{dithering}\;\Varid{x}\mathrel{=}\mathbf{if}\;\Varid{x}\;\mathbf{then}\;\Varid{useC}\;\mathbf{else}\;\mathrm{10}{}\<[E]%
\ColumnHook
\end{hscode}\resethooks
The problem with \ensuremath{\Varid{dithering}} is that it does not unconditionally consume \ensuremath{\constraintfont{\Conid{C}}}:
 the branch where \ensuremath{\Varid{x}\equiv \Conid{True}} uses the resource \ensuremath{\Conid{C}},
whereas the other branch does not.

\subsubsection{Neglecting}

Now consider the type of the linear version of \ensuremath{\Varid{const}}:
\begin{hscode}\SaveRestoreHook
\column{B}{@{}>{\hspre}l<{\hspost}@{}}%
\column{E}{@{}>{\hspre}l<{\hspost}@{}}%
\>[B]{}\Varid{const}\mathbin{::}\Varid{a}⊸\Varid{b}\to \Varid{a}{}\<[E]%
\ColumnHook
\end{hscode}\resethooks
This function uses its first argument linearly, and ignores the second. Thus,
the the second arrow is unrestricted.
One way to improperly use the linear \ensuremath{\Varid{const}} is by neglecting a linear variable:
\begin{hscode}\SaveRestoreHook
\column{B}{@{}>{\hspre}l<{\hspost}@{}}%
\column{E}{@{}>{\hspre}l<{\hspost}@{}}%
\>[B]{}\Varid{neglecting}\mathbin{::}\constraintfont{\Conid{C}}\Lolly \Conid{Int}{}\<[E]%
\\
\>[B]{}\Varid{neglecting}\mathrel{=}\Varid{const}\;\mathrm{10}\;\Varid{useC}{}\<[E]%
\ColumnHook
\end{hscode}\resethooks
The problem with \ensuremath{\Varid{neglecting}} is that, although \ensuremath{\Varid{useC}} is mentioned in this program,
it is never consumed: \ensuremath{\Varid{const}} does not use its second argument.
The constraint \ensuremath{\constraintfont{\Conid{C}}} is not consumed exactly once, and
thus this program is rejected.
The rule is that a linear constraint can only be consumed (linearly)
in a linear context. For example,
\begin{hscode}\SaveRestoreHook
\column{B}{@{}>{\hspre}l<{\hspost}@{}}%
\column{E}{@{}>{\hspre}l<{\hspost}@{}}%
\>[B]{}\Varid{notNeglecting}\mathbin{::}\constraintfont{\Conid{C}}\Lolly \Conid{Int}{}\<[E]%
\\
\>[B]{}\Varid{notNeglecting}\mathrel{=}\Varid{const}\;\Varid{useC}\;\mathrm{10}{}\<[E]%
\ColumnHook
\end{hscode}\resethooks
is accepted, because the \ensuremath{\constraintfont{\Conid{C}}} constraint is passed on to \ensuremath{\Varid{useC}} which itself
appears as an argument to a linear function (whose result is itself consumed linearly).

\subsubsection{Overusing}
\label{sec:overusing}

Finally, the following program is rejected because it uses \ensuremath{\Conid{C}} twice:
\begin{hscode}\SaveRestoreHook
\column{B}{@{}>{\hspre}l<{\hspost}@{}}%
\column{E}{@{}>{\hspre}l<{\hspost}@{}}%
\>[B]{}\Varid{overusing}\mathbin{::}\constraintfont{\Conid{C}}\Lolly (\Conid{Int},\Conid{Int}){}\<[E]%
\\
\>[B]{}\Varid{overusing}\mathrel{=}(\Varid{useC},\Varid{useC}){}\<[E]%
\ColumnHook
\end{hscode}\resethooks

\subsection{Restricting to a linear context with \ensuremath{\constraintfont{\constraintfont{\Conid{Linearly}}}}}
\label{sec:Unique-constraint}
A linear function makes a promise about how it is going to use its argument, but
linearity imposes no restrictions on how a function -- or its result -- is going to be used.
The caller may use the linear function's result unrestrictedly.  This poses a
challenge for providing a type-safe interface for libraries that rely on having a unique pointer to
some resource, such as safe mutable arrays, because the obvious definition of a
constructor function can immediately be misused, violating the assumption of uniqueness.
\begin{hscode}\SaveRestoreHook
\column{B}{@{}>{\hspre}l<{\hspost}@{}}%
\column{E}{@{}>{\hspre}l<{\hspost}@{}}%
\>[B]{}\Varid{new}\mathbin{::}\Conid{Int}\to \Conid{MArray}\;\Varid{a}{}\<[E]%
\\[\blanklineskip]%
\>[B]{}\Varid{bad}\mathrel{=}\mathbf{let}\;\Varid{arr}\mathrel{=}\Varid{new}\;\mathrm{5}\;\mathbf{in}\;(\Varid{arr},\Varid{arr}){}\<[E]%
\\
\>[B]{}\Varid{badToo}\mathrel{=}\Conid{Ur}\;(\Varid{new}\;\mathrm{5}){}\<[E]%
\ColumnHook
\end{hscode}\resethooks
However, with linear constraints, we can overcome this problem by putting the
special \ensuremath{\constraintfont{\constraintfont{\Conid{Linearly}}}} constraint on \ensuremath{\Varid{new}}:
\begin{hscode}\SaveRestoreHook
\column{B}{@{}>{\hspre}l<{\hspost}@{}}%
\column{E}{@{}>{\hspre}l<{\hspost}@{}}%
\>[B]{}\Varid{new}\mathbin{::}\constraintfont{\constraintfont{\Conid{Linearly}}}\Lolly \Conid{Int}\to \Conid{MArray}\;\Varid{a}{}\<[E]%
\ColumnHook
\end{hscode}\resethooks
Suppose we have assumed the \ensuremath{\constraintfont{\constraintfont{\Conid{Linearly}}}} constraint linearly; that is,
we must use the \ensuremath{\constraintfont{\constraintfont{\Conid{Linearly}}}} assumption precisely once. Now, our definition
for \ensuremath{\Varid{bad}} is rejected: either we infer \ensuremath{\Varid{arr}} to have multiplicity~$  \multiplicityfont{ \omega }  $,
in which case its definition uses \ensuremath{\constraintfont{\constraintfont{\Conid{Linearly}}}} $  \multiplicityfont{ \omega }  $ times; or
we infer \ensuremath{\Varid{arr}} to have multiplicity~$  \multiplicityfont{ \ottsym{1} }  $, in which case its use (twice) violates
the linearity restriction. Likewise, the use of \ensuremath{\Conid{Ur}} in \ensuremath{\Varid{badToo}} requires using
the \ensuremath{\constraintfont{\constraintfont{\Conid{Linearly}}}} assumption $  \multiplicityfont{ \omega }  $ times.

This is promising so far, but several problems remain:

\begin{description}
\item[Duplicating \ensuremath{\constraintfont{\constraintfont{\Conid{Linearly}}}}] What if we want to create multiple
arrays, each of which having a unique pointer? If \ensuremath{\constraintfont{\constraintfont{\Conid{Linearly}}}} is assumed
linearly, then \ensuremath{\mathbf{let}\;\Varid{arr1}\mathrel{=}\Varid{new}\;\mathrm{5};\Varid{arr2}\mathrel{=}\Varid{new}\;\mathrm{6}} will fail, as it uses our
\ensuremath{\constraintfont{\constraintfont{\Conid{Linearly}}}} assumption twice. We thus stipulate that \ensuremath{\constraintfont{\constraintfont{\Conid{Linearly}}}}
must itself be duplicable: from one assumption of \ensuremath{\constraintfont{\constraintfont{\Conid{Linearly}}}}, we
must be able to satisfy any arbitrary fixed number of demands on that constraint.
By ``arbitrary fixed number'', we mean to say that we can duplicate \ensuremath{\constraintfont{\constraintfont{\Conid{Linearly}}}}
a finite number of times, but we may not use an assumption of \ensuremath{\constraintfont{\constraintfont{\Conid{Linearly}}}} with
multiplicity 1 to satisfy \ensuremath{\constraintfont{\constraintfont{\Conid{Linearly}}}} at multiplicity $  \multiplicityfont{ \omega }  $.

\item[Discarding \ensuremath{\constraintfont{\constraintfont{\Conid{Linearly}}}}] Similarly to allowing
  duplication, we must allow discarding, in case a function allocates
  no arrays at all. Accordingly, we allow a linear assumption of
  \ensuremath{\constraintfont{\constraintfont{\Conid{Linearly}}}} to be accepted even if the constraint is never
  used.

\item[Initial assumption of \ensuremath{\constraintfont{\constraintfont{\Conid{Linearly}}}}] For this approach to work,
we must have an assumption of \ensuremath{\constraintfont{\constraintfont{\Conid{Linearly}}}} of multiplicity 1. We can
achieve this via the following primitive:
\begin{hscode}\SaveRestoreHook
\column{B}{@{}>{\hspre}l<{\hspost}@{}}%
\column{E}{@{}>{\hspre}l<{\hspost}@{}}%
\>[B]{}\Varid{linearly}\mathbin{::}(\constraintfont{\Conid{Linearly}}\Lolly \Conid{Ur}\;\Varid{r})⊸\Conid{Ur}\;\Varid{r}{}\<[E]%
\ColumnHook
\end{hscode}\resethooks
The argument to \ensuremath{\Varid{linearly}} will be a continuation that assumes \ensuremath{\constraintfont{\constraintfont{\Conid{Linearly}}}}
with multiplicity 1. Because \ensuremath{\Varid{linearly}} returns an unrestricted value, no restricted
values from the continuation can escape the scope of the \ensuremath{\constraintfont{\constraintfont{\Conid{Linearly}}}}
assumption. Thus, the continuation has exactly the condition we need: a
linear assumption of \ensuremath{\constraintfont{\constraintfont{\Conid{Linearly}}}}.
\end{description}

The pattern of using a continuation in \ensuremath{\Varid{linearly}} mirrors the use
of that technique by \citet[Fig.~2]{LinearHaskell}. But
\ensuremath{\Varid{linearly}} is, now, the only place where we need a continuation: once
we have our linear \ensuremath{\constraintfont{\constraintfont{\Conid{Linearly}}}} assumption, we can use it to produce
new values that must be unique.

With just these simple ingredients -- a duplicable, discardable constraint
that can be assumed linearly -- we can write \textsc{api}s that require
uniqueness without heavy use of continuations.

\section{Application: memory ownership}
\label{sec:memory-ownership}
Let us now turn back to the more substantial example introduced in
\cref{sec:introduction}: manual memory management.  In functional programming
languages like Haskell, memory deallocation is normally the responsibility of a
garbage collector. However, garbage collection is
not always desirable, either due to its (unpredictable) runtime costs, or because
pointers exist between separately-managed memory spaces
(for example when calling foreign functions~\cite{linear-inline-java}).
In either case, one must then
resort to explicit memory allocation and deallocation. This task is
error prone: one can easily forget a deallocation (causing a memory
leak) or deallocate several times (corrupting data). In this section we show how
to build a 
memory management \textsc{api} as a \emph{library} using linear
constraints. The library is a generalisation of the array library
introduced in \cref{sec:introduction}.

\subsection{Capability constraints}
\label{sec:atomic-references}

Our approach, inspired by Rust, is
to represent \emph{ownership} of a memory location, and more specifically,
whether the reference is mutable or read-only.
We use the linear constraints \ensuremath{\constraintfont{\Conid{Read}\;\Varid{n}}} and \ensuremath{\constraintfont{\Conid{Write}\;\Varid{n}}},
guarding read access and write access to a reference respectively.
Because of linearity, these constraints
must be consumed, so the \textsc{api} can guarantee that memory
is deallocated correctly.
In \ensuremath{\constraintfont{\Conid{Read}\;\Varid{n}}}, \ensuremath{\Varid{n}} is a type variable (of a special kind \ensuremath{\Conid{Location}})
which represents a memory location. Locations mediate
the relationship between references and ownership constraints.

\begin{minipage}{0.5\linewidth}\begin{hscode}\SaveRestoreHook
\column{B}{@{}>{\hspre}l<{\hspost}@{}}%
\column{3}{@{}>{\hspre}l<{\hspost}@{}}%
\column{E}{@{}>{\hspre}l<{\hspost}@{}}%
\>[3]{}\mathbf{class}\;\constraintfont{\Conid{Read}\;(\Varid{n}\mathbin{::}\Conid{Location})}{}\<[E]%
\ColumnHook
\end{hscode}\resethooks
\end{minipage}
\begin{minipage}{0.5\linewidth}\begin{hscode}\SaveRestoreHook
\column{B}{@{}>{\hspre}l<{\hspost}@{}}%
\column{3}{@{}>{\hspre}l<{\hspost}@{}}%
\column{E}{@{}>{\hspre}l<{\hspost}@{}}%
\>[3]{}\mathbf{class}\;\constraintfont{\Conid{Write}\;(\Varid{n}\mathbin{::}\Conid{Location})}{}\<[E]%
\ColumnHook
\end{hscode}\resethooks
\end{minipage}
\\
To ensure referential transparency,
writes can be done only when we are sure that no other part of the program has
read access to the reference.
Therefore, writing also requires the read capability. Thus we
systematically use \ensuremath{\constraintfont{\Conid{RW}\;\Varid{n}}}, pairing both the read and write
capabilities.

With these components in place, we can provide an \textsc{api} for mutable
references.
\begin{hscode}\SaveRestoreHook
\column{B}{@{}>{\hspre}l<{\hspost}@{}}%
\column{3}{@{}>{\hspre}l<{\hspost}@{}}%
\column{E}{@{}>{\hspre}l<{\hspost}@{}}%
\>[3]{}\mathbf{data}\;\Conid{AtomRef}\;(\Varid{a}\mathbin{::}\Conid{Type})\;(\Varid{n}\mathbin{::}\Conid{Location}){}\<[E]%
\ColumnHook
\end{hscode}\resethooks
The type \ensuremath{\Conid{AtomRef}} is the type of references to values of a type \ensuremath{\Varid{a}} at
location \ensuremath{\Varid{n}}. Allocation of a reference can be done using the
following function. 
\begin{hscode}\SaveRestoreHook
\column{B}{@{}>{\hspre}l<{\hspost}@{}}%
\column{3}{@{}>{\hspre}l<{\hspost}@{}}%
\column{E}{@{}>{\hspre}l<{\hspost}@{}}%
\>[3]{}\Varid{newRef}\mathbin{::}\constraintfont{\constraintfont{\Conid{Linearly}}}\Lolly \exists\;\Varid{n}.\Conid{AtomRef}\;\Varid{a}\;\Varid{n}\RLolly\constraintfont{\Conid{RW}\;\Varid{n}}{}\<[E]%
\ColumnHook
\end{hscode}\resethooks
The function \ensuremath{\Varid{newRef}} creates a new atomic reference, initialised with
\ensuremath{\bot }; we could also pass in an initial value, but doing so in the more
general case below would add complication and obscure our main goal of demonstrating
linear constraints.

To read a reference, a \ensuremath{\constraintfont{\Conid{Read}}} constraint is demanded, and
then returned back. Writing is similar.
\begin{hscode}\SaveRestoreHook
\column{B}{@{}>{\hspre}l<{\hspost}@{}}%
\column{3}{@{}>{\hspre}l<{\hspost}@{}}%
\column{E}{@{}>{\hspre}l<{\hspost}@{}}%
\>[3]{}\Varid{readRef}\mathbin{::}\constraintfont{\Conid{Read}\;\Varid{n}}\Lolly \Conid{AtomRef}\;\Varid{a}\;\Varid{n}\to \Conid{Ur}\;\Varid{a}\RLolly\constraintfont{\Conid{Read}\;\Varid{n}}{}\<[E]%
\\
\>[3]{}\Varid{writeRef}\mathbin{::}\constraintfont{\Conid{RW}\;\Varid{n}}\Lolly \Conid{AtomRef}\;\Varid{a}\;\Varid{n}\to \Varid{a}\to ()\RLolly\constraintfont{\Conid{RW}\;\Varid{n}}{}\<[E]%
\ColumnHook
\end{hscode}\resethooks
Note that the above primitives do not need to explicitly declare
effects in terms of a monad or another higher-order effect-tracking
device: because the \ensuremath{\constraintfont{\Conid{RW}\;\Varid{n}}} constraint is linear, passing it suffices
to ensure proper sequencing of effects concerning location \ensuremath{\Varid{n}}.

Also note that \ensuremath{\Varid{readRef}} returns an unrestricted \emph{copy} of the element, and
\ensuremath{\Varid{writeRef}} \emph{copies} an unrestricted element into the location. This means
that while \ensuremath{\Conid{AtomRef}}s are mutable, their contents are always immutable structures.




Since there is a unique \ensuremath{\constraintfont{\Conid{RW}\;\Varid{n}}} constraint per reference, we
can also use it to represent ownership of the reference: access to \ensuremath{\constraintfont{\Conid{RW}\;\Varid{n}}} represents responsibility (and obligation) to deallocate \ensuremath{\Varid{n}}:
\begin{hscode}\SaveRestoreHook
\column{B}{@{}>{\hspre}l<{\hspost}@{}}%
\column{3}{@{}>{\hspre}l<{\hspost}@{}}%
\column{E}{@{}>{\hspre}l<{\hspost}@{}}%
\>[3]{}\Varid{freeRef}\mathbin{::}\constraintfont{\Conid{RW}\;\Varid{n}}\Lolly \Conid{AtomRef}\;\Varid{a}\;\Varid{n}\to (){}\<[E]%
\ColumnHook
\end{hscode}\resethooks


\subsection{Arrays}
\label{sec:arrays}

The above toolkit handles references to base types just fine.  But
what about storing references in objects managed by the ownership
system? In \cref{sec:introduction}, we presented an interface for mutable
arrays whose contents are themselves immutable. Our approach
scales beyond that use case, supporting arrays of
references, including arrays of (mutable) arrays.


\begin{hscode}\SaveRestoreHook
\column{B}{@{}>{\hspre}l<{\hspost}@{}}%
\column{3}{@{}>{\hspre}l<{\hspost}@{}}%
\column{E}{@{}>{\hspre}l<{\hspost}@{}}%
\>[3]{}\mathbf{data}\;\Conid{PArray}\;(\Varid{a}\mathbin{::}\Conid{Location}\to \Conid{Type})\;(\Varid{n}\mathbin{::}\Conid{Location}){}\<[E]%
\\
\>[3]{}\Varid{newPArray}\mathbin{::}\constraintfont{\constraintfont{\Conid{Linearly}}}\Lolly \Conid{Int}\to \exists\;\Varid{n}.\Conid{PArray}\;\Varid{a}\;\Varid{n}\RLolly\constraintfont{\Conid{RW}\;\Varid{n}}{}\<[E]%
\ColumnHook
\end{hscode}\resethooks
For this purpose we introduce the type \ensuremath{\Conid{PArray}\;\Varid{a}\;\Varid{n}}, where the kind of
\ensuremath{\Varid{a}} is \ensuremath{\Conid{Location}\to \Conid{Type}}: this way we can easily enforce that each
reference in the array refers to the same location \ensuremath{\Varid{n}}. Both types
\ensuremath{\Conid{AtomRef}\;\Varid{a}} and \ensuremath{\Conid{PArray}\;\Varid{a}} have kind \ensuremath{\Conid{Location}\to \Conid{Type}}, and therefore
one can allocate, and manipulate arrays of arrays with this
\textsc{api}. For example, an array of integers has type
\ensuremath{\Conid{PArray}\;(\Conid{AtomRef}\;\Conid{Int})\;\Varid{n}}, and indeed, the \ensuremath{\Conid{UArray}} type from
\cref{sec:introduction} is a synonym for an array of atomic references.
An array of arrays of integers would has type \ensuremath{\Conid{PArray}\;(\Conid{PArray}\;(\Conid{AtomRef}\;\Conid{Int}))\;\Varid{n}}. Thus,
the framework handles nested mutable structures without any additional
difficulty. 

The actual runtime value of a \ensuremath{\Conid{PArray}} is a pointer to a contiguous block of
memory together with the size of the memory block. This means that the length of the
array can be accessed without having ownership of the array: \ensuremath{\Varid{length}\mathbin{::}\Conid{PArray}\;\Varid{a}\;\Varid{n}\to \Conid{Int}}.
While the \ensuremath{\Conid{PArray}} reference itself is managed by the garbage collector, the
pointer it contains points to manually managed memory.







\subsubsection{Borrowing}

The \ensuremath{\Varid{lendMut}\;\Varid{arr}\;\Varid{i}\;\Varid{k}} primitive lends access to the reference at index \ensuremath{\Varid{i}} in
\ensuremath{\Varid{arr}}, to a continuation function \ensuremath{\Varid{k}} (in Rust terminology, the function
\emph{borrows} an element of the array). Note that the continuation must return the
read-write capability, so that the ownership transfer is indeed temporary. The
type system guarantees that the borrowed reference cannot be shared or deallocated.
%
Indeed, with this \textsc{api}, \ensuremath{\constraintfont{\Conid{RW}\;\Varid{n}}} and \ensuremath{\constraintfont{\Conid{RW}\;\Varid{p}}} are
never simultaneously available.
\begin{hscode}\SaveRestoreHook
\column{B}{@{}>{\hspre}l<{\hspost}@{}}%
\column{3}{@{}>{\hspre}l<{\hspost}@{}}%
\column{12}{@{}>{\hspre}l<{\hspost}@{}}%
\column{E}{@{}>{\hspre}l<{\hspost}@{}}%
\>[3]{}\Varid{lendMut}{}\<[12]%
\>[12]{}\mathbin{::}\constraintfont{\Conid{RW}\;\Varid{n}}\Lolly \Conid{PArray}\;\Varid{a}\;\Varid{n}\to \Conid{Int}\to (\forall\;\Varid{p}.\,\constraintfont{\Conid{RW}\;\Varid{p}}\Lolly \Varid{a}\;\Varid{p}\to \Varid{r}\RLolly\constraintfont{\Conid{RW}\;\Varid{p}})\mathbin{⊸}\Varid{r}\RLolly\constraintfont{\Conid{RW}\;\Varid{n}}{}\<[E]%
\ColumnHook
\end{hscode}\resethooks
Because the elements of an array can be mutable structures (such as
other arrays), reading can be done safely only if we can ensure that
no one else has access to the array while the element is accessed. Otherwise,
the array -- including the element being read -- could be mutated.
Therefore,
gaining simple read access to an element needs to be done using a
scoped \textsc{api} as well:
\begin{hscode}\SaveRestoreHook
\column{B}{@{}>{\hspre}l<{\hspost}@{}}%
\column{3}{@{}>{\hspre}l<{\hspost}@{}}%
\column{9}{@{}>{\hspre}c<{\hspost}@{}}%
\column{9E}{@{}l@{}}%
\column{13}{@{}>{\hspre}l<{\hspost}@{}}%
\column{59}{@{}>{\hspre}l<{\hspost}@{}}%
\column{E}{@{}>{\hspre}l<{\hspost}@{}}%
\>[3]{}\Varid{lend}{}\<[9]%
\>[9]{}\mathbin{::}{}\<[9E]%
\>[13]{}\constraintfont{\Conid{Read}\;\Varid{n}}\Lolly \Conid{PArray}\;\Varid{a}\;\Varid{n}\to \Conid{Int}\to {}\<[59]%
\>[59]{}(\forall\;\Varid{p}.\,\constraintfont{\Conid{Read}\;\Varid{p}}\Lolly \Varid{a}\;\Varid{p}\to \Varid{r}\RLolly\constraintfont{\Conid{Read}\;\Varid{p}})\mathbin{⊸}\Varid{r}\RLolly\constraintfont{\Conid{Read}\;\Varid{n}}{}\<[E]%
\ColumnHook
\end{hscode}\resethooks
For the special case of \ensuremath{\Conid{UArray}}s, a more traditional reading
operation can be implemented, by lending the reference to \ensuremath{\Varid{readRef}}
which creates an unrestricted \emph{copy} of the value. This copy is
under control of the garbage collector, and can escape the scope of
the borrowing freely.
\begin{hscode}\SaveRestoreHook
\column{B}{@{}>{\hspre}l<{\hspost}@{}}%
\column{3}{@{}>{\hspre}l<{\hspost}@{}}%
\column{E}{@{}>{\hspre}l<{\hspost}@{}}%
\>[3]{}\Varid{read}\mathbin{::}\constraintfont{\Conid{Read}\;\Varid{n}}\Lolly \Conid{UArray}\;\Varid{a}\;\Varid{n}\to \Conid{Int}\to \Conid{Ur}\;\Varid{a}\RLolly\constraintfont{\Conid{Read}\;\Varid{n}}{}\<[E]%
\\
\>[3]{}\Varid{read}\;\Varid{arr}\;\Varid{i}\mathrel{=}\Varid{lend}\;\Varid{arr}\;\Varid{i}\;\Varid{readRef}{}\<[E]%
\ColumnHook
\end{hscode}\resethooks







\subsubsection{Slices}

It is also possible to give a safe interface to array
\emph{slices}. A slice represents a part of an array and allows
splitting the ownership of the array into multiple parts,
shared between different consumers.  The ownership system
means that slicing does not require copying.

Splitting consumes all capabilities of an array and returns two new
arrays that represent the contiguous blocks of memory before and
starting at a given index.
\begin{hscode}\SaveRestoreHook
\column{B}{@{}>{\hspre}l<{\hspost}@{}}%
\column{3}{@{}>{\hspre}l<{\hspost}@{}}%
\column{E}{@{}>{\hspre}l<{\hspost}@{}}%
\>[3]{}\Varid{split}\mathbin{::}\constraintfont{\Conid{RW}\;\Varid{n}}\Lolly \Conid{PArray}\;\Varid{a}\;\Varid{n}\to \Conid{Int}\to \exists\;\Varid{l}\;\Varid{r}.\Conid{Ur}\;(\Conid{PArray}\;\Varid{a}\;\Varid{l},\Conid{PArray}\;\Varid{a}\;\Varid{r})\RLolly\constraintfont{(\Conid{RW}\;\Varid{l},\Conid{RW}\;\Varid{r},\Conid{Slices}\;\Varid{n}\;\Varid{l}\;\Varid{r})}{}\<[E]%
\ColumnHook
\end{hscode}\resethooks
In addition to the array capabilities, the output constraints also include
\ensuremath{\constraintfont{\Conid{Slices}\;\Varid{n}\;\Varid{l}\;\Varid{r}}}, witnessing the fact that locations
\ensuremath{\Varid{l}} and \ensuremath{\Varid{r}} are components of \ensuremath{\Varid{n}}, so that they can be joined back
together:
\begin{hscode}\SaveRestoreHook
\column{B}{@{}>{\hspre}l<{\hspost}@{}}%
\column{3}{@{}>{\hspre}l<{\hspost}@{}}%
\column{E}{@{}>{\hspre}l<{\hspost}@{}}%
\>[3]{}\Varid{join}\mathbin{::}\constraintfont{(\Conid{Slices}\;\Varid{n}\;\Varid{l}\;\Varid{r},\Conid{RW}\;\Varid{r},\Conid{RW}\;\Varid{l})}\Lolly \Conid{PArray}\;\Varid{a}\;\Varid{l}\to \Conid{PArray}\;\Varid{a}\;\Varid{r}\to \Conid{Ur}\;(\Conid{PArray}\;\Varid{a}\;\Varid{n})\RLolly\constraintfont{\Conid{RW}\;\Varid{n}}{}\<[E]%
\ColumnHook
\end{hscode}\resethooks
\begin{figure}
\maybesmall
\begin{hscode}\SaveRestoreHook
\column{B}{@{}>{\hspre}l<{\hspost}@{}}%
\column{15}{@{}>{\hspre}l<{\hspost}@{}}%
\column{26}{@{}>{\hspre}c<{\hspost}@{}}%
\column{26E}{@{}l@{}}%
\column{29}{@{}>{\hspre}l<{\hspost}@{}}%
\column{35}{@{}>{\hspre}l<{\hspost}@{}}%
\column{53}{@{}>{\hspre}c<{\hspost}@{}}%
\column{53E}{@{}l@{}}%
\column{56}{@{}>{\hspre}l<{\hspost}@{}}%
\column{58}{@{}>{\hspre}l<{\hspost}@{}}%
\column{60}{@{}>{\hspre}l<{\hspost}@{}}%
\column{65}{@{}>{\hspre}l<{\hspost}@{}}%
\column{84}{@{}>{\hspre}c<{\hspost}@{}}%
\column{84E}{@{}l@{}}%
\column{87}{@{}>{\hspre}l<{\hspost}@{}}%
\column{E}{@{}>{\hspre}l<{\hspost}@{}}%
\>[B]{}\Varid{swap}\mathbin{::}\constraintfont{\Conid{RW}\;\Varid{n}}\Lolly \Conid{PArray}\;\Conid{AtomRef}\;\Varid{n}\to \Conid{Int}\to \Conid{Int}\to ()\RLolly\constraintfont{\Conid{RW}\;\Varid{n}}{}\<[E]%
\\
\>[B]{}\Varid{swap}\;\Varid{arr}\;\Varid{i}\;\Varid{j}{}\<[15]%
\>[15]{}\mid \Varid{i}\equiv \Varid{j}{}\<[26]%
\>[26]{}\mathrel{=}{}\<[26E]%
\>[29]{}\packbox(){}\<[E]%
\\
\>[15]{}\mid \Varid{i}\mathbin{>}\Varid{j}{}\<[26]%
\>[26]{}\mathrel{=}{}\<[26E]%
\>[29]{}\Varid{swap}\;\Varid{arr}\;\Varid{j}\;\Varid{i}{}\<[E]%
\\
\>[15]{}\mid \Varid{i}\mathbin{<}\Varid{j}{}\<[26]%
\>[26]{}\mathrel{=}{}\<[26E]%
\>[29]{}\mathbf{let}\;{}\<[35]%
\>[35]{}\packbox(\Conid{Ur}\;(\Varid{l},\Varid{r})){}\<[53]%
\>[53]{}\mathrel{=}{}\<[53E]%
\>[56]{}\Varid{split}\;\Varid{arr}\;(\Varid{i}\mathbin{+}\mathrm{1}){}\<[E]%
\\
\>[35]{}\packbox(){}\<[53]%
\>[53]{}\mathrel{=}{}\<[53E]%
\>[56]{}\Varid{lendMut}\;\Varid{l}\;\Varid{i}\;(\lambda \Varid a \ensuremath{_i}\to {}\<[E]%
\\
\>[56]{}\hsindent{2}{}\<[58]%
\>[58]{}\mathbf{let}\;\packbox()\mathrel{=}\Varid{lendMut}\;\Varid{r}\;(\Varid{j}\mathbin{-}(\Varid{i}\mathbin{+}\mathrm{1}))\;(\lambda \Varid a \ensuremath{_j}\to {}\<[E]%
\\
\>[58]{}\hsindent{2}{}\<[60]%
\>[60]{}\mathbf{let}\;{}\<[65]%
\>[65]{}\packbox(\Conid{Ur}\;\Varid a \ensuremath{_i}\_val){}\<[84]%
\>[84]{}\mathrel{=}{}\<[84E]%
\>[87]{}\Varid{readRef}\;\Varid a \ensuremath{_i}{}\<[E]%
\\
\>[65]{}\packbox(\Conid{Ur}\;\Varid a \ensuremath{_j}\_val){}\<[84]%
\>[84]{}\mathrel{=}{}\<[84E]%
\>[87]{}\Varid{readRef}\;\Varid a \ensuremath{_j}{}\<[E]%
\\
\>[65]{}\packbox(){}\<[84]%
\>[84]{}\mathrel{=}{}\<[84E]%
\>[87]{}\Varid{writeRef}\;\Varid a \ensuremath{_j}\;\Varid a \ensuremath{_i}\_val{}\<[E]%
\\
\>[65]{}\packbox(){}\<[84]%
\>[84]{}\mathrel{=}{}\<[84E]%
\>[87]{}\Varid{writeRef}\;\Varid a \ensuremath{_i}\;\Varid a \ensuremath{_j}\_val{}\<[E]%
\\
\>[58]{}\hsindent{2}{}\<[60]%
\>[60]{}\mathbf{in}\;\packbox())\;\mathbf{in}\;\packbox()){}\<[E]%
\\
\>[35]{}\packbox(\Conid{Ur}\;\anonymous ){}\<[53]%
\>[53]{}\mathrel{=}{}\<[53E]%
\>[56]{}\Varid{join}\;\Varid{l}\;\Varid{r}{}\<[E]%
\\
\>[29]{}\mathbf{in}\;\packbox(){}\<[E]%
\ColumnHook
\end{hscode}\resethooks
\caption{Swapping two elements of an array}
\label{fig:swap}
\end{figure}

With these building blocks, we can now implement various utility functions on
arrays, such as swapping two elements of an array, which is shown in
\Cref{fig:swap}.  It is
not so simple to implement\footnote{Indeed, Rust's implementation uses
an \emph{unsafe} block.}, because we need two elements of an array
simultaneously, but only one element can be borrowed at a time. To
solve this problem, we split the array into two slices such that the
two indices fall in two different slices. Then simply borrow the
element \ensuremath{\Varid{i}} from the first slice, and \ensuremath{\Varid{j}} from the second slice (using
\ensuremath{\Varid{lendMut}}). Finally, we join the two slices back together.

\subsubsection{In-place quicksort}

As an example of using the machinery defined above, we implement an in-place, pure
quicksort algorithm, given in \Cref{fig:quicksort}.
\begin{figure}
\maybesmall
\noindent
\begin{minipage}[t]{0.45\linewidth}
\begin{hscode}\SaveRestoreHook
\column{B}{@{}>{\hspre}l<{\hspost}@{}}%
\column{3}{@{}>{\hspre}l<{\hspost}@{}}%
\column{10}{@{}>{\hspre}l<{\hspost}@{}}%
\column{15}{@{}>{\hspre}l<{\hspost}@{}}%
\column{35}{@{}>{\hspre}c<{\hspost}@{}}%
\column{35E}{@{}l@{}}%
\column{38}{@{}>{\hspre}l<{\hspost}@{}}%
\column{E}{@{}>{\hspre}l<{\hspost}@{}}%
\>[B]{}\Varid{sort}\mathbin{::}\constraintfont{\Conid{RW}\;\Varid{n}}\Lolly \Conid{UArray}\;\Conid{Int}\;\Varid{n}\to ()\RLolly\constraintfont{\Conid{RW}\;\Varid{n}}{}\<[E]%
\\
\>[B]{}\Varid{sort}\;\Varid{arr}\mathrel{=}\mathbf{let}\;\Varid{len}\mathrel{=}\Varid{length}\;\Varid{arr}\;\mathbf{in}{}\<[E]%
\\
\>[B]{}\hsindent{3}{}\<[3]%
\>[3]{}\mathbf{if}\;\Varid{len}\leq \mathrm{1}\;\mathbf{then}\;\packbox(){}\<[E]%
\\
\>[B]{}\hsindent{3}{}\<[3]%
\>[3]{}\mathbf{else}\;{}\<[10]%
\>[10]{}\mathbf{let}\;{}\<[15]%
\>[15]{}\packbox\Varid{pivotIdx}{}\<[35]%
\>[35]{}\mathrel{=}{}\<[35E]%
\>[38]{}\Varid{partition}\;\Varid{arr}{}\<[E]%
\\
\>[15]{}\packbox(\Conid{Ur}\;(\Varid{l},\Varid{r})){}\<[35]%
\>[35]{}\mathrel{=}{}\<[35E]%
\>[38]{}\Varid{split}\;\Varid{arr}\;\Varid{pivotIdx}{}\<[E]%
\\
\>[15]{}\packbox(){}\<[35]%
\>[35]{}\mathrel{=}{}\<[35E]%
\>[38]{}\Varid{sort}\;\Varid{l}{}\<[E]%
\\
\>[15]{}\packbox(){}\<[35]%
\>[35]{}\mathrel{=}{}\<[35E]%
\>[38]{}\Varid{sort}\;\Varid{r}{}\<[E]%
\\
\>[15]{}\packbox(\Conid{Ur}\;\anonymous ){}\<[35]%
\>[35]{}\mathrel{=}{}\<[35E]%
\>[38]{}\Varid{join}\;\Varid{l}\;\Varid{r}{}\<[E]%
\\
\>[10]{}\mathbf{in}\;\packbox(){}\<[E]%
\ColumnHook
\end{hscode}\resethooks
\end{minipage}
\begin{minipage}[t]{0.4\linewidth}
\begin{hscode}\SaveRestoreHook
\column{B}{@{}>{\hspre}l<{\hspost}@{}}%
\column{3}{@{}>{\hspre}l<{\hspost}@{}}%
\column{8}{@{}>{\hspre}l<{\hspost}@{}}%
\column{10}{@{}>{\hspre}c<{\hspost}@{}}%
\column{10E}{@{}l@{}}%
\column{13}{@{}>{\hspre}l<{\hspost}@{}}%
\column{19}{@{}>{\hspre}l<{\hspost}@{}}%
\column{25}{@{}>{\hspre}c<{\hspost}@{}}%
\column{25E}{@{}l@{}}%
\column{28}{@{}>{\hspre}l<{\hspost}@{}}%
\column{E}{@{}>{\hspre}l<{\hspost}@{}}%
\>[B]{}\Varid{partition}\mathbin{::}\constraintfont{\Conid{RW}\;\Varid{n}}\Lolly \Conid{UArray}\;\Conid{Int}\;\Varid{n}\to \Conid{Int}\RLolly\constraintfont{\Conid{RW}\;\Varid{n}}{}\<[E]%
\\
\>[B]{}\Varid{partition}\;\Varid{arr}\mathrel{=}{}\<[E]%
\\
\>[B]{}\hsindent{3}{}\<[3]%
\>[3]{}\mathbf{let}\;{}\<[8]%
\>[8]{}\Varid{last}{}\<[25]%
\>[25]{}\mathrel{=}{}\<[25E]%
\>[28]{}\Varid{length}\;\Varid{arr}\mathbin{-}\mathrm{1}{}\<[E]%
\\
\>[8]{}\packbox(\Conid{Ur}\;\Varid{pivot}){}\<[25]%
\>[25]{}\mathrel{=}{}\<[25E]%
\>[28]{}\Varid{read}\;\Varid{arr}\;\Varid{last}{}\<[E]%
\\
\>[8]{}\Varid{go}\mathbin{::}\constraintfont{\Conid{RW}\;\Varid{n}}\Lolly \Conid{Int}\to \Conid{Int}\to \Conid{Int}\RLolly\constraintfont{\Conid{RW}\;\Varid{n}}{}\<[E]%
\\
\>[8]{}\Varid{go}\;\Varid{l}\;\Varid{r}{}\<[E]%
\\
\>[8]{}\hsindent{2}{}\<[10]%
\>[10]{}\mid {}\<[10E]%
\>[13]{}\Varid{l}\mathbin{>}\Varid{r}{}\<[E]%
\\
\>[8]{}\hsindent{2}{}\<[10]%
\>[10]{}\mathrel{=}{}\<[10E]%
\>[13]{}\mathbf{let}\;\packbox()\mathrel{=}\Varid{swap}\;\Varid{arr}\;\Varid{last}\;\Varid{l}\;\mathbf{in}\;\packbox\Varid{l}{}\<[E]%
\\
\>[8]{}\hsindent{2}{}\<[10]%
\>[10]{}\mid {}\<[10E]%
\>[13]{}\Varid{otherwise}{}\<[E]%
\\
\>[8]{}\hsindent{2}{}\<[10]%
\>[10]{}\mathrel{=}{}\<[10E]%
\>[13]{}\mathbf{let}\;\packbox(\Conid{Ur}\;\Varid{lVal})\mathrel{=}\Varid{read}\;\Varid{arr}\;\Varid{l}\;\mathbf{in}{}\<[E]%
\\
\>[13]{}\mathbf{if}\;\Varid{lVal}\mathbin{>}\Varid{pivot}{}\<[E]%
\\
\>[13]{}\mathbf{then}\;{}\<[19]%
\>[19]{}\mathbf{let}\;\packbox()\mathrel{=}\Varid{swap}\;\Varid{arr}\;\Varid{l}\;\Varid{r}{}\<[E]%
\\
\>[19]{}\mathbf{in}\;\Varid{go}\;\Varid{l}\;(\Varid{r}\mathbin{-}\mathrm{1}){}\<[E]%
\\
\>[13]{}\mathbf{else}\;{}\<[19]%
\>[19]{}\Varid{go}\;(\Varid{l}\mathbin{+}\mathrm{1})\;\Varid{r}{}\<[E]%
\\
\>[B]{}\hsindent{3}{}\<[3]%
\>[3]{}\mathbf{in}\;\Varid{go}\;\mathrm{0}\;(\Varid{last}\mathbin{-}\mathrm{1}){}\<[E]%
\ColumnHook
\end{hscode}\resethooks
\end{minipage}
\caption{In-place quicksort}
\label{fig:quicksort}
\end{figure}
The \ensuremath{\Varid{partition}} function is responsible for picking a pivot element and
reorganising the array elements such that each element preceding the pivot will
be less than or equal to it, and the elements after will be greater than the
pivot. Once finished, it returns the index of the pivot element; \ensuremath{\Varid{sort}} then
splits the array at the pivot element and recursively operates on the two
slices.

\section{A qualified type system for linear constraints}
\label{sec:qualified-type-system}

We
now present our design for a qualified type system~\cite{QualifiedTypes} that supports
linear constraints. Our design, based on the work of \citet{OutsideIn}, is compatible with Haskell and \textsc{ghc}.

\subsection{Simple Constraints and Entailment}
\label{sec:constraint-domain}

We call constraints such as
\ensuremath{\constraintfont{\Conid{Read}\;\Varid{n}}} or \ensuremath{\constraintfont{\Conid{Write}\;\Varid{n}}} \emph{atomic
  constraints}. The set of atomic constraints is a parameter of our
qualified type system.

\begin{definition}[Atomic constraints]
  The qualified type system is parameterised by a set, whose elements
  are called \emph{atomic constraints}. We use the variable $  \constraintfont{ \ottnt{q} }  $
  to denote atomic constraints.
\end{definition}

Atomic constraints are assembled into \emph{simple constraints}
$  \constraintfont{ \ottnt{Q} }  $, which play the hybrid role of constraint contexts and
(linear) logic formulae.
The following operations work with simple constraints:
\begin{description}
\item[Scaled atomic constraints] $  \constraintfont{   \multiplicityfont{ \pi }  \scale \ottnt{q}  }  $ is a simple constraint,
  where $  \multiplicityfont{ \pi }  $ specifies whether $  \constraintfont{ \ottnt{q} }  $ is to be used linearly
  or not.
\item[Conjunction] Two simple constraints can be paired up
  $  \constraintfont{ \ottnt{Q_{{\mathrm{1}}}}  \qtensor  \ottnt{Q_{{\mathrm{2}}}} }  $. Semantically, this corresponds to the multiplicative
  conjunction of linear logic. Tensor products represent pairs of
  constraints such as \ensuremath{\constraintfont{(\Conid{Read}\;\Varid{n},\Conid{Write}\;\Varid{n})}} from Haskell.
\item[Empty conjunction] Finally we need a neutral element
  $  \constraintfont{  \mathbf{\varepsilon}  }  $ to the tensor product. The empty conjunction is used
  to represent functions which don't require any constraints.
\end{description}
However, we do not define $  \constraintfont{ \ottnt{Q} }  $ inductively, because we
require certain equalities to hold:

\smallskip
\begin{minipage}[c]{0.5\linewidth}
$$
\begin{array}{rcl}
    \constraintfont{ \ottnt{Q_{{\mathrm{1}}}}  \qtensor  \ottnt{Q_{{\mathrm{2}}}} }   & = &   \constraintfont{ \ottnt{Q_{{\mathrm{2}}}}  \qtensor  \ottnt{Q_{{\mathrm{1}}}} }   \\
    \constraintfont{ \ottsym{(}  \ottnt{Q_{{\mathrm{1}}}}  \qtensor  \ottnt{Q_{{\mathrm{2}}}}  \ottsym{)}  \qtensor  \ottnt{Q_{{\mathrm{3}}}} }   & = &   \constraintfont{ \ottnt{Q_{{\mathrm{1}}}}  \qtensor  \ottsym{(}  \ottnt{Q_{{\mathrm{2}}}}  \qtensor  \ottnt{Q_{{\mathrm{3}}}}  \ottsym{)} }  
\end{array}
$$
\end{minipage}
\begin{minipage}[c]{0.4\linewidth}
$$
\begin{array}{rcl}
    \constraintfont{   \multiplicityfont{ \omega }  \scale \ottnt{q}   \qtensor    \multiplicityfont{ \omega }  \scale \ottnt{q}  }   & = &   \constraintfont{   \multiplicityfont{ \omega }  \scale \ottnt{q}  }   \\
    \constraintfont{ \ottnt{Q}  \qtensor   \mathbf{\varepsilon}  }   & = &   \constraintfont{ \ottnt{Q} }  
\end{array}
$$
\end{minipage}
\smallskip

We thus say that a simple constraint is a pair combining a set of unrestricted
constraints $  \constraintfont{ \ottnt{U} }  $ and a multiset of linear constraints $  \constraintfont{ \ottnt{L} }  $.
The linear constraints must
be stored in a multiset, because assuming the same constraint twice is distinct
from assuming it only once.

\begin{definition}[Simple constraints]
$$
\begin{array}{rcll}
    \constraintfont{ \ottnt{U} }   & \bnfeq & \ldots & \text{set of atomic constraints $  \constraintfont{ \ottnt{q} }  $} \\
    \constraintfont{ \ottnt{L} }   & \bnfeq & \ldots & \text{multiset of atomic constraints $  \constraintfont{ \ottnt{q} }  $} \\
    \constraintfont{ \ottnt{Q} }   & \bnfeq &   \constraintfont{ \ottsym{(}  \ottnt{U}  \ottsym{,}  \ottnt{L}  \ottsym{)} }   & \text{simple constraints}
\end{array}
$$
We can now straightforwardly define the operations we need on simple
constraints:
$$
\begin{array}{ccc}
  \begin{array}{r@{\;}c@{\;}l}
      \constraintfont{  \mathbf{\varepsilon}  }   &=&   \constraintfont{ \ottsym{(}  \emptyset  \ottsym{,}  \emptyset  \ottsym{)} }  
  \end{array}
  &
    \left\{
    \begin{array}{r@{\;}c@{\;}l}
        \constraintfont{   \multiplicityfont{ \ottsym{1} }  \scale \ottnt{q}  }   &=&   \constraintfont{ \ottsym{(}  \emptyset  \ottsym{,}  \ottnt{q}  \ottsym{)} }   \\
        \constraintfont{   \multiplicityfont{ \omega }  \scale \ottnt{q}  }   &=&   \constraintfont{ \ottsym{(}  \ottnt{q}  \ottsym{,}  \emptyset  \ottsym{)} }  
    \end{array}
                        \right.
  &
    \begin{array}{r@{\;}c@{\;}l}
        \constraintfont{ \ottsym{(}  \ottnt{U}_{{\mathrm{1}}}  \ottsym{,}  \ottnt{L}_{{\mathrm{1}}}  \ottsym{)}  \qtensor  \ottsym{(}  \ottnt{U}_{{\mathrm{2}}}  \ottsym{,}  \ottnt{L}_{{\mathrm{2}}}  \ottsym{)} }   &=&   \constraintfont{ \ottsym{(}   \ottnt{U}_{{\mathrm{1}}}  \cup  \ottnt{U}_{{\mathrm{2}}}   \ottsym{,}   \ottnt{L}_{{\mathrm{1}}}  \uplus  \ottnt{L}_{{\mathrm{2}}}   \ottsym{)} }  
    \end{array}
\end{array}
$$
\end{definition}

In practice, we do not need to concern ourselves with the
concrete representation of $  \constraintfont{ \ottnt{Q} }  $ as a pair of sets, instead
using the operations defined just above.

The semantics of simple constraints (and, indeed, of atomic
constraints) is given by an \emph{entailment relation}. Just like the
set of atomic constraints, the entailment relation is a parameter of
our system.

\begin{figure}
  \maybesmall
  \begin{enumerate}
  \item $ \constraintfont{ \ottnt{Q} }   \Vdash   \constraintfont{ \ottnt{Q} } $.
  \item If $ \constraintfont{ \ottnt{Q_{{\mathrm{1}}}} }   \Vdash   \constraintfont{ \ottnt{Q_{{\mathrm{2}}}} } $  and $ \constraintfont{ \ottnt{Q}  \qtensor  \ottnt{Q_{{\mathrm{2}}}} }   \Vdash   \constraintfont{ \ottnt{Q_{{\mathrm{3}}}} } $, then $ \constraintfont{ \ottnt{Q}  \qtensor  \ottnt{Q_{{\mathrm{1}}}} }   \Vdash   \constraintfont{ \ottnt{Q_{{\mathrm{3}}}} } $.
  \item If $ \constraintfont{ \ottnt{Q} }   \Vdash   \constraintfont{ \ottnt{Q_{{\mathrm{1}}}}  \qtensor  \ottnt{Q_{{\mathrm{2}}}} } $, then there exist $  \constraintfont{ \ottnt{Q'} }  $, $  \constraintfont{ Q_{\mathcal{D} } }  $, and $  \constraintfont{ \ottnt{Q''} }  $
        such that:
        $$
           \constraintfont{ Q_{\mathcal{D} } }   \in  \constraintfont{\mathcal{D} } , \quad   \constraintfont{ \ottnt{Q} }  =  \constraintfont{ \ottnt{Q'}  \qtensor  Q_{\mathcal{D} }  \qtensor  \ottnt{Q''} }  , \quad  \constraintfont{ \ottnt{Q'}  \qtensor  Q_{\mathcal{D} } }   \Vdash   \constraintfont{ \ottnt{Q_{{\mathrm{1}}}} } \text{, and } \quad
         \constraintfont{ Q_{\mathcal{D} }  \qtensor  \ottnt{Q''} }   \Vdash   \constraintfont{ \ottnt{Q_{{\mathrm{2}}}} } .
        $$
  \item If $ \constraintfont{ \ottnt{Q} }   \Vdash   \constraintfont{  \mathbf{\varepsilon}  } $, then $  \constraintfont{ \ottnt{Q} }   \in  \constraintfont{\mathcal{D} } $.
  \item If $ \constraintfont{ \ottnt{Q_{{\mathrm{1}}}} }   \Vdash   \constraintfont{ \ottnt{Q'_{{\mathrm{1}}}} } $ and $ \constraintfont{ \ottnt{Q_{{\mathrm{2}}}} }   \Vdash   \constraintfont{ \ottnt{Q'_{{\mathrm{2}}}} } $, then $ \constraintfont{ \ottnt{Q_{{\mathrm{1}}}}  \qtensor  \ottnt{Q_{{\mathrm{2}}}} }   \Vdash   \constraintfont{ \ottnt{Q'_{{\mathrm{1}}}}  \qtensor  \ottnt{Q'_{{\mathrm{2}}}} } $.
  \item If $ \constraintfont{ \ottnt{Q} }   \Vdash   \constraintfont{   \multiplicityfont{ \rho }  \scale \ottnt{q}  } $, then $ \constraintfont{   \multiplicityfont{ \pi }  \scale \ottnt{Q}  }   \Vdash   \constraintfont{   \multiplicityfont{ \ottsym{(}   \pi {⋅} \rho   \ottsym{)} }  \scale \ottnt{q}  } $.
  \item If $ \constraintfont{ \ottnt{Q} }   \Vdash   \constraintfont{   \multiplicityfont{ \ottsym{(}   \pi {⋅} \rho   \ottsym{)} }  \scale \ottnt{q}  } $, then there exists $  \constraintfont{ \ottnt{Q'} }  $ such that $  \constraintfont{ \ottnt{Q} }   =   \constraintfont{   \multiplicityfont{ \pi }  \scale \ottnt{Q'}  }  $ and $ \constraintfont{ \ottnt{Q'} }   \Vdash   \constraintfont{   \multiplicityfont{ \rho }  \scale \ottnt{q}  } $.
  \item If $ \constraintfont{ \ottnt{Q_{{\mathrm{1}}}} }   \Vdash   \constraintfont{ \ottnt{Q_{{\mathrm{2}}}} } $, then $ \constraintfont{   \multiplicityfont{ \omega }  \scale \ottnt{Q_{{\mathrm{1}}}}  }   \Vdash   \constraintfont{ \ottnt{Q_{{\mathrm{2}}}} } $.
  \item If $ \constraintfont{ \ottnt{Q_{{\mathrm{1}}}} }   \Vdash   \constraintfont{ \ottnt{Q_{{\mathrm{2}}}} } $, then for all $  \constraintfont{ \ottnt{Q'} }  $, it is the case that $ \constraintfont{   \multiplicityfont{ \omega }  \scale \ottnt{Q'}   \qtensor  \ottnt{Q_{{\mathrm{1}}}} }   \Vdash   \constraintfont{ \ottnt{Q_{{\mathrm{2}}}} } $.
  \item If $  \constraintfont{ \ottnt{q} }   \in  \constraintfont{\mathcal{D} } $, then $ \constraintfont{   \multiplicityfont{ \ottsym{1} }  \scale \ottnt{q}  }   \Vdash   \constraintfont{   \multiplicityfont{ \ottsym{1} }  \scale \ottnt{q}   \qtensor    \multiplicityfont{ \ottsym{1} }  \scale \ottnt{q}  } $.
  \item If $  \constraintfont{ \ottnt{q} }   \in  \constraintfont{\mathcal{D} } $, then $ \constraintfont{   \multiplicityfont{ \ottsym{1} }  \scale \ottnt{q}  }   \Vdash   \constraintfont{  \mathbf{\varepsilon}  } $.
  \item If $  \constraintfont{ \ottnt{Q} }   \in  \constraintfont{\mathcal{D} } $ and $ \constraintfont{ \ottnt{Q'} }   \Vdash   \constraintfont{ \ottnt{Q} } $, then $  \constraintfont{ \ottnt{Q'} }   \in  \constraintfont{\mathcal{D} } $.
  \end{enumerate}
\caption{Requirements for the entailment relation $ \constraintfont{ \ottnt{Q_{{\mathrm{1}}}} }   \Vdash   \constraintfont{ \ottnt{Q_{{\mathrm{2}}}} } $}
\label{fig:entailment-relation}
\end{figure}

\begin{definition}[Entailment relation]
  \label{def:entailment-relation}
  The qualified type system is parameterised by a relation
  $ \constraintfont{ \ottnt{Q_{{\mathrm{1}}}} }   \Vdash   \constraintfont{ \ottnt{Q_{{\mathrm{2}}}} } $ between two simple constraints, as well as by a
  distinguished set $ \constraintfont{\mathcal{D} } $ of duplicable atomic constraints.

  We write, abusing notation, $  \constraintfont{ \ottnt{Q} }   \in  \constraintfont{\mathcal{D} } $ for a simple constraint
  $  \constraintfont{ \ottnt{Q} }  =  \constraintfont{ \ottsym{(}  \ottnt{U}  \ottsym{,}  \ottnt{L}  \ottsym{)} }  $ if for all $  \constraintfont{ \ottnt{q} }  \in  \constraintfont{ \ottnt{L} }  $ we
  have $  \constraintfont{ \ottnt{q} }   \in  \constraintfont{\mathcal{D} } $.

  The entailment relation must obey the laws listed in
  \Fref{fig:entailment-relation}.
\end{definition}
\info{See Fig 3, p14 of OutsideIn\cite{OutsideIn}.}
The set $ \constraintfont{\mathcal{D} } $ is a set of constraints which can be duplicated and
discarded (see~\cref{fig:entailment-relation}). We use $ \constraintfont{\mathcal{D} } $ to
model the \ensuremath{\constraintfont{\constraintfont{\Conid{Linearly}}}} constraint. Crucially, it is \emph{not
  the case} that $ \constraintfont{   \multiplicityfont{ \ottsym{1} }  \scale \ottnt{q}  }   \Vdash   \constraintfont{   \multiplicityfont{ \omega }  \scale \ottnt{q}  } $ for $  \constraintfont{ \ottnt{q} }   \in  \constraintfont{\mathcal{D} } $; such an
entailment is, in fact, prohibited (\cref{lem:q:scaling-inversion}) by
the rules of~\cref{fig:entailment-relation}. While it may seem
counter-intuitive, there is nothing in linear logic mandating that a
formula that can be duplicated and discarded be (equivalent to) an
unrestricted formula. This observation has been exploited, for
instance, to introduce so-called
subexponentials~\cite{subexponentials}. For our use case, it lets the
typechecker dispatch \ensuremath{\constraintfont{\constraintfont{\Conid{Linearly}}}} constraints (using
duplication), but prevents the result of constrained functions to be used
unrestrictedly.

An important feature of simple constraints is that, while scaling
syntactically happens at the level of atomic constraints, these properties
of scaling extend to scaling of arbitrary constraints. Define
$  \constraintfont{   \multiplicityfont{ \pi }  \scale \ottnt{Q}  }  $ as:

$$
                      \left\{
                      \begin{array}{r@{\;}c@{\;}l}
                          \constraintfont{   \multiplicityfont{ \ottsym{1} }  \scale \ottsym{(}  \ottnt{U}  \ottsym{,}  \ottnt{L}  \ottsym{)}  }   &=&   \constraintfont{ \ottsym{(}  \ottnt{U}  \ottsym{,}  \ottnt{L}  \ottsym{)} }   \\
                          \constraintfont{   \multiplicityfont{ \omega }  \scale \ottsym{(}  \ottnt{U}  \ottsym{,}  \ottnt{L}  \ottsym{)}  }   &=&   \constraintfont{ \ottsym{(}   \ottnt{U}  \cup  \ottnt{L}   \ottsym{,}  \emptyset  \ottsym{)} }  
                      \end{array}
                                                    \right.
$$
Then the following properties hold
\begin{lemma}[Scaling]
  \label{lem:q:scaling}
  If $ \constraintfont{ \ottnt{Q_{{\mathrm{1}}}} }   \Vdash   \constraintfont{ \ottnt{Q_{{\mathrm{2}}}} } $, then $ \constraintfont{   \multiplicityfont{ \pi }  \scale \ottnt{Q_{{\mathrm{1}}}}  }   \Vdash   \constraintfont{   \multiplicityfont{ \pi }  \scale \ottnt{Q_{{\mathrm{2}}}}  } $.
\end{lemma}

\begin{lemma}[Inversion of scaling]
  \label{lem:q:scaling-inversion}
  If $ \constraintfont{ \ottnt{Q_{{\mathrm{1}}}} }   \Vdash   \constraintfont{   \multiplicityfont{ \pi }  \scale \ottnt{Q_{{\mathrm{2}}}}  } $, then $  \constraintfont{ \ottnt{Q_{{\mathrm{1}}}} }  =  \constraintfont{   \multiplicityfont{ \pi }  \scale \ottnt{Q'}  }  $ and $ \constraintfont{ \ottnt{Q'} }   \Vdash   \constraintfont{ \ottnt{Q_{{\mathrm{2}}}} } $ for some $  \constraintfont{ \ottnt{Q'} }  $.
\end{lemma}

\begin{corollary}[Linear assumptions]
If $ \constraintfont{ \ottnt{Q_{{\mathrm{1}}}} }   \Vdash   \constraintfont{   \multiplicityfont{ \omega }  \scale \ottnt{Q_{{\mathrm{2}}}}  } $, then $  \constraintfont{ \ottnt{Q_{{\mathrm{1}}}} }  $ contains no linear assumptions.
\end{corollary}

Proofs of these lemmas (and others) appear in Appendix~\ref{sec:appendix:proofs-lemmas};
they can be proved by straightforward use of the properties in \Cref{fig:entailment-relation}.

\subsection{Typing rules}
\label{sec:typing-rules}

With this material in place, we can now present our type system. The
grammar is given in \Cref{fig:declarative:grammar}, which also
includes the definitions of scaling on contexts $   \multiplicityfont{ \pi }  \scale \Gamma  $ and addition
of contexts $ \Gamma_{{\mathrm{1}}}  \ottsym{+}  \Gamma_{{\mathrm{2}}} $. Note that addition on contexts is actually
a partial function, as it requires that, if a variable $ \ottmv{x} $ is bound
in both $ \Gamma_{{\mathrm{1}}} $ and $ \Gamma_{{\mathrm{2}}} $, then $ \ottmv{x} $ is assigned the same
type in both (but perhaps different multiplicities). This partiality
is not a problem in practice, as the required condition for combining
contexts is always satisfied.

\begin{figure}
  \maybesmall
  $$
  \begin{array}{@{}c@{}}
   \ottmv{a} ,  \ottmv{b}  \quad \text{Type vars} \qquad
   \ottmv{x} ,  \ottmv{y}  \quad \text{Expression vars} \qquad
  \ottmv{T} \quad \text{Type constructors} \qquad
   \ottmv{K}  \quad \text{Data constructors} \\
  \begin{array}[b]{lcll}
     \sigma  & \bnfeq &   \forall   \overline{\ottmv{a} } .   \constraintfont{ \ottnt{Q} }   \Lolly  \tau   & \text{Type schemes} \\
     \tau ,  \upsilon  & \bnfeq &  \ottmv{a}  \bnfor   \exists   \overline{\ottmv{a} } .  \tau  \RLolly   \constraintfont{ \ottnt{Q} }    \bnfor   \tau_{{\mathrm{1}}}  \to_{  \multiplicityfont{ \pi }  }  \tau_{{\mathrm{2}}}  
                            \bnfor  \ottmv{T} \, \overline{\tau}  & \text{Types} \\
     \Gamma ,  \Delta  & \bnfeq &  ∙  \bnfor  \Gamma  \ottsym{,}   \ottmv{x} {:}_{  \multiplicityfont{ \pi }  } \sigma   &
                                                              \text{Contexts} \\
     \ottnt{e}  & \bnfeq &  \ottmv{x}  \bnfor  \ottmv{K}  \bnfor  \lambda  \ottmv{x}  \ottsym{.}  \ottnt{e}  \bnfor \ottnt{e_{{\mathrm{1}}}} \, \ottnt{e_{{\mathrm{2}}}} \bnfor  \packbox \, \ottnt{e}  & \text{Expressions}\\
                 &\bnfor &  \klet\ \packbox  \ottmv{x}  =  \ottnt{e_{{\mathrm{1}}}}  \  \ottkw{in}  \  \ottnt{e_{{\mathrm{2}}}}  \bnfor   \kcase_  \multiplicityfont{ \pi }   \, \ottnt{e} \, \ottkw{of} \, \ottsym{\{}  \overline{\ottmv{K}_i\ \overline{\ottmv{x}_i } \to \ottnt{e}_i }  \ottsym{\}}  &\\
                 &\bnfor &  \klet_  \multiplicityfont{ \pi }   \, \ottmv{x}  \ottsym{=}  \ottnt{e_{{\mathrm{1}}}} \, \ottkw{in} \, \ottnt{e_{{\mathrm{2}}}} \bnfor   \klet_  \multiplicityfont{ \pi }   \, \ottmv{x}  \ottsym{:}  \sigma  \ottsym{=}  \ottnt{e_{{\mathrm{1}}}} \, \ottkw{in} \, \ottnt{e_{{\mathrm{2}}}}  &
  \end{array} \\[1ex]
  \end{array}
  $$
  Context scaling $   \multiplicityfont{ \pi }  \scale \Gamma  $ and addition of contexts $ \Gamma_{{\mathrm{1}}}  \ottsym{+}  \Gamma_{{\mathrm{2}}} $ is defined as follows:
  $$
  \begin{array}{cc}
  \left\{
  \begin{array}{r@{\;}c@{\;}l}
     \multiplicityfont{ \pi }  \scale ∙   &=&  ∙  \\
     \multiplicityfont{ \pi }  \scale \ottsym{(}  \Gamma  \ottsym{,}   \ottmv{x} {:}_{  \multiplicityfont{ \rho }  } \sigma   \ottsym{)}   &=&    \multiplicityfont{ \pi }  \scale \Gamma   \ottsym{,}   \ottmv{x} {:}_{  \multiplicityfont{ \ottsym{(}   \pi {⋅} \rho   \ottsym{)} }  } \sigma  
  \end{array}
  \right.
  &
  \left\{
  \begin{array}{r@{\;}c@{\;}lll}
   \ottsym{(}  \Gamma_{{\mathrm{1}}}  \ottsym{,}   \ottmv{x} {:}_{  \multiplicityfont{ \pi }  } \sigma   \ottsym{)}  \ottsym{+}  \Gamma_{{\mathrm{2}}}  &=&  \Gamma_{{\mathrm{1}}}  \ottsym{+}  \Gamma'_{{\mathrm{2}}}  \ottsym{,}   \ottmv{x} {:}_{  \multiplicityfont{ \ottsym{(}  \pi  \ottsym{+}  \rho  \ottsym{)} }  } \sigma   & \text{where} & \Gamma_{{\mathrm{2}}}  \ottsym{=}   \ottsym{\{}   \ottmv{x} {:}_{  \multiplicityfont{ \rho }  } \sigma   \ottsym{\}}  \cup  \Gamma'_{{\mathrm{2}}}  \\
  &&&&  \ottmv{x}  \notin  \Gamma'_{{\mathrm{2}}}  \\
   \ottsym{(}  \Gamma_{{\mathrm{1}}}  \ottsym{,}   \ottmv{x} {:}_{  \multiplicityfont{ \pi }  } \sigma   \ottsym{)}  \ottsym{+}  \Gamma_{{\mathrm{2}}}  &=&  \Gamma_{{\mathrm{1}}}  \ottsym{+}  \Gamma_{{\mathrm{2}}}  \ottsym{,}   \ottmv{x} {:}_{  \multiplicityfont{ \pi }  } \sigma   & \text{where} &  \ottmv{x}  \notin  \Gamma_{{\mathrm{2}}}  \\
   ∙  \ottsym{+}  \Gamma_{{\mathrm{2}}}  &=&  \Gamma_{{\mathrm{2}}} 
  \end{array}
  \right.
  \end{array}
  $$
  \caption{Grammar of the qualified type system}
  \label{fig:declarative:grammar}
\end{figure}

\begin{figure}
  \centering
  \drules[E]{$ \constraintfont{ \ottnt{Q} }   \ottsym{;}  \Gamma  \vdash  \ottnt{e}  \ottsym{:}  \tau$}{Expression
    typing}{Var,Abs,App,Pack,Unpack,Let,LetSig,Case,Sub}
  \caption{Qualified type system}
  \label{fig:typing-rules}
\end{figure}

The typing rules are in \Cref{fig:typing-rules}.
A qualified type system~\cite{QualifiedTypes} such as ours introduces a
judgement of the form $ \constraintfont{ \ottnt{Q} }   \ottsym{;}  \Gamma  \vdash  \ottnt{e}  \ottsym{:}  \tau$, where $ \Gamma $ is a standard
type context, and $  \constraintfont{ \ottnt{Q} }  $ is a constraint we have assumed to be true.
$  \constraintfont{ \ottnt{Q} }  $ behaves
much like $ \Gamma $, which will be instrumental for
desugaring in \cref{sec:desugaring}; the main difference is
that $ \Gamma $ is addressed explicitly, whereas $  \constraintfont{ \ottnt{Q} }  $
is used implicitly in \rref{E-Var}.

Because constraints are used implicitly, if there are several instances
of the same $  \constraintfont{   \multiplicityfont{ \ottsym{1} }  \scale \ottnt{q}  }  $, it is non-deterministic which one is used in
which instance of \rref*{E-Var}. As a consequence, we must require
that any two instances of $  \constraintfont{   \multiplicityfont{ \ottsym{1} }  \scale \ottnt{q}  }  $ in a constraint $  \constraintfont{ \ottnt{Q} }  $ have
the same computational content (see \cref{sec:desugaring}). How do we
reconcile this non-determinism with the use of linear constraints, in
\cref{sec:arrays} to thread mutations? We certainly don't want type
inference to non-deterministically reorder a \ensuremath{\Varid{readRef}} and a
\ensuremath{\Varid{writeRef}}! The solution is that the \textsc{api} is arranged so that
only a single instance of \ensuremath{\constraintfont{\Conid{RW}\;\Varid{n}}} is ever
provided. Therefore there is a single possible threading of the reads
and writes. In contrast there will often be several instances of
\ensuremath{\constraintfont{\Conid{Linearly}}} in scope.

The type system of \Cref{fig:typing-rules} is purely
declarative: note, for example, that \rref{E-App} does not describe
how to break the typing assumptions into constraints
$  \constraintfont{ \ottnt{Q_{{\mathrm{1}}}} }  $/$  \constraintfont{ \ottnt{Q_{{\mathrm{2}}}} }  $ and contexts $ \Gamma_{{\mathrm{1}}} $/$ \Gamma_{{\mathrm{2}}} $. We will see
how to infer constraints in \cref{sec:type-inference}. Yet,
this system is our ground truth: a system with a simple enough
definition that programmers can reason about typing. As is standard in
the qualified-type literature (since the original paper~\cite{QualifiedTypes}), we do not
directly give a dynamic
semantics to this language; instead, we will give it meaning via
desugaring to a simpler core language in \cref{sec:desugaring}.

We survey several distinctive features of our qualified type system below:

\info{See Fig 10, p25 of OutsideIn\cite{OutsideIn}.}
\paragraph{Linear functions.}
The type of linear functions is written $  \ottmv{a}  \to_{  \multiplicityfont{ \ottsym{1} }  }  \ottmv{b}  $.
  Despite our focus on linear constraints,
  we still need linearity in ordinary arguments.
  Indeed, the linearity of arrows interacts in interesting
  ways with linear constraints: If $\ottmv{f}  \ottsym{:}   \ottmv{a}  \to_{  \multiplicityfont{ \omega }  }  \ottmv{b} $ and
  $\ottmv{x}  \ottsym{:}   \constraintfont{   \multiplicityfont{ \ottsym{1} }  \scale \ottnt{q}  }   \Lolly  \ottmv{a}$, then calling $ \ottmv{f} \, \ottmv{x} $ would actually use $  \constraintfont{ \ottnt{q} }  $
  many times. We must make sure it is impossible to derive
  $ \constraintfont{   \multiplicityfont{ \ottsym{1} }  \scale \ottnt{q}  }   \ottsym{;}   \ottmv{f} {:}_{  \multiplicityfont{ \omega }  }  \ottmv{a}  \to_{  \multiplicityfont{ \omega }  }  \ottmv{b}    \ottsym{,}   \ottmv{x} {:}_{  \multiplicityfont{ \omega }  }  \constraintfont{   \multiplicityfont{ \ottsym{1} }  \scale \ottnt{q}  }   \Lolly  \ottmv{a}   \vdash  \ottmv{f} \, \ottmv{x}  \ottsym{:}  \ottmv{b}$.
  Otherwise we could make, for instance, the \ensuremath{\Varid{overusing}} function from
  \cref{sec:overusing}.
  You can check that $ \constraintfont{   \multiplicityfont{ \ottsym{1} }  \scale \ottnt{q}  }   \ottsym{;}   \ottmv{f} {:}_{  \multiplicityfont{ \omega }  }  \ottmv{a}  \to_{  \multiplicityfont{ \omega }  }  \ottmv{b}    \ottsym{,}   \ottmv{x} {:}_{  \multiplicityfont{ \omega }  }  \constraintfont{   \multiplicityfont{ \ottsym{1} }  \scale \ottnt{q}  }   \Lolly  \ottmv{a}   \vdash  \ottmv{f} \, \ottmv{x}  \ottsym{:}  \ottmv{b}$
  indeed does not
  type check, because the scaling of $  \constraintfont{ \ottnt{Q_{{\mathrm{2}}}} }  $ in \rref{E-App} ensures that
  the constraint would be $  \constraintfont{   \multiplicityfont{ \omega }  \scale \ottnt{q}  }  $ instead. On the other hand,
  it is perfectly fine to have $ \constraintfont{   \multiplicityfont{ \ottsym{1} }  \scale \ottnt{q}  }   \ottsym{;}   \ottmv{f} {:}_{  \multiplicityfont{ \omega }  }  \ottmv{a}  \to_{  \multiplicityfont{ \ottsym{1} }  }  \ottmv{b}    \ottsym{,}   \ottmv{x} {:}_{  \multiplicityfont{ \omega }  }  \constraintfont{   \multiplicityfont{ \ottsym{1} }  \scale \ottnt{q}  }   \Lolly  \ottmv{a}   \vdash  \ottmv{f} \, \ottmv{x}  \ottsym{:}  \ottmv{b}$ when $ \ottmv{f} $ is a linear function.

\paragraph{Variables.}
As is standard, the \rref{E-Var} rule works in a context containing more
than just the used binding for $ \ottmv{x} $. However, crucially,
our rule allows only
\emph{unrestricted} variables to be discarded; linear variables \emph{must} be
used. We can see this in the rule by noticing that the context has an unrestricted
component $   \multiplicityfont{ \omega }  \scale \Gamma_{{\mathrm{2}}}  $. The $ \Gamma_{{\mathrm{1}}} $ component might be restricted or might not,
allowing this rule to apply both for restricted and unrestricted $ \ottmv{x} $.

\paragraph{Data constructors.}
Data constructors $ \ottmv{K} $ don't have a dedicated typing
rule. Instead they are typed using the \rref{E-Var}, where they are
treated as if they were unrestricted variables.

\paragraph{Let-bindings.}
Bindings in a \ensuremath{\mathbf{let}} may be for either linear or unrestricted variables.
  We could require all bindings to be linear and to implement unrestricted
  information only using \ensuremath{\Conid{Ur}}, but it is very easy to add a multiplicity
  annotation on \ensuremath{\mathbf{let}}, and so we do.

\paragraph{Local assumptions.}
\Rref{E-Let} includes support for local
  assumptions. We thus have the ability to generalise a subset of
  the constraints needed by $ \ottnt{e_{{\mathrm{1}}}} $ (but not the type variables---no
  \ensuremath{\mathbf{let}}-generalisation here, though it could be added). The inference algorithm of
  \cref{sec:type-inference} will not make use of this
  possibility. 

\paragraph{Existentials.}
 We include $  \exists   \overline{\ottmv{a} } .  \tau  \RLolly   \constraintfont{ \ottnt{Q} }   $, as
  introduced in \cref{sec:what-it-looks-like}, together with
  the $\packbox$ constructor. See rules~\rref*{E-Pack} and
  \rref*{E-Unpack}.
\info{No substitution on $  \constraintfont{ \ottnt{Q_{{\mathrm{1}}}} }  $ in the E-Unpack rule, because there is
  only existential quantification.}

\section{Constraint inference}
\label{sec:type-inference}

The type system of \Cref{fig:typing-rules} gives a declarative description
of what programs are acceptable. We now present the algorithmic counterpart to
this system. Our algorithm is structured, unsurprisingly, around generating and
solving constraints, broadly following the template of
\citet{essence-of-ml-type-inference}.
That is, our algorithm takes a pass over the abstract syntax entered by the
user, generating constraints as it goes. Then, separately, we solve those
constraints (that is, try to satisfy them) in the presence of a set of assumptions,
or we determine that the assumptions do not imply that the constraints hold. In the
latter case, we issue an error to the programmer.

The procedure is responsible for inferring both \emph{types} and \emph{constraints}.
For our type system, type inference can be done independently from constraint
inference. Indeed, we focus on the latter, and defer type inference to
an external oracle (such as~\cite{linear-types-inference}).
That is, we assume an algorithm that produces typing derivations for the
judgement $\Gamma  \vdash  \ottnt{e}  \ottsym{:}  \tau$, ignoring all the constraints. Then, we describe a
constraint generation algorithm that passes over these typing derivations.
%
%
We can make this simplification for two reasons:
\begin{itemize}
\item We do not formalise type equality constraints, and our implementation
  in \textsc{ghc} (\cref{sec:equality-constraints}) takes care to not allow linear equality constraints to influence type inference.
  Indeed, a typical treatment of unification
  would be unsound for linear equalities, because it reuses the same
  equality many times (or none at all). Linear equalities make sense
  (\citet{shulman2018linear} puts linear
  equalities to great use), but they do not seem to lend themselves to
  automation.
\item We do not support, or intend to support, multiplicity
  polymorphism in constraint arrows. That is, the multiplicity of a
  constraint is always syntactically known to be either linear or
  unrestricted. This way, no equality constraints (which might, conceivably,
  relate multiplicity variables) can interfere with
  constraint resolution.
\end{itemize}
%

\subsection{Wanted constraints}
\label{sec:wanteds}

The constraints $  \constraintfont{ \constraintfont{C} }  $ generated in our system have a richer
logical structure than the simple constraints $  \constraintfont{ \ottnt{Q} }  $, above. Following
\textsc{ghc} and echoing \citet{OutsideIn}, we call these \emph{wanted constraints}:
they are constraints which the constraint solver \emph{wants} to prove.
An unproved wanted constraint results in a type error reported to the programmer.
$$
\begin{array}{lcll}
    \constraintfont{ \constraintfont{C} }   & \bnfeq &   \constraintfont{ \ottnt{Q} }   \bnfor   \constraintfont{ \constraintfont{C_{{\mathrm{1}}}}  \qtensor  \constraintfont{C_{{\mathrm{2}}}} }   \bnfor   \constraintfont{ \constraintfont{C_{{\mathrm{1}}}}  \aand  \constraintfont{C_{{\mathrm{2}}}} }   \bnfor   \constraintfont{   \multiplicityfont{ \pi }  \scale( \ottnt{Q}  \Lolly  \constraintfont{C} )  }  &
                                                                \text{Wanted constraints}
\end{array}
$$
A simple constraint is a valid wanted constraint, and we have two forms of
conjunction for wanted constraints:
the new
$  \constraintfont{ \constraintfont{C_{{\mathrm{1}}}}  \aand  \constraintfont{C_{{\mathrm{2}}}} }  $ construction (read $  \constraintfont{ \constraintfont{C_{{\mathrm{1}}}} }  $ \emph{with} $  \constraintfont{ \constraintfont{C_{{\mathrm{2}}}} }  $), alongside
the more typical $  \constraintfont{ \constraintfont{C_{{\mathrm{1}}}}  \qtensor  \constraintfont{C_{{\mathrm{2}}}} }  $. These are
connectives from linear logic: $  \constraintfont{ \constraintfont{C_{{\mathrm{1}}}}  \qtensor  \constraintfont{C_{{\mathrm{2}}}} }  $ is the
\emph{multiplicative} conjunction, and $  \constraintfont{ \constraintfont{C_{{\mathrm{1}}}}  \aand  \constraintfont{C_{{\mathrm{2}}}} }  $ is the \emph{additive}
conjunction. Both connectives are conjunctions, but they differ
in meaning. To satisfy $  \constraintfont{ \constraintfont{C_{{\mathrm{1}}}}  \qtensor  \constraintfont{C_{{\mathrm{2}}}} }  $ one consumes the (linear)
assumptions consumed by satisfying $  \constraintfont{ \constraintfont{C_{{\mathrm{1}}}} }  $ and those consumed by $  \constraintfont{ \constraintfont{C_{{\mathrm{2}}}} }  $;
if an assumed linear constraint is needed to prove both $  \constraintfont{ \constraintfont{C_{{\mathrm{1}}}} }  $ and $  \constraintfont{ \constraintfont{C_{{\mathrm{2}}}} }  $,
then $  \constraintfont{ \constraintfont{C_{{\mathrm{1}}}}  \qtensor  \constraintfont{C_{{\mathrm{2}}}} }  $ will not be provable, because that linear assumption cannot
be used twice. On the
other hand, satisfying $  \constraintfont{ \constraintfont{C_{{\mathrm{1}}}}  \aand  \constraintfont{C_{{\mathrm{2}}}} }  $ requires that satisfying $  \constraintfont{ \constraintfont{C_{{\mathrm{1}}}} }  $
and $  \constraintfont{ \constraintfont{C_{{\mathrm{2}}}} }  $ must each
consume the \emph{same} assumptions, which $  \constraintfont{ \constraintfont{C_{{\mathrm{1}}}}  \aand  \constraintfont{C_{{\mathrm{2}}}} }  $ consumes as well.
Thus, if $  \constraintfont{ \constraintfont{C} }  $ is assumed linearly (and we have no other assumptions),
then $  \constraintfont{ \constraintfont{C}  \qtensor  \constraintfont{C} }  $ is not provable, while $  \constraintfont{ \constraintfont{C}  \aand  \constraintfont{C} }  $ is.
The intuition, here, is that in $  \constraintfont{ \constraintfont{C_{{\mathrm{1}}}}  \aand  \constraintfont{C_{{\mathrm{2}}}} }  $, only
one of $  \constraintfont{ \constraintfont{C_{{\mathrm{1}}}} }  $ or $  \constraintfont{ \constraintfont{C_{{\mathrm{2}}}} }  $ will be eventually used. ``With'' constraints
arise from the branches in a $\kcase$-expression.

The last form of wanted constraint $  \constraintfont{ \constraintfont{C} }  $ is an implication
$  \constraintfont{   \multiplicityfont{ \pi }  \scale( \ottnt{Q}  \Lolly  \constraintfont{C} )  }  $. The more interesting case is $  \constraintfont{   \multiplicityfont{ \omega }  \scale( \ottnt{Q}  \Lolly  \constraintfont{C} )  }  $:
to prove $  \constraintfont{   \multiplicityfont{ \omega }  \scale( \ottnt{Q}  \Lolly  \constraintfont{C} )  }  $, you need to prove $  \constraintfont{ \constraintfont{C} }  $ under the
\emph{linear} assumption $  \constraintfont{ \ottnt{Q} }  $, but without using any other linear
assumptions.

These implications arise when we unpack an existential
package that contains a linear constraint and also when checking a \ensuremath{\mathbf{let}}-binding.
We can define scaling over wanted constraints by recursion as follows, where we
use scaling over simple constraints in the simple-constraint case:
$$
\left\{
  \begin{array}{lcl}
      \constraintfont{   \multiplicityfont{ \pi }  \scale \ottsym{(}  \constraintfont{C_{{\mathrm{1}}}}  \qtensor  \constraintfont{C_{{\mathrm{2}}}}  \ottsym{)}  }   & = &   \constraintfont{   \multiplicityfont{ \pi }  \scale \constraintfont{C_{{\mathrm{1}}}}   \qtensor    \multiplicityfont{ \pi }  \scale \constraintfont{C_{{\mathrm{2}}}}  }   \\
      \constraintfont{   \multiplicityfont{ \ottsym{1} }  \scale \ottsym{(}  \constraintfont{C_{{\mathrm{1}}}}  \aand  \constraintfont{C_{{\mathrm{2}}}}  \ottsym{)}  }   & = &   \constraintfont{ \constraintfont{C_{{\mathrm{1}}}}  \aand  \constraintfont{C_{{\mathrm{2}}}} }   \\
      \constraintfont{   \multiplicityfont{ \omega }  \scale \ottsym{(}  \constraintfont{C_{{\mathrm{1}}}}  \aand  \constraintfont{C_{{\mathrm{2}}}}  \ottsym{)}  }   & = &   \constraintfont{   \multiplicityfont{ \omega }  \scale \constraintfont{C_{{\mathrm{1}}}}   \qtensor    \multiplicityfont{ \omega }  \scale \constraintfont{C_{{\mathrm{2}}}}  }   \\
      \constraintfont{   \multiplicityfont{ \pi }  \scale \ottsym{(}    \multiplicityfont{ \rho }  \scale( \ottnt{Q}  \Lolly  \constraintfont{C} )   \ottsym{)}  }   & = &   \constraintfont{   \multiplicityfont{ \ottsym{(}   \pi {⋅} \rho   \ottsym{)} }  \scale( \ottnt{Q}  \Lolly  \constraintfont{C} )  }  
  \end{array}
\right.
$$
For the most part, scaling of wanted constraints is straightforward. The only
peculiar case is when we scale the additive conjunction $  \constraintfont{ \constraintfont{C_{{\mathrm{1}}}}  \aand  \constraintfont{C_{{\mathrm{2}}}} }  $ by
$  \multiplicityfont{ \omega }  $, the result is a multiplicative conjunction. The intuition here is
that when if we have both $  \constraintfont{   \multiplicityfont{ \omega }  \scale \constraintfont{C_{{\mathrm{1}}}}  }  $ and $  \constraintfont{   \multiplicityfont{ \omega }  \scale \constraintfont{C_{{\mathrm{2}}}}  }  $, then
a choice between $  \constraintfont{ \constraintfont{C_{{\mathrm{1}}}} }  $ and $  \constraintfont{ \constraintfont{C_{{\mathrm{2}}}} }  $ can be made $  \multiplicityfont{ \omega }  $ times.

We define an entailment relation over wanteds in \Cref{fig:wanted:entailment}.
Note that this relation uses only simple constraints $  \constraintfont{ \ottnt{Q} }  $ as assumptions, as
there is no way to assume the more elaborate $  \constraintfont{ \constraintfont{C} }  $\footnote{Allowing the full wanted-constraint syntax
in assumptions is the subject of work by \citet{quantified-constraints}.}.
\begin{figure}
  \maybesmall
  \centering
  \drules[C]{$ \constraintfont{ \ottnt{Q} }   \vdash   \constraintfont{ \constraintfont{C} } $} {Wanted-constraint entailment}
  {Dom,Id,Tensor,With,Impl}
  \caption{Wanted-constraint entailment}
  \label{fig:wanted:entailment}
\end{figure}

Before we move on to constraint generation proper, let us highlight a few
technical, yet essential, lemmas about the wanted-constraint
entailment relation.

\begin{lemma}[Inversion]
  \label{lem:inversion}
  The inference rules of $ \constraintfont{ \ottnt{Q} }   \vdash   \constraintfont{ \constraintfont{C} } $ can be read bottom-up (up to the
  set $ \constraintfont{\mathcal{D} } $) as well
  as top-down, as is required of $ \constraintfont{ \ottnt{Q_{{\mathrm{1}}}} }   \Vdash   \constraintfont{ \ottnt{Q_{{\mathrm{2}}}} } $ in
  \Cref{fig:entailment-relation}. That is:
  \begin{itemize}
  \item If $ \constraintfont{ \ottnt{Q} }   \vdash   \constraintfont{ \constraintfont{C_{{\mathrm{1}}}}  \qtensor  \constraintfont{C_{{\mathrm{2}}}} } $, then there exists $  \constraintfont{ \ottnt{Q_{{\mathrm{1}}}} }  $, $  \constraintfont{ Q_{\mathcal{D} } }  $ and $  \constraintfont{ \ottnt{Q_{{\mathrm{2}}}} }  $
    such that $  \constraintfont{ Q_{\mathcal{D} } }   \in  \constraintfont{\mathcal{D} } $, $ \constraintfont{ \ottnt{Q_{{\mathrm{1}}}}  \qtensor  Q_{\mathcal{D} } }   \vdash   \constraintfont{ \constraintfont{C_{{\mathrm{1}}}} } $, $ \constraintfont{ Q_{\mathcal{D} }  \qtensor  \ottnt{Q_{{\mathrm{2}}}} }   \vdash   \constraintfont{ \constraintfont{C_{{\mathrm{2}}}} } $, and
    $  \constraintfont{ \ottnt{Q} }   =   \constraintfont{ \ottnt{Q_{{\mathrm{1}}}}  \qtensor  Q_{\mathcal{D} }  \qtensor  \ottnt{Q_{{\mathrm{2}}}} }  $.
  \item If $ \constraintfont{ \ottnt{Q} }   \vdash   \constraintfont{ \constraintfont{C_{{\mathrm{1}}}}  \aand  \constraintfont{C_{{\mathrm{2}}}} } $, then $ \constraintfont{ \ottnt{Q} }   \vdash   \constraintfont{ \constraintfont{C_{{\mathrm{1}}}} } $ and $ \constraintfont{ \ottnt{Q} }   \vdash   \constraintfont{ \constraintfont{C_{{\mathrm{2}}}} } $.
  \item If $ \constraintfont{ \ottnt{Q} }   \vdash   \constraintfont{   \multiplicityfont{ \pi }  \scale( \ottnt{Q_{{\mathrm{2}}}}  \Lolly  \constraintfont{C} )  } $, then there exists $  \constraintfont{ \ottnt{Q_{{\mathrm{1}}}} }  $ such
    that $ \constraintfont{ \ottnt{Q_{{\mathrm{1}}}}  \qtensor  \ottnt{Q_{{\mathrm{2}}}} }   \vdash   \constraintfont{ \constraintfont{C} } $ and  $  \constraintfont{ \ottnt{Q} }   =   \constraintfont{   \multiplicityfont{ \pi }  \scale \ottnt{Q_{{\mathrm{1}}}}  }  $
  \end{itemize}
\end{lemma}

\begin{lemma}[Scaling]
  \label{lem:wanted:promote}
  If $ \constraintfont{ \ottnt{Q} }   \vdash   \constraintfont{ \constraintfont{C} } $, then $ \constraintfont{   \multiplicityfont{ \pi }  \scale \ottnt{Q}  }   \vdash   \constraintfont{   \multiplicityfont{ \pi }  \scale \constraintfont{C}  } $.
\end{lemma}

\begin{lemma}[Inversion of scaling]
  \label{lem:wanted:demote}
  If $ \constraintfont{ \ottnt{Q} }   \vdash   \constraintfont{   \multiplicityfont{ \pi }  \scale \constraintfont{C}  } $ then $ \constraintfont{ \ottnt{Q'} }   \vdash   \constraintfont{ \constraintfont{C} } $ and $  \constraintfont{ \ottnt{Q} }   =   \constraintfont{   \multiplicityfont{ \pi }  \scale \ottnt{Q'}  }  $ for some $  \constraintfont{ \ottnt{Q'} }  $.
\end{lemma}

\subsection{Constraint generation}
\label{sec:constraint-generation}
\label{sec:constraint-generation-soundness}

The process of inferring constraints is split into two parts: generating
constraints, which we do in this section, then solving them in
\cref{sec:constraint-solver}. Constraint generation is described by
the judgement $\Gamma  \vdashi  \ottnt{e}  \ottsym{:}  \tau  \leadsto   \constraintfont{ \constraintfont{C} } $ (defined in
\Cref{fig:constraint-generation}) which outputs a constraint $  \constraintfont{ \constraintfont{C} }  $
required to make $\ottnt{e}$ typecheck.
The definition
$\Gamma  \vdashi  \ottnt{e}  \ottsym{:}  \tau  \leadsto   \constraintfont{ \constraintfont{C} } $ is syntax directed, so it can directly be read as an
algorithm, taking as input a \emph{typing derivation} for $\Gamma  \vdash  \ottnt{e}  \ottsym{:}  \tau$
(produced by an external type inference oracle as discussed above). Notably, the
algorithm has access to the context splitting from the (previously computed)
typing derivation, and is
thus indeed syntax directed.
\info{See Fig.13, p39 of OutsideIn~\cite{OutsideIn}}

\info{Not caring about inferences simplifies $\packbox$ quite a bit, we
  are using the pseudo-inferred type to generate constraint. In a real
  system, we would need $\packbox$ to know its type (\emph{e.g.} using
  bidirectional type checking).}
\begin{figure}
  \maybesmall
  \centering
  \drules[G]{$\Gamma  \vdashi  \ottnt{e}  \ottsym{:}  \tau  \leadsto   \constraintfont{ \constraintfont{C} } $}{Constraint generation}{Var, Abs,
    App, Pack, Unpack, Case, Let, LetSig}

  \caption{Constraint generation}
  \label{fig:constraint-generation}
\end{figure}

The rules of \Cref{fig:constraint-generation} constitute a mostly
unsurprising translation of the rules of \Cref{fig:typing-rules},
except for the following points of interest:

\emph{Case expressions.}
Note the use of $ \aand $ in the conclusion of \rref{G-Case}.
We require that each branch of a $\kcase$ expression use the exact
same (linear) assumptions; this is enforced by combining the
emitted constraints with $ \aand $, not $ \qtensor $.
  This can also be understood in terms of the array example of
  \cref{sec:introduction}:
  if an array is freed in one branch of a $\kcase$, we require it to be freed (or freezed) in
the other branches too.
  Otherwise, the array's state will be unknown to the type system
  after the $\kcase$.

\emph{Implications.} The introduction of constraints local to a
  definition (\rref{G-LetSig}) corresponds to
  emitting an implication constraint.

\emph{Unannotated \ensuremath{\mathbf{let}}.}
 However, the~\rref*{G-Let} rule does not produce an implication
  constraint, as we do not model \ensuremath{\mathbf{let}}-generalisation~\cite{let-should-not-be-generalised}.

\vspace{1ex}
The key property of the constraint-generation algorithm is that,
if the generated constraint is solvable, then we can indeed type the
term in the qualified type system of
\cref{sec:qualified-type-system}. That is,
these rules are simply an implementation of our declarative qualified
type system.

\begin{lemma}[Soundness of constraint generation]\label{lem:generation-soundness}
  For all $  \constraintfont{ \ottnt{Q_{\ottmv{g}}} }  $, if $\Gamma  \vdashi  \ottnt{e}  \ottsym{:}  \tau  \leadsto   \constraintfont{ \constraintfont{C} } $ and $ \constraintfont{ \ottnt{Q_{\ottmv{g}}} }   \vdash   \constraintfont{ \constraintfont{C} } $ then
  $ \constraintfont{ \ottnt{Q_{\ottmv{g}}} }   \ottsym{;}  \Gamma  \vdash  \ottnt{e}  \ottsym{:}  \tau$.
\end{lemma}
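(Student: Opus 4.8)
The plan is to proceed by structural induction on the derivation of $\Gamma  \vdashi  e : \tau  \leadsto  C$, which, since constraint generation is syntax-directed, is the same as induction on $e$. It is important to keep $Q_g$ universally quantified throughout, so that the induction hypothesis can be instantiated at the various residual constraints that appear when $Q_g$ is decomposed. In each case the last rule of the generation derivation is one of the \rref*{G-Var}, \dots, \rref*{G-LetSig} rules of Figure~\ref{fig:constraint-generation}, and that rule builds $C$ out of the wanted constraints of the subterms using $\otimes$, $\aand$, scaling and linear implication in a way that mirrors exactly how the corresponding rule of Figure~\ref{fig:typing-rules} threads simple constraints. The task in each case is to run this mirroring backwards: decompose $Q_g$ using Lemma~\ref{lem:inversion} (together with Lemma~\ref{lem:wanted:demote} for the scaling factors), hand the pieces to the induction hypothesis to obtain declarative typing judgements for the subterms, and reassemble those with the matching expression-typing rule of Figure~\ref{fig:typing-rules} --- appealing to \rref*{E-Sub} wherever \rref*{G-Var} or \rref*{G-Pack} relied on a \rref*{C-Dom} step, that is, on a simple-constraint entailment $Q_g  \Vdash  Q'$ buried inside the hypothesis $Q_g  \vdash  C$.

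Concretely, for \rref*{G-Var} the constraint $C$ is an instantiation of the scheme's qualifier; $Q_g  \vdash  C$ gives $Q_g  \Vdash  C$ by inverting \rref*{C-Dom}, and \rref*{E-Var} followed by \rref*{E-Sub} closes the case. For \rref*{G-App} we have $C = C_1  \otimes   \pi \cdot C_2 $; the tensor clause of Lemma~\ref{lem:inversion} splits $Q_g = Q_1  \otimes  Q_2$ with $Q_1  \vdash  C_1$ and $Q_2  \vdash   \pi \cdot C_2 $, and Lemma~\ref{lem:wanted:demote} further gives $Q_2 =  \pi \cdot Q_2' $ with $Q_2'  \vdash  C_2$; applying the induction hypothesis to both premises and then \rref*{E-App} produces a derivation whose ambient constraint is $Q_1  \otimes   \pi \cdot Q_2'  = Q_1  \otimes  Q_2 = Q_g$. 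The cases \rref*{G-Abs}, \rref*{G-Pack}, \rref*{G-Unpack}, \rref*{G-Let} and \rref*{G-LetSig} go the same way, using the implication clause of Lemma~\ref{lem:inversion} wherever the declarative rule introduces a local assumption --- the qualifier of an existential in \rref*{E-Unpack}, the constraints of a signature in \rref*{E-LetSig} --- so that the residual $Q_1$ produced by inversion becomes the ambient constraint of the subderivation, extended by that local assumption (and \rref*{E-Sub} mops up the extra \rref*{C-Dom} step for \rref*{E-Pack}). Because the context operations $\cdot$, $+$ and membership appearing in the generation rules are literally those of Figure~\ref{fig:typing-rules}, nothing extra needs to be said about the context $\Gamma$.

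I expect the crux to be \rref*{G-Case}. Here $C$ combines the scrutinee's constraint, scaled by the scrutinee-multiplicity $\pi$, with an \emph{additive} conjunction $C_1  \aand  \cdots  \aand  C_n$ of the branch constraints, and the whole point of using $\aand$ instead of $\otimes$ is that the ``with'' clause of Lemma~\ref{lem:inversion} yields the \emph{same} residual constraint $Q_b$ satisfying $Q_b  \vdash  C_i$ for every branch $i$ --- which is exactly what \rref*{E-Case} requires, namely that all branches be typed under one common constraint context. After peeling off the scrutinee part with the tensor and promotion-inversion lemmas and applying the induction hypothesis to the scrutinee and to each branch (each in its extended context together with the constructor's argument bindings), \rref*{E-Case} recombines everything with ambient constraint $ \pi \cdot Q_0'   \otimes  Q_b$, which equals $Q_g$. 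The only remaining work is bookkeeping of the scaling factors $\pi$ decorating constraints and implications in \rref*{G-App}, \rref*{G-Case}, \rref*{G-Let} and \rref*{G-LetSig}: each must be discharged by Lemma~\ref{lem:wanted:demote} at the right spot, and the scaling identities of Section~\ref{sec:constraint-domain} are what make the reconstructed constraint come out syntactically equal to $Q_g$ rather than merely equivalent to it.
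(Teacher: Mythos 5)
Your proposal is correct and follows essentially the same route as the paper's proof: induction on the syntax-directed generation judgement, decomposition of $\ottnt{Q_{\ottmv{g}}}$ via Lemma~\ref{lem:inversion} (with Lemma~\ref{lem:wanted:demote} absorbing the scaling factors, a step the paper leaves implicit in its single appeal to inversion), application of the induction hypothesis to the residuals, and reassembly with the matching declarative rule, invoking \rref*{E-Sub} in the \rref*{G-Var} and \rref*{G-Pack} cases. Your identification of the $\aand$-clause of the inversion lemma as the crux of the \rref*{G-Case} case --- forcing a single shared residual across branches --- is exactly the point the paper makes as well.
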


\subsection{Constraint solving}
\label{sec:constraint-solver}

In this section,
we build a \emph{constraint solver} that proves
that $ \constraintfont{ \ottnt{Q_{\ottmv{g}}} }   \vdash   \constraintfont{ \constraintfont{C} } $ holds, as required by \cref{lem:generation-soundness}.
The constraint solver is represented by the following judgement:
$$
 \constraintfont{ \ottnt{U} }   \ottsym{;}   \constraintfont{ \ottnt{D} }   \ottsym{;}   \constraintfont{ \ottnt{L}_{\ottmv{i}} }   \vdashs   \constraintfont{ \constraintfont{C} }   \leadsto   \constraintfont{ \ottnt{L}_{\ottmv{o}} } 
$$
The judgement
takes in three contexts: $  \constraintfont{ \ottnt{U} }  $, which holds all the unrestricted
atomic constraint assumptions, $  \constraintfont{ \ottnt{D} }  $ which holds the linear atomic
assumptions which are members of $ \constraintfont{\mathcal{D} } $ and $  \constraintfont{ \ottnt{L}_{\ottmv{i}} }  $, which holds the linear
atomic constraint assumptions which aren't members of $ \constraintfont{\mathcal{D} } $.
The linear contexts $  \constraintfont{ \ottnt{D} }  $, $  \constraintfont{ \ottnt{L}_{\ottmv{i}} }  $, and $  \constraintfont{ \ottnt{L}_{\ottmv{o}} }  $ have been described
as multisets (\cref{sec:constraint-domain}), but we treat them
as ordered lists in the more concrete setting here; we will see soon
why this treatment is necessary.

Linearity requires treating constraints as consumable resources. This
is what $  \constraintfont{ \ottnt{L}_{\ottmv{o}} }  $ is for: it contains the hypotheses of
$  \constraintfont{ \ottnt{L}_{\ottmv{i}} }  $ which are not consumed when proving $  \constraintfont{ \constraintfont{C} }  $. As
suggested by the notation, it is an output of the
algorithm. Constraints from $  \constraintfont{ \ottnt{D} }  $ are never outputted in
$  \constraintfont{ \ottnt{L}_{\ottmv{i}} }  $: if constraints from $  \constraintfont{ \ottnt{D} }  $ remain unused, we
weaken them instead.

If the constraint solver finds a solution, then the output linear constraints
must be a subset of the input linear constraints, and the solution must indeed
be entailed from the given assumptions.
\begin{lemma}[Constraint solver soundness]
\label{lem:solver-soundness} If $ \constraintfont{ \ottnt{U} }   \ottsym{;}   \constraintfont{ \ottnt{D} }   \ottsym{;}   \constraintfont{ \ottnt{L}_{\ottmv{i}} }   \vdashs   \constraintfont{ \constraintfont{C} }   \leadsto   \constraintfont{ \ottnt{L}_{\ottmv{o}} } $, then:
\begin{enumerate}
\item $  \constraintfont{ \ottnt{L}_{\ottmv{o}} }  \subseteq   \constraintfont{ \ottnt{L}_{\ottmv{i}} }  $
\item $ \constraintfont{ \ottsym{(}  \ottnt{U}  \ottsym{,}   \ottnt{D}  \uplus  \ottnt{L}_{\ottmv{i}}   \ottsym{)} }   \vdash   \constraintfont{ \constraintfont{C}  \qtensor  \ottsym{(}  \emptyset  \ottsym{,}  \ottnt{L}_{\ottmv{o}}  \ottsym{)} } $
\end{enumerate}
\end{lemma}

To handle simple wanted constraints, we will need  a domain-specific
\emph{atomic-constraint solver} to be the algorithmic counterpart of the
abstract entailment relation of \cref{sec:constraint-domain}. The
main solver will appeal to this atomic-constraint solver when solving atomic
constraints.  The atomic-constraint solver is represented by the following
judgement:
$$
  \constraintfont{ \ottnt{U} }  ;   \constraintfont{ \ottnt{D} }  ;   \constraintfont{ \ottnt{L}_{\ottmv{i}} }    \vdashsimp    \multiplicityfont{ \pi }   \scale   \constraintfont{ \ottnt{q} }    \leadsto    \constraintfont{ \ottnt{L}_{\ottmv{o}} }  
$$

It has a similar structure to the main solver, but only deals with atomic
constraints. Even though the main solver is parameterised by this
atomic-constraint solver, we will give an instantiation in
\cref{sec:simple-constr-solv}.
We require the following property of the atomic-constraint solver:

\begin{property}[Atomic-constraint solver soundness]
\label{prop:atomic-solver-soundness} If $  \constraintfont{ \ottnt{U} }  ;   \constraintfont{ \ottnt{D} }  ;   \constraintfont{ \ottnt{L}_{\ottmv{i}} }    \vdashsimp    \multiplicityfont{ \pi }   \scale   \constraintfont{ \ottnt{q} }    \leadsto    \constraintfont{ \ottnt{L}_{\ottmv{o}} }  $, then:
\begin{enumerate}
\item $  \constraintfont{ \ottnt{L}_{\ottmv{o}} }  \subseteq   \constraintfont{ \ottnt{L}_{\ottmv{i}} }  $
\item $ \constraintfont{ \ottsym{(}  \ottnt{U}  \ottsym{,}   \ottnt{D}  \uplus  \ottnt{L}_{\ottmv{i}}   \ottsym{)} }   \Vdash   \constraintfont{   \multiplicityfont{ \pi }  \scale \ottnt{q}   \qtensor  \ottsym{(}  \emptyset  \ottsym{,}  \ottnt{L}_{\ottmv{o}}  \ottsym{)} } $
\end{enumerate}
\end{property}

\subsubsection{Constraint solver algorithm}

Building on this atomic-constraint solver, we use a linear proof
search algorithm based on the recipe given
by~\citet{resource-management-for-ll-proof-search}. \Cref{fig:constraint-solver}
presents the rules of the constraint solver.

\begin{figure}
  \maybesmall
  \centering
  \drules[S]{$ \constraintfont{ \ottnt{U} }   \ottsym{;}   \constraintfont{ \ottnt{D} }   \ottsym{;}   \constraintfont{ \ottnt{L}_{\ottmv{i}} }   \vdashs   \constraintfont{ \constraintfont{C_{\ottmv{w}}} }   \leadsto   \constraintfont{ \ottnt{L}_{\ottmv{o}} } $}{Constraint solving}{Atom, Mult, ImplOne, Add, ImplMany}
  \caption{Constraint solver}
  \label{fig:constraint-solver}
\end{figure}

\begin{itemize}
  \item The~\rref*{S-Mult} rule proceeds by solving one side of a
conjunction first, then passing the output constraints to the other side.
Both the unrestricted context and the duplicable context are shared between both sides.
  \item The~\rref*{S-Add} rule handles additive conjunction. The linear
constraints are shared between the branches (since additive conjunction is
generated from $\kcase$ expressions, only one of them is actually going to be
executed). Both branches must consume exactly the same
resources.
  \item The~\rref*{S-ImplOne} rules handles linear implications. The
    unrestricted and linear components of the assumption are unioned
    with their respective context when solving the conclusion. Note
    how the linear constraints, in particular, are classified
    according to whether they are members of $ \constraintfont{\mathcal{D} } $ or not. Importantly (see
\cref{sec:simple-constr-solv}), the linear assumptions are
added to the front of the lists.
 The side condition that the output
context is a subset of the input context ensures that the implication
fully consumes its assumption and does not leak it to the ambient
context.
\item The~\rref*{S-ImplMany} rules handles unrestricted implication.
  The conclusion uses its own linear assumption, but none of the other
  linear constraints.  This is because, as per~\rref*{C-Impl},
  unrestricted implications can only use an unrestricted context. In
  particular, crucially, the constraints from $  \constraintfont{ \ottnt{D} }  $, despite
  being duplicable and discardable, are not (and cannot) be
  used to prove an unrestricted implication, as first discussed
  in~\cref{sec:constraint-domain}.
\end{itemize}

\subsubsection{An atomic-constraint solver}
\label{sec:simple-constr-solv}

So far, the atomic-constraint domain has been
an abstract parameter. In this section, though, we offer a concrete
domain which supports our examples.

For the sake of our examples, we need very little: linear
constraints can remain abstract. It is thus sufficient for the entailment relation
(\Cref{fig:simpl-entailment}) to prove $  \constraintfont{ \ottnt{q} }  $ if and only if it
is already assumed---while respecting linearity. That is, with the
exception of a distinguished constraint $  \constraintfont{  \mathcal{L}  }  $, which can be
duplicated and discarded, and we use to model the \ensuremath{\constraintfont{\Conid{Linearly}}}
constraint from~\ref{sec:Unique-constraint}. The set $ \constraintfont{\mathcal{D} } $ is
defined to only contain $  \constraintfont{  \mathcal{L}  }  $, therefore the $  \constraintfont{ \ottnt{D} }  $
context is a sequence of 0 or more $  \constraintfont{  \mathcal{L}  }  $.

\begin{figure}
  \maybesmall
  \centering
  \begin{subfigure}{\linewidth}
    \drules[Q]{$ \constraintfont{ \ottnt{Q_{{\mathrm{1}}}} }   \Vdash   \constraintfont{ \ottnt{Q_{{\mathrm{2}}}} } $}{Entailment relation}{Hyp,Prod,Empty,
      DiscardD, DupD}
    \caption{Entailment relation}
    \label{fig:simpl-entailment}
  \end{subfigure}
  \begin{subfigure}{\linewidth}
    \drules[Atom]{$  \constraintfont{ \ottnt{U} }  ;   \constraintfont{ \ottnt{D} }  ;   \constraintfont{ \ottnt{L} }    \vdashsimp    \multiplicityfont{ \pi }   \scale   \constraintfont{ \ottnt{q} }    \leadsto    \constraintfont{ \ottnt{L}_{\ottmv{o}} }  $}{Atomic-constraint solver}{Many,OneL,OneD, OneU}
    \caption{Atomic-constraint solver}
    \label{fig:simpl-solver}
  \end{subfigure}
  \caption{A stripped-down constraint domain}
  \label{fig:predicate-domain}
\end{figure}

The corresponding atomic-constraint solver
(\Cref{fig:simpl-solver}) is more interesting.
It is deterministic: in all circumstances,
only one of the three rules can apply. This means that the
algorithm does not guess, thus never needs to backtrack.
Avoiding guesses is a key property of \textsc{ghc}'s solver~\cite[Section~6.4]{OutsideIn},
one we must maintain if we are to be compatible with \textsc{ghc}.

\Cref{fig:simpl-solver} is also where the fact that the
$  \constraintfont{ \ottnt{L} }  $ are lists comes into play. Indeed, \rref{Atom-OneL}
takes care to use the most recent occurrence of $  \constraintfont{ \ottnt{q} }  $
(remember that \rref{S-ImplOne} adds the new hypotheses on the front of
the list). To understand why, consider the following example:
\begin{hscode}\SaveRestoreHook
\column{B}{@{}>{\hspre}l<{\hspost}@{}}%
\column{7}{@{}>{\hspre}l<{\hspost}@{}}%
\column{9}{@{}>{\hspre}l<{\hspost}@{}}%
\column{16}{@{}>{\hspre}l<{\hspost}@{}}%
\column{21}{@{}>{\hspre}c<{\hspost}@{}}%
\column{21E}{@{}l@{}}%
\column{25}{@{}>{\hspre}l<{\hspost}@{}}%
\column{52}{@{}>{\hspre}l<{\hspost}@{}}%
\column{E}{@{}>{\hspre}l<{\hspost}@{}}%
\>[B]{}\Varid{f}\mathrel{=}{}\<[7]%
\>[7]{}\Varid{linearly}\mathbin{\$}{}\<[E]%
\\
\>[7]{}\hsindent{2}{}\<[9]%
\>[9]{}\mathbf{let}\;{}\<[16]%
\>[16]{}\packbox(\Conid{Ur}\;\Varid{arr})\mathrel{=}\Varid{new}\;\mathrm{10}{}\<[E]%
\\
\>[16]{}\Varid{fr}{}\<[21]%
\>[21]{}\mathbin{::}{}\<[21E]%
\>[25]{}\constraintfont{\Conid{RW}\;\Varid{n}}\Lolly (){}\<[52]%
\>[52]{}\,{}\<[E]%
\\
\>[16]{}\Varid{fr}{}\<[21]%
\>[21]{}\mathrel{=}{}\<[21E]%
\>[25]{}\Varid{free}\;\Varid{arr}{}\<[E]%
\\
\>[16]{}(){}\<[21]%
\>[21]{}\mathrel{=}{}\<[21E]%
\>[25]{}\Varid{fr}\;{}\<[52]%
\>[52]{}\mathbf{in}\;\Conid{Ur}\;(){}\<[E]%
\ColumnHook
\end{hscode}\resethooks
In this example, the programmer meant for
\ensuremath{\Varid{free}} to use the \ensuremath{\constraintfont{\Conid{RW}\;\Varid{n}}} constraint introduced locally in the type
of \ensuremath{\Varid{fr}}. Yet
there are actually two linear \ensuremath{\constraintfont{\Conid{RW}\;\Varid{n}}} constraints: this local one and the
one assumed when unpacking \ensuremath{\Varid{arr}}. The wrong
choice among the constraints will lead the algorithm to fail.
Choosing the first $  \constraintfont{ \ottnt{q} }  $ linear assumption guarantees we get the
most local one.

Another interesting feature of the solver (\Cref{fig:simpl-solver}) is that
no rule solves a linear constraint if it appears both in the
unrestricted and a linear context.
Consider the following (contrived) \textsc{api}:
\begin{hscode}\SaveRestoreHook
\column{B}{@{}>{\hspre}l<{\hspost}@{}}%
\column{E}{@{}>{\hspre}l<{\hspost}@{}}%
\>[B]{}\mathbf{class}\;\constraintfont{\Conid{C}}{}\<[E]%
\ColumnHook
\end{hscode}\resethooks

\vspace{-2ex}\noindent
\begin{minipage}{0.5\linewidth}\begin{hscode}\SaveRestoreHook
\column{B}{@{}>{\hspre}l<{\hspost}@{}}%
\column{3}{@{}>{\hspre}l<{\hspost}@{}}%
\column{E}{@{}>{\hspre}l<{\hspost}@{}}%
\>[3]{}\Varid{giveC}\mathbin{::}(\constraintfont{\Conid{C}}\FatArrow \Conid{Int})\to \Conid{Int}{}\<[E]%
\ColumnHook
\end{hscode}\resethooks
\end{minipage}
\begin{minipage}{0.5\linewidth}\begin{hscode}\SaveRestoreHook
\column{B}{@{}>{\hspre}l<{\hspost}@{}}%
\column{3}{@{}>{\hspre}l<{\hspost}@{}}%
\column{E}{@{}>{\hspre}l<{\hspost}@{}}%
\>[3]{}\Varid{useC}\mathbin{::}\constraintfont{\Conid{C}}\Lolly \Conid{Int}{}\<[E]%
\ColumnHook
\end{hscode}\resethooks
\end{minipage}
\ensuremath{\Varid{giveC}} gives an unrestricted copy of \ensuremath{\constraintfont{\Conid{C}}} to some continuation, while \ensuremath{\Varid{useC}}
uses \ensuremath{\constraintfont{\Conid{C}}} linearly. Now consider two potential consumers of this \textsc{api}:

\noindent
\begin{minipage}{0.5\linewidth}\begin{hscode}\SaveRestoreHook
\column{B}{@{}>{\hspre}l<{\hspost}@{}}%
\column{3}{@{}>{\hspre}l<{\hspost}@{}}%
\column{E}{@{}>{\hspre}l<{\hspost}@{}}%
\>[3]{}\Varid{ambiguous1}\mathbin{::}\constraintfont{\Conid{C}}\Lolly \Conid{Int}{}\<[E]%
\\
\>[3]{}\Varid{ambiguous1}\mathrel{=}\Varid{giveC}\;\Varid{useC}{}\<[E]%
\ColumnHook
\end{hscode}\resethooks
\end{minipage}
\begin{minipage}{0.5\linewidth}\begin{hscode}\SaveRestoreHook
\column{B}{@{}>{\hspre}l<{\hspost}@{}}%
\column{3}{@{}>{\hspre}l<{\hspost}@{}}%
\column{E}{@{}>{\hspre}l<{\hspost}@{}}%
\>[3]{}\Varid{ambiguous2}\mathbin{::}\constraintfont{\Conid{C}}\Lolly (\Conid{Int},\Conid{Int}){}\<[E]%
\\
\>[3]{}\Varid{ambiguous2}\mathrel{=}(\Varid{giveC}\;\Varid{useC},\Varid{useC}){}\<[E]%
\ColumnHook
\end{hscode}\resethooks
\end{minipage}
Looking at \ensuremath{\Varid{ambiguous1}}, the invocation of \ensuremath{\Varid{useC}} has both a linear \ensuremath{\constraintfont{\Conid{C}}} in scope, and
a more local unrestricted \ensuremath{\constraintfont{\Conid{C}}}. The strategy to pick the more local constraint
fails here, because it would leave the linear \ensuremath{\constraintfont{\Conid{C}}} unconsumed. A
tempting refinement might be to always consume the most local \emph{linear}
constraint. That would handle handles \ensuremath{\Varid{ambiguous1}} correctly, but fail on \ensuremath{\Varid{ambiguous2}}. In
the case of the latter, if the invocation of \ensuremath{\Varid{giveC}\;\Varid{useC}} consumes the linear
\ensuremath{\constraintfont{\Conid{C}}}, then the second \ensuremath{\Varid{useC}} invocation will fail. It is possible to
give a type derivation to \ensuremath{\Varid{ambiguous2}} in the qualified type system
of \cref{sec:qualified-type-system} by making the first \ensuremath{\Varid{useC}} consume the
unrestricted \ensuremath{\constraintfont{\Conid{C}}} and the second \ensuremath{\Varid{useC}} consume the linear
\ensuremath{\constraintfont{\Conid{C}}}.
This assignment, however, would require
the constraint solver to guess when solving the constraint from the first \ensuremath{\Varid{useC}}.
Accordingly, in order to both avoid backtracking and to keep type inference
independent of the order terms appear in the program text, \ensuremath{\Varid{bad}} is
rejected.
This introduces incompleteness with respect the entailment
relation. We conjecture that this is the only source of
incompleteness that we introduce beyond what is already in
\textsc{ghc}~\cite[Section~6]{OutsideIn}.

\section{Desugaring}
\label{sec:desugaring}

The semantics of our language is given by desugaring it into
a simpler core language: a variant of the $λ^q$
calculus~\cite{LinearHaskell}. We
define the core language's type system here; its operational semantics
is the same, \emph{mutatis mutandis}, as that of Linear Haskell.

\subsection{The core calculus}
\label{sec:core-calculus}
\label{sec:ds:inferred-constraints}

\begin{figure}
  \maybesmall
  \centering
  $$
  \begin{array}{lcll}
     \sigma  & \bnfeq &   \forall   \overline{\ottmv{a} } .  \tau   & \text{Type schemes} \\
     \tau ,  \upsilon  & \bnfeq & ... \bnfor   \exists   \overline{\ottmv{a} } .  \tau  \otimes  \upsilon   & \text{Types} \\
     \ottnt{e}  & \bnfeq & ... \bnfor  \packbox \, \ottsym{(}  \ottnt{e_{{\mathrm{1}}}}  \ottsym{,}  \ottnt{e_{{\mathrm{2}}}}  \ottsym{)}  \bnfor   \klet\ \packbox (  \ottmv{x}  ,  \ottmv{y}  ) =  \ottnt{e_{{\mathrm{1}}}}  \ \kin \  \ottnt{e_{{\mathrm{2}}}}   & \text{Expressions}
  \end{array}
  $$

  \drules[L]{$\Gamma  \vdash  \ottnt{e}  \ottsym{:}  \tau$}{Core language
    typing}{Pack,Unpack}
  \caption{Core calculus (subset)}
  \label{fig:core-typing-rules}\label{fig:core-grammar}
\end{figure}

The core calculus is a variant of the type system defined in
\cref{sec:qualified-type-system}, but without constraints. That is, the evidence for constraints is passed explicitly in this core calculus.
Following $λ^q$, we assume the existence of the following data types:
\begin{itemize}
\item $ \tau_{{\mathrm{1}}}  \otimes  \tau_{{\mathrm{2}}} $ with sole constructor
  $ ({,})   \ottsym{:}    \forall   \ottmv{a} \, \ottmv{b} .  \ottmv{a}   \to_{  \multiplicityfont{ \ottsym{1} }  }   \ottmv{b}  \to_{  \multiplicityfont{ \ottsym{1} }  }  \ottmv{a}    \otimes  \ottmv{b}$. We will write $\ottsym{(}  \ottnt{e_{{\mathrm{1}}}}  \ottsym{,}  \ottnt{e_{{\mathrm{2}}}}  \ottsym{)}$ for $  ({,})  \, \ottnt{e_{{\mathrm{1}}}} \, \ottnt{e_{{\mathrm{2}}}} $.
\item $  \mathbf{1}  $ with sole constructor $\ottsym{()}  \ottsym{:}   \mathbf{1} $.
\item $ \ottkw{Ur} \, \tau $ with sole constructor $\ottkw{Ur}  \ottsym{:}   \forall   \ottmv{a} .   \ottmv{a}  \to_{  \multiplicityfont{ \omega }  }  \ottkw{Ur} \, \ottmv{a}  $
\end{itemize}
\Cref{fig:core-typing-rules}
highlights the differences from the qualified system:
\begin{itemize}
  \item Type schemes $ \sigma $ do not support qualified types.
  \item Existentially quantified types ($  \exists   \overline{\ottmv{a} } .  \tau  \RLolly   \constraintfont{ \ottnt{Q} }   $) are now represented as an (existentially quantified, linear) pair of values ($  \exists   \overline{\ottmv{a} } .  \tau_{{\mathrm{2}}}  \otimes  \tau_{{\mathrm{1}}}  $).
Accordingly, $\packbox$ operates on pairs.
\end{itemize}
The differences between our core calculus and $λ^q$ are as follows:
\begin{itemize}
\item We do not support multiplicity polymorphism.
\item On the other hand, we do include type polymorphism.
\item Polymorphism is implicit rather than explicit. This is not an
  essential difference, but it simplifies the presentation. We could,
for example, include more details in the terms in order to make type-checking
more obvious; this amounts essentially to an encoding of typing derivations
in the terms\footnote{See, for example, \citet{weirich-icfp17} and their
comparison between an implicit core language D and an explicit one DC.}.
\item We have existential types. These can be realised in regular Haskell as a
  family of datatypes.
\end{itemize}

Using \cref{lem:generation-soundness} together with
\cref{lem:solver-soundness} we know that if
$\Gamma  \vdashi  \ottnt{e}  \ottsym{:}  \tau  \leadsto   \constraintfont{ \constraintfont{C} } $ and $ \constraintfont{ \ottnt{U} }   \ottsym{;}   \constraintfont{ \ottnt{D} }   \ottsym{;}   \constraintfont{ \ottnt{L} }   \vdashs   \constraintfont{ \constraintfont{C} }   \leadsto   \constraintfont{ \emptyset } $, then
$ \constraintfont{ \ottsym{(}  \ottnt{U}  \ottsym{,}   \ottnt{D}  \uplus  \ottnt{L}   \ottsym{)} }   \ottsym{;}  \Gamma  \vdash  \ottnt{e}  \ottsym{:}  \tau$.
It only remains to desugar derivations of $ \constraintfont{ \ottnt{Q} }   \ottsym{;}  \Gamma  \vdash  \ottnt{e}  \ottsym{:}  \tau$ into the
core calculus.

\subsection{From qualified to core}
\label{sec:ds:from-qualified-core}

\subsubsection{Evidence}
In order to desugar derivations of the qualified system to the core calculus,
we pass evidence explicitly\footnote{This technique is also often called
  dictionary-passing style \cite{type-classes-impl} because, in the case of type classes, evidences are
  dictionaries, and because type classes were the original form of constraints
  in Haskell.}.
To do so, we require some more material from
constraints. Namely, we assume a type $  \dsevidence{\constraintfont{ \ottnt{q} } }  $ for each atomic
constraint $  \constraintfont{ \ottnt{q} }  $,
defined in \Cref{fig:evidence}.  The $ \dsevidence{\constraintfont{ \_ } } $ operation
extends to simple constraints as
$  \dsevidence{\constraintfont{ \ottnt{Q} } }  $.
Furthermore, we require that for every $  \constraintfont{ \ottnt{Q_{{\mathrm{1}}}} }  $ and $  \constraintfont{ \ottnt{Q_{{\mathrm{2}}}} }  $
such that $ \constraintfont{ \ottnt{Q_{{\mathrm{1}}}} }   \Vdash   \constraintfont{ \ottnt{Q_{{\mathrm{2}}}} } $, there is a (linear) function
$ \dsevidence{\constraintfont{ \ottnt{Q_{{\mathrm{1}}}} }  \Vdash  \constraintfont{ \ottnt{Q_{{\mathrm{2}}}} } }   \ottsym{:}    \dsevidence{\constraintfont{ \ottnt{Q_{{\mathrm{1}}}} } }   \to_{  \multiplicityfont{ \ottsym{1} }  }   \dsevidence{\constraintfont{ \ottnt{Q_{{\mathrm{2}}}} } }  $.

Let us now define a family of functions $  \dstype{ \_ }  $ to translate
the type schemes, types, contexts, and typing derivations of the qualified system into the
types, type schemes, contexts, and terms of the core calculus.

\subsubsection{Translating types}
Type schemes $ \sigma $ are translated by turning the implicit argument $  \constraintfont{ \ottnt{Q} }  $
into an explicit one of type $  \dsevidence{\constraintfont{ \ottnt{Q} } }  $. Translating types $ \tau $
and contexts $ \Gamma $ proceeds as
expected.

\begin{minipage}{0.5\linewidth}
$$
\left\{
  \begin{array}{lcl}
      \dstype{  \forall   \overline{\ottmv{a} } .   \constraintfont{ \ottnt{Q} }   \Lolly  \tau  }   & = &    \forall   \overline{\ottmv{a} } .   \dsevidence{\constraintfont{ \ottnt{Q} } }    \to_{  \multiplicityfont{ \ottsym{1} }  }   \dstype{ \tau }    \\
  \end{array}
\right.
$$
$$
\left\{
  \begin{array}{lcl}
      \dstype{  \tau_{{\mathrm{1}}}  \to_{  \multiplicityfont{ \pi }  }  \tau_{{\mathrm{2}}}  }   & = &    \dstype{ \tau_{{\mathrm{1}}} }   \to_{  \multiplicityfont{ \pi }  }   \dstype{ \tau_{{\mathrm{2}}} }    \\
      \dstype{  \exists   \overline{\ottmv{a} } .  \tau  \RLolly   \constraintfont{ \ottnt{Q} }   }   & = &   \exists   \overline{\ottmv{a} } .   \dstype{ \tau }   \otimes   \dsevidence{\constraintfont{ \ottnt{Q} } }   
  \end{array}
\right.
$$
\end{minipage}%
\begin{minipage}{0.5\linewidth}
$$
\left\{
  \begin{array}{lcl}
      \dstype{ ∙ }   &=&  ∙  \\
      \dstype{ \Gamma  \ottsym{,}   \ottmv{x} {:}_{  \multiplicityfont{ \pi }  } \tau  }   &=&   \dstype{ \Gamma }   \ottsym{,}   \ottmv{x} {:}_{  \multiplicityfont{ \pi }  }  \dstype{ \tau }   
  \end{array}
\right.
$$
\end{minipage}

\subsubsection{Translating terms}
Given a derivation $ \constraintfont{ \ottnt{Q} }   \ottsym{;}  \Gamma  \vdash  \ottnt{e}  \ottsym{:}  \tau$, we can build an expression
$ \dsterm{ \ottmv{z} }{  \constraintfont{ \ottnt{Q} }   \ottsym{;}  \Gamma  \vdash  \ottnt{e}  \ottsym{:}  \tau } $, such that
$ \dstype{ \Gamma }   \ottsym{,}   \ottmv{z} {:}_{  \multiplicityfont{ \ottsym{1} }  }  \dsevidence{\constraintfont{ \ottnt{Q} } }    \vdash   \dsterm{ \ottmv{z} }{  \constraintfont{ \ottnt{Q} }   \ottsym{;}  \Gamma  \vdash  \ottnt{e}  \ottsym{:}  \tau }   \ottsym{:}   \dstype{ \tau } $ (for some fresh variable
$ \ottmv{z} $). Even though we abbreviate the derivation as only its
concluding judgement, the translation is defined recursively on the
whole typing derivation: in particular, we have access to typing rule
premises in the body of the definition.
We present some of the interesting cases in \Cref{fig:desugaring}.
\begin{figure}
    \maybesmall
\centering
  \begin{subfigure}{0.3\linewidth}%
$$
\left\{
  \begin{array}{lcl}
      \dsevidence{\constraintfont{   \multiplicityfont{ \ottsym{1} }  \scale \ottnt{q}  } }   & = &   \dsevidence{\constraintfont{ \ottnt{q} } }   \\
      \dsevidence{\constraintfont{   \multiplicityfont{ \omega }  \scale \ottnt{q}  } }   & = &  \ottkw{Ur} \, \ottsym{(}   \dsevidence{\constraintfont{ \ottnt{q} } }   \ottsym{)}  \\
      \dsevidence{\constraintfont{  \mathbf{\varepsilon}  } }   & = &   \mathbf{1}   \\
      \dsevidence{\constraintfont{ \ottnt{Q_{{\mathrm{1}}}}  \qtensor  \ottnt{Q_{{\mathrm{2}}}} } }   & = &   \dsevidence{\constraintfont{ \ottnt{Q_{{\mathrm{1}}}} } }   \otimes   \dsevidence{\constraintfont{ \ottnt{Q_{{\mathrm{2}}}} } }  
  \end{array}
\right.
$$
  \caption{Evidence passing}
  \label{fig:evidence}
  \end{subfigure}\hfill
  \begin{subfigure}{0.7\linewidth}%
$$
\left\{
  \;
  \begin{minipage}{0.5\linewidth}
\begin{hscode}\SaveRestoreHook
\column{B}{@{}>{\hspre}l<{\hspost}@{}}%
\column{3}{@{}>{\hspre}l<{\hspost}@{}}%
\column{5}{@{}>{\hspre}l<{\hspost}@{}}%
\column{22}{@{}>{\hspre}c<{\hspost}@{}}%
\column{22E}{@{}l@{}}%
\column{25}{@{}>{\hspre}l<{\hspost}@{}}%
\column{52}{@{}>{\hspre}c<{\hspost}@{}}%
\column{52E}{@{}l@{}}%
\column{E}{@{}>{\hspre}l<{\hspost}@{}}%
\>[B]{}\dsterm{\ottmv{\Varid{z}}}{  \constraintfont{ \ottnt{Q} }  ; \Gamma  \vdash \Varid{x}\mathbin{:} \upsilon [ \overline{\tau} / \overline{\ottmv{a} } ]}\mathrel{=}\Varid{x}\;\Varid{z}{}\<[E]%
\\
\>[B]{}\dsterm{\ottmv{\Varid{z}}}{  \constraintfont{ \ottnt{Q} }   \qtensor   \constraintfont{ \ottnt{Q_{{\mathrm{1}}}} }  [ \overline{\upsilon} / \overline{\ottmv{a} } ]; \Gamma  \vdash \packbox\Varid{e}\mathbin{:}\exists\; \overline{\ottmv{a} } .\tau\RLolly  \constraintfont{ \ottnt{Q_{{\mathrm{1}}}} }  }\mathrel{=}{}\<[E]%
\\
\>[B]{}\hsindent{3}{}\<[3]%
\>[3]{} \kcase_  \multiplicityfont{ \ottsym{1} }  \;\Varid{z}\;\mathbf{of}\;\{\mskip1.5mu (\Varid{z'},\Varid{z''})\to {}\<[E]%
\\
\>[3]{}\hsindent{2}{}\<[5]%
\>[5]{}\packbox(\Varid{z''},\dsterm{\ottmv{\Varid{z'}}}{  \constraintfont{ \ottnt{Q} }  ; \Gamma  \vdash \Varid{e}\mathbin{:}\tau[ \overline{\upsilon} / \overline{\ottmv{a} } ]})\mskip1.5mu\}{}\<[E]%
\\
\>[B]{}\dsterm{\ottmv{\Varid{z}}}{  \constraintfont{ \ottnt{Q_{{\mathrm{1}}}} }   \qtensor   \constraintfont{ \ottnt{Q_{{\mathrm{2}}}} }  ; \Gamma_{{\mathrm{1}}} \ottsym{+} \Gamma_{{\mathrm{2}}}  \vdash \klet\ \packbox \Varid{x}\mathrel{=} \ottnt{e_{{\mathrm{1}}}} \;\mathbf{in}\; \ottnt{e_{{\mathrm{2}}}} \mathbin{:}\tau}{}\<[52]%
\>[52]{}\mathrel{=}{}\<[52E]%
\\
\>[B]{}\hsindent{3}{}\<[3]%
\>[3]{} \kcase_  \multiplicityfont{ \ottsym{1} }  \;\Varid{z}\;\mathbf{of}\;\{\mskip1.5mu ( \ottmv{z_{{\mathrm{1}}}} , \ottmv{z_{{\mathrm{2}}}} )\to {}\<[E]%
\\
\>[3]{}\hsindent{2}{}\<[5]%
\>[5]{}\klet\ \packbox \Varid{z'},\Varid{x}\mathrel{=}\dsterm{\ottmv{ \ottmv{z_{{\mathrm{1}}}} }}{  \constraintfont{ \ottnt{Q_{{\mathrm{1}}}} }  ; \Gamma_{{\mathrm{1}}}  \vdash  \ottnt{e_{{\mathrm{1}}}} \mathbin{:}\exists\; \overline{\ottmv{a} } . \tau_{{\mathrm{1}}} \RLolly  \constraintfont{ \ottnt{Q} }  }\;\mathbf{in}{}\<[E]%
\\
\>[3]{}\hsindent{2}{}\<[5]%
\>[5]{} \klet_  \multiplicityfont{ \ottsym{1} }  \; \ottmv{z_{{\mathrm{2}}}} '\mathrel{=}( \ottmv{z_{{\mathrm{2}}}} ,\Varid{z'})\;\mathbf{in}{}\<[E]%
\\
\>[3]{}\hsindent{2}{}\<[5]%
\>[5]{}\dsterm{\ottmv{ \ottmv{z_{{\mathrm{2}}}} '}}{  \constraintfont{ \ottnt{Q_{{\mathrm{2}}}} }   \qtensor   \constraintfont{ \ottnt{Q} }  ; \Gamma_{{\mathrm{2}}} ,\Varid{x}\mathop{:_{1}} \tau_{{\mathrm{1}}}  \vdash  \ottnt{e_{{\mathrm{2}}}} \mathbin{:}\tau}\mskip1.5mu\}{}\<[E]%
\\
\>[B]{}\dsterm{\ottmv{\Varid{z}}}{  \constraintfont{ \ottnt{Q} }  ; \Gamma  \vdash \Varid{e}\mathbin{:}\tau}{}\<[22]%
\>[22]{}\mathrel{=}{}\<[22E]%
\>[25]{}\mbox{\onelinecomment  \rref{E-Sub}}{}\<[E]%
\\
\>[B]{}\hsindent{3}{}\<[3]%
\>[3]{} \klet_  \multiplicityfont{ \ottsym{1} }  \;\Varid{z'}\mathrel{=}\dsevidence{  \constraintfont{ \ottnt{Q} }   \Vdash   \constraintfont{ \ottnt{Q_{{\mathrm{1}}}} }  }\;\Varid{z}\;\mathbf{in}\;\dsterm{\ottmv{\Varid{z'}}}{  \constraintfont{ \ottnt{Q_{{\mathrm{1}}}} }  ; \Gamma  \vdash \Varid{e}\mathbin{:}\tau}{}\<[E]%
\\
\>[B]{}\mathbin{...}{}\<[E]%
\ColumnHook
\end{hscode}\resethooks
  \end{minipage}
\right.
$$
  \caption{Desugaring (subset)}
  \label{fig:desugaring}
  \end{subfigure}
  \caption{Evidence passing and desugaring}
\end{figure}

The cases correspond to the~\rref*{E-Var},~\rref*{E-Unpack}\footnote{The attentive
reader may note that the case for $\kunpack$ extracts out $  \constraintfont{ \ottnt{Q_{{\mathrm{1}}}} }  $ and $  \constraintfont{ \ottnt{Q_{{\mathrm{2}}}} }  $
from the provided simple constraint. Given that simple constraints $  \constraintfont{ \ottnt{Q} }  $ have no
internal ordering and allow duplicates (in the non-linear component), this splitting
is not well defined. To fix this, an implementation would have to \emph{name} individual
components of $  \constraintfont{ \ottnt{Q} }  $, and then the typing derivation can indicate which constraints
go with which sub-expression. Happily, \textsc{ghc} \emph{already} names its constraints,
and so this approach fits easily in the implementation. We could also augment our formalism
here with these details, but they add clutter with little insight.}, and~\rref*{E-Sub} rules, respectively.
Variables are stored with qualified types in the environment, so they get
translated to functions that take the evidence as argument. Accordingly, the evidence
is inserted by passing $ \ottmv{z} $ as an argument.
Handling \rref*{E-Unpack} requires splitting the context into two: $ \ottnt{e_{{\mathrm{1}}}} $ is desugared as a pair, and the evidence
it contains is passed to $ \ottnt{e_{{\mathrm{2}}}} $. Finally, subsumption summons the function corresponding to the entailment relation $ \constraintfont{ \ottnt{Q} }   \Vdash   \constraintfont{ \ottnt{Q_{{\mathrm{1}}}} } $
and applies it to $ \ottmv{z} $ : $  \dsevidence{\constraintfont{ \ottnt{Q} } }  $ then proceeds to desugar $ \ottnt{e} $ with the resulting evidence for $  \constraintfont{ \ottnt{Q_{{\mathrm{1}}}} }  $.
Crucially, since $  \dsterm{ \ottmv{z} }{ \_ }  $ is defined on \emph{derivations}, we can access the premises used in the rule.
Namely, $  \constraintfont{ \ottnt{Q_{{\mathrm{1}}}} }  $ is available in this last case from the~\rref*{E-Sub} rule's premise.

It is straightforward by induction, to verify that desugaring is correct:
\begin{theorem}[Desugaring]
If $ \constraintfont{ \ottnt{Q} }   \ottsym{;}  \Gamma  \vdash  \ottnt{e}  \ottsym{:}  \tau$, then
$ \dstype{ \Gamma }   \ottsym{,}   \ottmv{z} {:}_{  \multiplicityfont{ \ottsym{1} }  }  \dsevidence{\constraintfont{ \ottnt{Q} } }    \vdash   \dsterm{ \ottmv{z} }{  \constraintfont{ \ottnt{Q} }   \ottsym{;}  \Gamma  \vdash  \ottnt{e}  \ottsym{:}  \tau }   \ottsym{:}   \dstype{ \tau } $, for any fresh
variable $ \ottmv{z} $.
\end{theorem}

Thanks to the desugaring machinery, the semantics of a language with linear
constraints can be understood in terms of a simple core language with linear
types, such as $λ^q$, or indeed, \textsc{ghc} Core.

\section{Integrating into \textsc{ghc}}

One of the guiding principles behind our design was ease of integration with
modern Haskell. In this section we describe some of the particulars of adding
linear constraints to \textsc{ghc}.

\subsection{Implementation}
\label{sec:implementation}

We have written a prototype implementation~\cite{prototype} of linear constraints on top of \textsc{ghc} 9.1, a version that already
ships with the \text{\ttfamily LinearTypes} extension. Function arrows (\ensuremath{\to }) and context arrows
(\ensuremath{\FatArrow }) share the same internal representation in the typechecker, differentiated
only by a boolean flag. Thus, the \text{\ttfamily LinearTypes} implementation effort has already
laid down the bureaucratic ground work of annotating these arrows with
multiplicity information.

The key changes affect constraint generation and constraint solving. Constraints
are now annotated with a multiplicity, according to the context from
which they arise. With \text{\ttfamily LinearTypes}, \textsc{ghc} already scales the usage
of term variables. We simply modified the scaling function to capture all the
generated constraints and re-emit a scaled version of them, which is a fairly local
change.

The constraint solver maintains a set of given constraints (the \emph{inert set}
in \textsc{ghc} jargon), which corresponds to the $  \constraintfont{ \ottnt{U} }  $, $  \constraintfont{ \ottnt{D} }  $, and $  \constraintfont{ \ottnt{L} }  $
contexts in our solver judgements in \cref{sec:constraint-solver}. When
the solver goes under an implication, the assumptions of the implication are
added to set of givens. When a new given is added, we record the \emph{level} of
the implication (how many implications deep the constraint arises from) along
with the constraint. So that in case there are multiple matching
givens, the constraint solver selects the innermost one
(in \cref{sec:constraint-solver} we use an ordered list
for this purpose).

As constraint solving proceeds, the compiler pipeline constructs a
term in a typed language known as \textsc{ghc} Core~\cite{system-fc}.
In Core, type class constraints are turned into explicit evidence (see
\cref{sec:desugaring}). Thanks to being fully annotated, Core has
decidable typechecking, which is used to find and fix bugs in
the compiler (the Haskell type checker finds mistakes in user
programs). Thus, the Core typechecker verifies that the desugaring
procedure produced a linearity-respecting program before code
generation occurs.

\subsection{Interaction with other features}

Since constraints play an important role in \textsc{ghc}'s type system, we must
pay close attention to the interaction of linearity with other language features
related to constraints. Of these, we point out two that require some extra care.
\info{There isn't room to properly explain how we can implement
\ensuremath{\constraintfont{\Conid{Linearly}}} constraints. We don't already speak of
desugaring in the implementation section, so I'd need to give a bit of
context about wrappers or something.

The right approach is to count the number of `Linearly` used. We need
wrappers at the toplevel of definitions, and at each branch of a match
(even if it doesn't introduce an implication, as we may need to adjust
the number of `Linearly` in some branches (presumably by
weakening)). And we do the appropriate amount of
duplications/discards in these wrappers.}

\subsubsection{Superclasses}

Haskell's type classes can have \emph{superclasses}, which place constraints on
all of the instances of that class. For example, the \ensuremath{\Conid{Ord}} class is defined as
\begin{hscode}\SaveRestoreHook
\column{B}{@{}>{\hspre}l<{\hspost}@{}}%
\column{E}{@{}>{\hspre}l<{\hspost}@{}}%
\>[B]{}\mathbf{class}\;\constraintfont{\Conid{Eq}\;\Varid{a}}\FatArrow \constraintfont{\Conid{Ord}\;\Varid{a}}\;\mathbf{where}\mathbin{...}{}\<[E]%
\ColumnHook
\end{hscode}\resethooks
which means that every ordered type must also support equality. Such
superclass declarations extend the entailment relation: if we know that a type
is ordered, we also know that it supports equality. This is troublesome if we
have a linear occurrence of \ensuremath{\constraintfont{\Conid{Ord}\;\Varid{a}}}, because then using this entailment, we could
conclude that a linear constraint (\ensuremath{\constraintfont{\Conid{Ord}\;\Varid{a}}}) implies an unrestricted constraint
(\ensuremath{\constraintfont{\Conid{Eq}\;\Varid{a}}}), which violates \cref{lem:q:scaling-inversion}.

But even linear superclass constraints cause trouble. Consider a version of \ensuremath{\constraintfont{\Conid{Ord}\;\Varid{a}}}
that has \ensuremath{\constraintfont{\Conid{Eq}\;\Varid{a}}} as a linear superclass.
\begin{hscode}\SaveRestoreHook
\column{B}{@{}>{\hspre}l<{\hspost}@{}}%
\column{E}{@{}>{\hspre}l<{\hspost}@{}}%
\>[B]{}\mathbf{class}\;\constraintfont{\Conid{Eq}\;\Varid{a}}\Lolly \constraintfont{\Conid{Ord}\;\Varid{a}}\;\mathbf{where}\mathbin{...}{}\<[E]%
\ColumnHook
\end{hscode}\resethooks
When given a linear \ensuremath{\constraintfont{\Conid{Ord}\;\Varid{a}}}, should we keep it as \ensuremath{\constraintfont{\Conid{Ord}\;\Varid{a}}}, or rewrite to
\ensuremath{\constraintfont{\Conid{Eq}\;\Varid{a}}} using the entailment? Short of backtracking, the constraint solver
needs to make a guess, which \textsc{ghc} never does.

To address both of these issues at once, we make the following rule: the
superclasses of a linear constraint are ignored.

\subsubsection{Equality constraints}
\label{sec:equality-constraints}

In \cref{sec:type-inference} we argued that \emph{type} inference and
\emph{constraint} inference can be performed independently. However, this is not
the case for \textsc{ghc}'s constraint domain, because it supports equality
constraints, which allows unification problems to be deferred, and potentially
be solvable only after solving other constraints first.

To reconcile this with our presentation, we need to ensure that
\emph{unrestricted constraint} inference and \emph{linear constraint} inference
can be performed independently. That is, solving a linear constraint should
never be required for solving an unrestricted constraint. This is ensured by
\cref{lem:q:scaling-inversion}.

They key is to represent unification problems as \emph{unrestricted} equality
constraints, so a given linear equality constraint cannot be used during type
inference.  This way, linear equalities require no
special treatment, and are harmless.

\subsection{Inferring packing and unpacking}
\label{sec:implicit-existentials}
Recent work~\cite{existentials} describes an algorithm (call it \textsc{edwl}, after the
authors' names) that
can infer the location of the pack and unpack annotations (our $\packbox$ and $\kunpack$)
in a program.%
\footnote{Actually, \citet{existentials} use an $\ottkw{open}$ construct instead of $\kunpack$
to access the contents of an existential package, but that distinction does
not affect our usage of existentials with linear constraints.}
In Section~9.2 of that paper,
the authors extend their system to include class constraints,
much as we allow our existential packages to carry linear constraints.

Accordingly, \textsc{edwl} would work well for us here and remove the need for these annotations.
The \textsc{edwl} algorithm is only a small change on the way some types are treated during
bidirectional type-checking. Though the presentation of linear constraints is not
written using a bidirectional algorithm, our implementation in \textsc{ghc}
is indeed bidirectional (as \textsc{ghc}'s existing type inference algorithm
is bidirectional, as described by \citet{practical-type-inference} and
\citet{visible-type-application}) and produces constraints much like we
have presented here, formally. None of this would change in adapting \textsc{edwl}.
Indeed, it would seem that the two extensions are orthogonal in implementation,
though avoiding the need for explicit packing and unpacking would
make linear constraints easier to use.

\section{Related work}
\label{sec:related-work}

\paragraph{OutsideIn}
\label{sec:outsidein}

Our aim is to integrate the present work in \textsc{ghc}, and
accordingly the qualified type system in
\cref{sec:qualified-type-system} and the constraint inference
algorithm in \cref{sec:type-inference} follow a similar
presentation to that of OutsideIn~\cite{OutsideIn}, \textsc{ghc}'s
constraint solver algorithm.  Even though our presentation is
self-contained, we outline some of the differences from that work.

The solver judgement in OutsideIn takes the following form:
\[\mathcal{Q}\ ;\ Q_{\mathit{given}}\ ;\ \overline{\alpha}_{\mathit{tch}} \overset{\mathit{solv}}{\mapsto} C_{\mathit{wanted}} \leadsto Q_{\mathit{residual}}\  ; \ \theta\]
The main differences from our solver judgement in \cref{sec:constraint-solver} are:
\begin{itemize}
  \item OutsideIn's judgement includes top-level axioms schemes separately
($\mathcal{Q}$), which we have omitted for the sake of brevity and are instead
included in $Q_{\mathit{given}}$.
  \item We present the \emph{given} constraints ($Q_{\mathit{given}}$ in OutsideIn) as two separate
constraint sets $  \constraintfont{ \ottnt{U} }  $ and $  \constraintfont{ \ottnt{L} }  $, standing for the unrestricted and linear
parts respectively.
  \item In addition to constraint inference, OutsideIn performs type
inference, requiring additional bookkeeping in the solver judgment. The solver
takes as input a set of \emph{touchable} variables $\overline{\alpha}_{tch}$
which record the type variables that can be unified at any given time, and
produces a type substitution $\theta$ as an output.
As discussed in \cref{sec:type-inference}, we do not perform type
inference, only constraint inference. Therefore, our solver need not return a
type assignment.
  \item Both
OutsideIn and our solver output a set of constraints, $Q_{\mathit{residual}}$ and
$  \constraintfont{ \ottnt{L}_{\ottmv{o}} }  $ respectively. However, the meaning of these contexts is different.
OutsideIn's \emph{residual} constraints $Q_{\mathit{residual}}$
correspond to the part of $C_{\mathit{wanted}}$ that could not be solved from the
assumptions. These residuals are then quantified over in the generalisation step
of the inference algorithm. We omit these residuals, which means that our
algorithm cannot infer qualified types.
Our \emph{output} constraints $  \constraintfont{ \ottnt{L}_{\ottmv{o}} }  $ instead correspond to the part of the
\emph{linear} givens $  \constraintfont{ \ottnt{L}_{\ottmv{i}} }  $ that were not used in the solution for $  \constraintfont{ \constraintfont{C_{\ottmv{w}}} }  $.

\item Finally, while OutsideIn has a single kind of conjunction, our constraint
language requires two: $  \constraintfont{ \ottnt{Q_{{\mathrm{1}}}}  \qtensor  \ottnt{Q_{{\mathrm{2}}}} }  $ and $  \constraintfont{ \ottnt{Q_{{\mathrm{1}}}}  \aand  \ottnt{Q_{{\mathrm{2}}}} }  $. This shows up when
generating constraints for $\kcase$ expressions in the~\rref{G-Case} rule.
OutsideIn accumulates constraints across branches (taking the union of each
branch), whereas we need to make sure that each branch of a $\kcase$-expression
consumes the same constraints.
\end{itemize}

\paragraph{Ownership}

%
Ownership and borrowing are the key features of Rust's safe memory management model.
In \cref{sec:memory-ownership} we show how linear constraints can be used to
implement such an ownership model as a library.
Although linear constraints do not have the
convenience of Rust's syntax, we expect that they will support a
greater variety of abstractions.

Clean is another language with built-in ownership typing. Like Haskell
it is a lazy language. Mutation is performed by returning a new
reference, like in Linear Haskell without linear constraints.

\paragraph{Languages with capabilities}

The idea of using capabilities to enforce high-level resource usage protocols is not new~\citep{DBLP:conf/pldi/DeLineF01},
and as such has been applied in practical programming languages before.
Both Mezzo~\cite{mezzo-permissions} and
\textsc{ats}~\cite{AtsLinearViews} served as inspiration for the
design of linear constraints. Of the two, Mezzo is more specialised,
being entirely built around its system of capabilities.  \textsc{Ats}
is the closest to our system because it appeals explicitly to linear
logic, and because the capabilities (known as \emph{stateful views})
are not tied to a particular use case. However,
\textsc{ats} does not have full inference of capabilities.

Other than that, the two systems have a lot of similarities. They have a
finer-grained capability system than is expressible in Rust (or our
encoding of it in \cref{sec:memory-ownership}) which makes it possible to change
the type of a reference cell upon write (though linear constraints could be used to implement such type-changing references too). They also eschew scoped
borrowing in favour of more traditional read and write capabilities.

Linear constraints are more general than either Mezzo or \textsc{ats},
while maintaining a considerably simpler inference algorithm, and at
the same time supporting a richer set of constraints (such as \textsc{gadt}s). This
simplicity is a benefit of abstracting over the simple-constraint
domain. In fact, it should be possible to see Mezzo or \textsc{ats} as
particular instantiations of the simple-constraint domain, with linear
constraints providing the general inference mechanism.


\paragraph{Linearly typed languages}

Affe~\cite{kindly-bent} is a linearly typed \textsc{ml}-style core
language with mutable references and arrays, augmented with a notion
of borrowing. It has dedicated syntax for the scope of borrows. In
contrast, we represent scopes as functions. Affe is presented as a
fully integrated solution, while linear constraints is a small layer
on top of Linear Haskell.

\paragraph{Logic programming}

There are a lot of commonalities between \textsc{ghc}'s constraint and logic
programs. Traditional type classes can be seen as Horn clause programs, much
like Prolog programs. \textsc{ghc} puts further restrictions in order to
avoid backtracking for speed and predictability.

The recent addition of quantified
constraints~\cite{quantified-constraints} extends type class
resolution to Hereditary Harrop~\cite{hereditary-harrop} programs. A generalisation of the
Hereditary Harrop fragment to linear logic, described by~\citet{hh-ll},
is the foundation of the Lolli language~\cite{hodas-thesis-lolli}.
The authors also coin the notion of \emph{uniform} proof. A fragment where
uniform proofs are complete supports goal-oriented proof search, like
Prolog does.

Completeness of uniform proofs is equivalent to
\cref{lem:inversion}, which, in turn, is used in the proof of the
soundness \cref{lem:generation-soundness}. Therefore our linear
constraints are compatible with quantified constraints: we simply need
to adapt~\cref{lem:inversion}.

It is interesting that goal-oriented search is baked into the
definition of OutsideIn. It's not only used as the constraint solving
strategy, but it seems to required for the soundness of the constraint
generation algorithm. Or, if they are not required, uniform proofs are
at least an effective strategy to prove soundness.



\section{Conclusion}
\label{sec:conclusion}

We showed how a simple linear type system like that of Linear
Haskell can be extended with an inference mechanism which lets the
compiler manage some of the additional complexity of linear types
instead of the programmer. Linear constraints narrow the gap between linearly
typed languages and dedicated linear-like typing disciplines such as Rust's,
Mezzo's, or \textsc{ats}'s.




\printbibliography
\newpage

\appendix

\section{Full descriptions}
\label{sec:appendix:full-descriptions}

In this appendix, we give, for reference, complete descriptions of the
type systems, functions, etc. that we have abbreviated in the main 
body of the article.

\subsection{Core calculus}
\label{sec:appendix:core-calculus}

This is the complete version of the core calculus described in
\cref{sec:core-calculus}. The full grammar is given by
\Cref{fig:full:core-grammar} and the type system by
\Cref{fig:full:core-typing-rules}.

\begin{figure}
  \maybesmall
  \centering
  $$
  \begin{array}{lcll}
     \ottmv{a} ,  \ottmv{b}  & \bnfeq & \ldots & \text{Type variables} \\
     \ottmv{x} ,  \ottmv{y}  & \bnfeq & \ldots & \text{Expression variables} \\
     \ottmv{K}  & \bnfeq & \ldots & \text{Data constructors} \\
     \sigma  & \bnfeq &   \forall   \overline{\ottmv{a} } .  \tau   & \text{Type schemes} \\
     \tau ,  \upsilon  & \bnfeq &  \ottmv{a}  \bnfor   \exists   \overline{\ottmv{a} } .  \tau  \otimes  \upsilon   \bnfor   \tau_{{\mathrm{1}}}  \to_{  \multiplicityfont{ \pi }  }  \tau_{{\mathrm{2}}}  
                            \bnfor  \ottmv{T} \, \overline{\tau}  & \text{Types} \\
     \Gamma ,  \Delta  & \bnfeq &  ∙  \bnfor  \Gamma  \ottsym{,}   \ottmv{x} {:}_{  \multiplicityfont{ \pi }  } \sigma   &
                                                              \text{Contexts} \\
     \ottnt{e}  & \bnfeq &  \ottmv{x}  \bnfor  \ottmv{K}  \bnfor  \lambda  \ottmv{x}  \ottsym{.}  \ottnt{e}  \bnfor \ottnt{e_{{\mathrm{1}}}} \, \ottnt{e_{{\mathrm{2}}}} \bnfor  \packbox \, \ottsym{(}  \ottnt{e_{{\mathrm{1}}}}  \ottsym{,}  \ottnt{e_{{\mathrm{2}}}}  \ottsym{)}  & \text{Expressions}\\
                 &\bnfor &  \klet\ \packbox (  \ottmv{y}  ,  \ottmv{x}  ) =  \ottnt{e_{{\mathrm{1}}}}  \ \kin \  \ottnt{e_{{\mathrm{2}}}}  \bnfor   \kcase_  \multiplicityfont{ \pi }   \, \ottnt{e} \, \ottkw{of} \, \ottsym{\{}  \overline{\ottmv{K}_i\ \overline{\ottmv{x}_i } \to \ottnt{e}_i }  \ottsym{\}}  &\\
                 &\bnfor &  \klet_  \multiplicityfont{ \pi }   \, \ottmv{x}  \ottsym{=}  \ottnt{e_{{\mathrm{1}}}} \, \ottkw{in} \, \ottnt{e_{{\mathrm{2}}}} \bnfor   \klet_  \multiplicityfont{ \pi }   \, \ottmv{x}  \ottsym{:}  \sigma  \ottsym{=}  \ottnt{e_{{\mathrm{1}}}} \, \ottkw{in} \, \ottnt{e_{{\mathrm{2}}}}  &
  \end{array}
  $$
  \caption{Grammar of the core calculus}
  \label{fig:full:core-grammar}
\end{figure}

\begin{figure}
  \maybesmall
  \centering
  \drules[L]{$\Gamma  \vdash  \ottnt{e}  \ottsym{:}  \tau$}{Core language
    typing}{Var,Abs,App,Pack,Unpack,Let,Case}
  \caption{Core calculus type system}
  \label{fig:full:core-typing-rules}
\end{figure}

\subsection{Desugaring}
\label{sec:appendix:desugaring}

The complete definition of the desugaring function from
\cref{sec:desugaring} can be found in
\Cref{fig:full:desugaring}.

For the sake of concision, we allow ourselves to write nested patterns
in $\kcase$ expressions of the core language. Desugaring nested patterns
into atomic $\kcase$ expression is routine.

In the complete description, we use a device which was omitted in the
main body of the article. Namely, we'll need a way to turn every
$  \dsevidence{\constraintfont{   \multiplicityfont{ \omega }  \scale \ottnt{Q}  } }  $ into an $ \ottkw{Ur} \, \ottsym{(}   \dsevidence{\constraintfont{ \ottnt{Q} } }   \ottsym{)} $. For any
$\ottnt{e}  \ottsym{:}   \dsevidence{\constraintfont{   \multiplicityfont{ \omega }  \scale \ottnt{Q}  } } $, we shall write $ \underline{ \ottnt{e} }_{  \constraintfont{ \ottnt{Q} }  }   \ottsym{:}  \ottkw{Ur} \, \ottsym{(}   \dsevidence{\constraintfont{   \multiplicityfont{ \omega }  \scale \ottnt{Q}  } }   \ottsym{)}$. As a shorthand, particularly useful in nested
patterns, we will write $  \kcase_  \multiplicityfont{ \pi }   \, \ottnt{e} \, \ottkw{of} \, \ottsym{\{}   \underline{ \ottmv{x} }_{  \constraintfont{ \ottnt{Q} }  }   \to  \ottnt{e'}  \ottsym{\}} $ for
$  \kcase_  \multiplicityfont{ \pi }   \,  \underline{ \ottnt{e} }_{  \constraintfont{ \ottnt{Q} }  }  \, \ottkw{of} \, \ottsym{\{}  \ottkw{Ur} \, \ottmv{x}  \to  \ottnt{e'}  \ottsym{\}} $.
$$
\left\{
  \begin{array}{lcl}
      \underline{ \ottnt{e} }_{  \constraintfont{  \mathbf{\varepsilon}  }  }  & = &   \kcase_  \multiplicityfont{ \ottsym{1} }   \, \ottnt{e} \, \ottkw{of} \, \ottsym{\{}  \ottsym{()}  \to  \ottkw{Ur} \, \ottsym{()}  \ottsym{\}}  \\
      \underline{ \ottnt{e} }_{  \constraintfont{   \multiplicityfont{ \ottsym{1} }  \scale \ottnt{q}  }  }   & = &  \ottnt{e}  \\
      \underline{ \ottnt{e} }_{  \constraintfont{   \multiplicityfont{ \omega }  \scale \ottnt{q}  }  }   & = &   \kcase_  \multiplicityfont{ \ottsym{1} }   \, \ottnt{e} \, \ottkw{of} \, \ottsym{\{}  \ottkw{Ur} \, \ottmv{x}  \to  \ottkw{Ur} \, \ottsym{(}  \ottkw{Ur} \, \ottmv{x}  \ottsym{)}  \ottsym{\}}  \\
      \underline{ \ottnt{e} }_{  \constraintfont{ \ottnt{Q_{{\mathrm{1}}}}  \qtensor  \ottnt{Q_{{\mathrm{2}}}} }  }   & = &   \kcase_  \multiplicityfont{ \ottsym{1} }   \, \ottnt{e} \, \ottkw{of} \, \ottsym{\{}  \ottsym{(}   \underline{ \ottmv{x} }_{  \constraintfont{ \ottnt{Q_{{\mathrm{1}}}} }  }   \ottsym{,}   \underline{ \ottmv{y} }_{  \constraintfont{ \ottnt{Q_{{\mathrm{2}}}} }  }   \ottsym{)}  \to  \ottkw{Ur} \, \ottsym{(}  \ottmv{x}  \ottsym{,}  \ottmv{y}  \ottsym{)}  \ottsym{\}} 
  \end{array}
\right.
$$
We will omit the $  \constraintfont{ \ottnt{Q} }  $ in $  \underline{ \ottnt{e} }_{  \constraintfont{ \ottnt{Q} }  }  $ and write
$  \underline{ \ottnt{e} }  $ when it can be easily inferred from the context.

\begin{figure}
  \small
  \centering

$$
\left\{\;
\begin{minipage}{0.8\linewidth}
\begin{hscode}\SaveRestoreHook
\column{B}{@{}>{\hspre}l<{\hspost}@{}}%
\column{3}{@{}>{\hspre}l<{\hspost}@{}}%
\column{5}{@{}>{\hspre}l<{\hspost}@{}}%
\column{7}{@{}>{\hspre}l<{\hspost}@{}}%
\column{29}{@{}>{\hspre}c<{\hspost}@{}}%
\column{29E}{@{}l@{}}%
\column{53}{@{}>{\hspre}c<{\hspost}@{}}%
\column{53E}{@{}l@{}}%
\column{67}{@{}>{\hspre}c<{\hspost}@{}}%
\column{67E}{@{}l@{}}%
\column{E}{@{}>{\hspre}l<{\hspost}@{}}%
\>[B]{}\dsterm{\ottmv{\Varid{z}}}{  \constraintfont{ \ottnt{Q} }  ; \Gamma  \vdash \Varid{x}\mathbin{:} \upsilon \;[\mskip1.5mu  \overline{\tau} \mathbin{/} \overline{\ottmv{a} } \mskip1.5mu]}{}\<[29]%
\>[29]{}\mathrel{=}{}\<[29E]%
\\
\>[B]{}\hsindent{5}{}\<[5]%
\>[5]{}\Varid{x}\;\Varid{z}{}\<[E]%
\\
\>[B]{}\dsterm{\ottmv{\Varid{z}}}{  \constraintfont{ \ottnt{Q} }  ; \Gamma  \vdash \lambda \Varid{x}.\Varid{e}\mathbin{:} \tau_{{\mathrm{1}}} \to_{\pi} \tau_{{\mathrm{2}}} }\mathrel{=}{}\<[E]%
\\
\>[B]{}\hsindent{3}{}\<[3]%
\>[3]{}\lambda \Varid{x}.\dsterm{\ottmv{\Varid{z}}}{  \constraintfont{ \ottnt{Q} }  ; \Gamma ,\Varid{x}\mathop{:_{  \multiplicityfont{ \pi }  }} \tau_{{\mathrm{1}}}  \vdash \Varid{e}\mathbin{:} \tau_{{\mathrm{2}}} }{}\<[E]%
\\
\>[B]{}\dsterm{\ottmv{\Varid{z}}}{  \constraintfont{ \ottnt{Q_{{\mathrm{1}}}} }   \qtensor   \constraintfont{ \ottnt{Q_{{\mathrm{2}}}} }  ; \Gamma_{{\mathrm{1}}} \mathbin{+} \Gamma_{{\mathrm{2}}}  \vdash  \ottnt{e_{{\mathrm{1}}}} \; \ottnt{e_{{\mathrm{2}}}} \mathbin{:} \tau }\mathrel{=}{}\<[E]%
\\
\>[B]{}\hsindent{3}{}\<[3]%
\>[3]{} \kcase_  \multiplicityfont{ \ottsym{1} }  \;\Varid{z}\;\mathbf{of}\;\{\mskip1.5mu ( \ottmv{z_{{\mathrm{1}}}} , \ottmv{z_{{\mathrm{2}}}} )\to {}\<[E]%
\\
\>[3]{}\hsindent{2}{}\<[5]%
\>[5]{}(\dsterm{\ottmv{ \ottmv{z_{{\mathrm{1}}}} }}{  \constraintfont{ \ottnt{Q_{{\mathrm{1}}}} }  ; \Gamma_{{\mathrm{1}}}  \vdash  \ottnt{e_{{\mathrm{1}}}} \mathbin{:} \tau_{{\mathrm{1}}} \to_{1} \tau })\;(\dsterm{\ottmv{ \ottmv{z_{{\mathrm{2}}}} }}{  \constraintfont{ \ottnt{Q_{{\mathrm{2}}}} }  ; \Gamma_{{\mathrm{2}}}  \vdash  \ottnt{e_{{\mathrm{2}}}} \mathbin{:} \tau_{{\mathrm{1}}} })\mskip1.5mu\}{}\<[E]%
\\
\>[B]{}\dsterm{\ottmv{\Varid{z}}}{  \constraintfont{ \ottnt{Q_{{\mathrm{1}}}} }   \qtensor   \multiplicityfont{ \omega }  \mathbin{⋅}  \constraintfont{ \ottnt{Q_{{\mathrm{2}}}} }  ; \Gamma_{{\mathrm{1}}} \mathbin{+}  \multiplicityfont{ \omega }  \mathbin{⋅} \Gamma_{{\mathrm{2}}}  \vdash  \ottnt{e_{{\mathrm{1}}}} \; \ottnt{e_{{\mathrm{2}}}} \mathbin{:} \tau }\mathrel{=}{}\<[E]%
\\
\>[B]{}\hsindent{3}{}\<[3]%
\>[3]{} \kcase_  \multiplicityfont{ \ottsym{1} }  \;\Varid{z}\;\mathbf{of}\;\{\mskip1.5mu ( \ottmv{z_{{\mathrm{1}}}} ,\underline{ \ottmv{z_{{\mathrm{2}}}} })\to {}\<[E]%
\\
\>[3]{}\hsindent{2}{}\<[5]%
\>[5]{}(\dsterm{\ottmv{ \ottmv{z_{{\mathrm{1}}}} }}{  \constraintfont{ \ottnt{Q_{{\mathrm{1}}}} }  ; \Gamma_{{\mathrm{1}}}  \vdash  \ottnt{e_{{\mathrm{1}}}} \mathbin{:} \tau_{{\mathrm{1}}} \to_{\omega} \tau })\;(\dsterm{\ottmv{ \ottmv{z_{{\mathrm{2}}}} }}{  \constraintfont{ \ottnt{Q_{{\mathrm{2}}}} }  ; \Gamma_{{\mathrm{2}}}  \vdash  \ottnt{e_{{\mathrm{2}}}} \mathbin{:} \tau_{{\mathrm{1}}} })\mskip1.5mu\}{}\<[E]%
\\
\>[B]{}\dsterm{\ottmv{\Varid{z}}}{  \constraintfont{ \ottnt{Q} }   \qtensor   \constraintfont{ \ottnt{Q_{{\mathrm{1}}}} }  [ \overline{\upsilon} / \overline{\ottmv{a} } ]; \Gamma  \vdash \packbox\Varid{e}\mathbin{:}\exists\; \overline{\ottmv{a} } . \tau \RLolly  \constraintfont{ \ottnt{Q_{{\mathrm{1}}}} }  }\mathrel{=}{}\<[E]%
\\
\>[B]{}\hsindent{3}{}\<[3]%
\>[3]{} \kcase_  \multiplicityfont{ \ottsym{1} }  \;\Varid{z}\;\mathbf{of}\;\{\mskip1.5mu (\Varid{z'},\Varid{z''})\to {}\<[E]%
\\
\>[3]{}\hsindent{2}{}\<[5]%
\>[5]{}\packbox(\Varid{z''},\dsterm{\ottmv{\Varid{z'}}}{  \constraintfont{ \ottnt{Q} }  ; \Gamma  \vdash \Varid{e}\mathbin{:} \tau [ \overline{\upsilon} / \overline{\ottmv{a} } ]})\mskip1.5mu\}{}\<[E]%
\\
\>[B]{}\dsterm{\ottmv{\Varid{z}}}{  \constraintfont{ \ottnt{Q_{{\mathrm{1}}}} }   \qtensor   \constraintfont{ \ottnt{Q_{{\mathrm{2}}}} }  ; \Gamma_{{\mathrm{1}}} \mathbin{+} \Gamma_{{\mathrm{2}}}  \vdash \klet\ \packbox \Varid{x}\mathrel{=} \ottnt{e_{{\mathrm{1}}}} \;\mathbf{in}\; \ottnt{e_{{\mathrm{2}}}} \mathbin{:} \tau }\mathrel{=}{}\<[E]%
\\
\>[B]{}\hsindent{3}{}\<[3]%
\>[3]{} \kcase_  \multiplicityfont{ \ottsym{1} }  \;\Varid{z}\;\mathbf{of}\;\{\mskip1.5mu ( \ottmv{z_{{\mathrm{1}}}} , \ottmv{z_{{\mathrm{2}}}} )\to {}\<[E]%
\\
\>[3]{}\hsindent{2}{}\<[5]%
\>[5]{}\klet\ \packbox \Varid{z'},\Varid{x}\mathrel{=}\dsterm{\ottmv{ \ottmv{z_{{\mathrm{1}}}} }}{  \constraintfont{ \ottnt{Q_{{\mathrm{1}}}} }  ; \Gamma_{{\mathrm{1}}}  \vdash  \ottnt{e_{{\mathrm{1}}}} \mathbin{:}\exists\; \overline{\ottmv{a} } . \tau_{{\mathrm{1}}} \RLolly  \constraintfont{ \ottnt{Q} }  }\;\mathbf{in}{}\<[E]%
\\
\>[3]{}\hsindent{2}{}\<[5]%
\>[5]{} \klet_  \multiplicityfont{ \ottsym{1} }  \; \ottmv{z_{{\mathrm{2}}}} '\mathrel{=}( \ottmv{z_{{\mathrm{2}}}} ,\Varid{z'})\;\mathbf{in}{}\<[E]%
\\
\>[3]{}\hsindent{2}{}\<[5]%
\>[5]{}\dsterm{\ottmv{ \ottmv{z_{{\mathrm{2}}}} '}}{  \constraintfont{ \ottnt{Q_{{\mathrm{2}}}} }   \qtensor   \constraintfont{ \ottnt{Q} }  ; \Gamma_{{\mathrm{2}}} ,\Varid{x}\mathop{:_{1}} \tau_{{\mathrm{1}}}  \vdash  \ottnt{e_{{\mathrm{2}}}} \mathbin{:} \tau }\mskip1.5mu\}{}\<[E]%
\\
\>[B]{}\dsterm{\ottmv{\Varid{z}}}{  \constraintfont{ \ottnt{Q_{{\mathrm{1}}}} }   \qtensor   \constraintfont{ \ottnt{Q_{{\mathrm{2}}}} }  ; \Gamma_{{\mathrm{1}}} \mathbin{+} \Gamma_{{\mathrm{2}}}  \vdash  \klet_  \multiplicityfont{ \ottsym{1} }  \;\Varid{x}\mathrel{=} \ottnt{e_{{\mathrm{1}}}} \;\mathbf{in}\; \ottnt{e_{{\mathrm{2}}}} \mathbin{:} \tau }\mathrel{=}{}\<[E]%
\\
\>[B]{}\hsindent{3}{}\<[3]%
\>[3]{} \kcase_  \multiplicityfont{ \ottsym{1} }  \;\Varid{z}\;\mathbf{of}\;\{\mskip1.5mu ( \ottmv{z_{{\mathrm{1}}}} , \ottmv{z_{{\mathrm{2}}}} )\to {}\<[E]%
\\
\>[3]{}\hsindent{2}{}\<[5]%
\>[5]{} \klet_  \multiplicityfont{ \ottsym{1} }  \;\Varid{x}\mathbin{:}\dsevidence{  \constraintfont{ \ottnt{Q} }  }\to_{1} \tau_{{\mathrm{1}}} \mathrel{=}\dsterm{\ottmv{ \ottmv{z_{{\mathrm{1}}}} }}{  \constraintfont{ \ottnt{Q_{{\mathrm{1}}}} }   \qtensor   \constraintfont{ \ottnt{Q} }  ; \Gamma_{{\mathrm{1}}}  \vdash  \ottnt{e_{{\mathrm{1}}}} \mathbin{:} \tau_{{\mathrm{1}}} }{}\<[E]%
\\
\>[3]{}\hsindent{2}{}\<[5]%
\>[5]{}\mathbf{in}\;\dsterm{\ottmv{ \ottmv{z_{{\mathrm{2}}}} }}{  \constraintfont{ \ottnt{Q_{{\mathrm{2}}}} }  ; \Gamma_{{\mathrm{2}}} ,\Varid{x}\mathop{:_{1}} \tau_{{\mathrm{1}}}  \vdash  \ottnt{e_{{\mathrm{2}}}} \mathbin{:} \tau }\mskip1.5mu\}{}\<[E]%
\\
\>[B]{}\dsterm{\ottmv{\Varid{z}}}{  \multiplicityfont{ \omega }  \mathbin{⋅}  \constraintfont{ \ottnt{Q_{{\mathrm{1}}}} }   \qtensor   \constraintfont{ \ottnt{Q_{{\mathrm{2}}}} }  ;  \multiplicityfont{ \omega }  \mathbin{⋅} \Gamma_{{\mathrm{1}}} \mathbin{+} \Gamma_{{\mathrm{2}}}  \vdash  \klet_  \multiplicityfont{ \omega }  \;\Varid{x}\mathrel{=} \ottnt{e_{{\mathrm{1}}}} \;\mathbf{in}\; \ottnt{e_{{\mathrm{2}}}} \mathbin{:} \tau }\mathrel{=}{}\<[E]%
\\
\>[B]{}\hsindent{3}{}\<[3]%
\>[3]{} \kcase_  \multiplicityfont{ \ottsym{1} }  \;\Varid{z}\;\mathbf{of}\;\{\mskip1.5mu (\underline{ \ottmv{z_{{\mathrm{1}}}} }, \ottmv{z_{{\mathrm{2}}}} )\to {}\<[E]%
\\
\>[3]{}\hsindent{2}{}\<[5]%
\>[5]{} \klet_  \multiplicityfont{ \omega }  \;\Varid{x}\mathbin{:}\dsevidence{  \constraintfont{ \ottnt{Q} }  }\to_{1} \tau_{{\mathrm{1}}} \mathrel{=}\dsterm{\ottmv{ \ottmv{z_{{\mathrm{1}}}} }}{  \constraintfont{ \ottnt{Q_{{\mathrm{1}}}} }   \qtensor   \constraintfont{ \ottnt{Q} }  ; \Gamma_{{\mathrm{1}}}  \vdash  \ottnt{e_{{\mathrm{1}}}} \mathbin{:} \tau_{{\mathrm{1}}} }\;\mathbf{in}{}\<[E]%
\\
\>[3]{}\hsindent{2}{}\<[5]%
\>[5]{}\dsterm{\ottmv{ \ottmv{z_{{\mathrm{2}}}} }}{  \constraintfont{ \ottnt{Q_{{\mathrm{2}}}} }  ; \Gamma_{{\mathrm{2}}} ,\Varid{x}\mathop{:_{  \multiplicityfont{ \omega }  }} \tau_{{\mathrm{1}}}  \vdash  \ottnt{e_{{\mathrm{2}}}} \mathbin{:} \tau }\mskip1.5mu\}{}\<[E]%
\\
\>[B]{}\dsterm{\ottmv{\Varid{z}}}{  \constraintfont{ \ottnt{Q_{{\mathrm{1}}}} }   \qtensor   \constraintfont{ \ottnt{Q_{{\mathrm{2}}}} }  ; \Gamma_{{\mathrm{1}}} \mathbin{+} \Gamma_{{\mathrm{2}}}  \vdash  \klet_  \multiplicityfont{ \ottsym{1} }  \;\Varid{x}\mathbin{:}\forall\; \overline{\ottmv{a} } .  \constraintfont{ \ottnt{Q} }  \Lolly  \tau_{{\mathrm{1}}} \mathrel{=} \ottnt{e_{{\mathrm{1}}}} \;\mathbf{in}\; \ottnt{e_{{\mathrm{2}}}} \mathbin{:} \tau }\mathrel{=}{}\<[E]%
\\
\>[B]{}\hsindent{3}{}\<[3]%
\>[3]{} \kcase_  \multiplicityfont{ \ottsym{1} }  \;\Varid{z}\;\mathbf{of}\;\{\mskip1.5mu ( \ottmv{z_{{\mathrm{1}}}} , \ottmv{z_{{\mathrm{2}}}} )\to {}\<[E]%
\\
\>[3]{}\hsindent{2}{}\<[5]%
\>[5]{} \klet_  \multiplicityfont{ \ottsym{1} }  \;\Varid{x}\mathbin{:}\forall\; \overline{\ottmv{a} } .\dsevidence{  \constraintfont{ \ottnt{Q} }  }\to_{1} \tau_{{\mathrm{1}}} \mathrel{=}\dsterm{\ottmv{ \ottmv{z_{{\mathrm{1}}}} }}{  \constraintfont{ \ottnt{Q_{{\mathrm{1}}}} }   \qtensor   \constraintfont{ \ottnt{Q} }  ; \Gamma_{{\mathrm{1}}}  \vdash  \ottnt{e_{{\mathrm{1}}}} \mathbin{:} \tau_{{\mathrm{1}}} }\;\mathbf{in}{}\<[E]%
\\
\>[3]{}\hsindent{2}{}\<[5]%
\>[5]{}\dsterm{\ottmv{ \ottmv{z_{{\mathrm{2}}}} }}{  \constraintfont{ \ottnt{Q_{{\mathrm{2}}}} }  ; \Gamma_{{\mathrm{2}}} ,\Varid{x}\mathop{:_{1}}\forall\; \overline{\ottmv{a} } .  \constraintfont{ \ottnt{Q} }  \Lolly  \tau_{{\mathrm{1}}}  \vdash  \ottnt{e_{{\mathrm{2}}}} \mathbin{:} \tau }\mskip1.5mu\}{}\<[E]%
\\
\>[B]{}\dsterm{\ottmv{\Varid{z}}}{  \multiplicityfont{ \omega }  \mathbin{⋅}  \constraintfont{ \ottnt{Q_{{\mathrm{1}}}} }   \qtensor   \constraintfont{ \ottnt{Q_{{\mathrm{2}}}} }  ;  \multiplicityfont{ \omega }  \mathbin{⋅} \Gamma_{{\mathrm{1}}} \mathbin{+} \Gamma_{{\mathrm{2}}}  \vdash  \klet_  \multiplicityfont{ \omega }  \;\Varid{x}\mathbin{:}\forall\; \overline{\ottmv{a} } .  \constraintfont{ \ottnt{Q} }  \Lolly  \tau_{{\mathrm{1}}} \mathrel{=} \ottnt{e_{{\mathrm{1}}}} \;\mathbf{in}\; \ottnt{e_{{\mathrm{2}}}} \mathbin{:} \tau }\mathrel{=}{}\<[E]%
\\
\>[B]{}\hsindent{3}{}\<[3]%
\>[3]{} \kcase_  \multiplicityfont{ \ottsym{1} }  \;\Varid{z}\;\mathbf{of}\;\{\mskip1.5mu (\underline{ \ottmv{z_{{\mathrm{1}}}} }, \ottmv{z_{{\mathrm{2}}}} )\to {}\<[E]%
\\
\>[3]{}\hsindent{2}{}\<[5]%
\>[5]{} \klet_  \multiplicityfont{ \omega }  \;\Varid{x}\mathbin{:}\forall\; \overline{\ottmv{a} } .\dsevidence{  \constraintfont{ \ottnt{Q} }  }\to_{1} \tau_{{\mathrm{1}}} \mathrel{=}\dsterm{\ottmv{ \ottmv{z_{{\mathrm{1}}}} }}{  \constraintfont{ \ottnt{Q_{{\mathrm{1}}}} }   \qtensor   \constraintfont{ \ottnt{Q} }  ; \Gamma_{{\mathrm{1}}}  \vdash  \ottnt{e_{{\mathrm{1}}}} \mathbin{:} \tau_{{\mathrm{1}}} }\;\mathbf{in}{}\<[E]%
\\
\>[3]{}\hsindent{2}{}\<[5]%
\>[5]{}\dsterm{\ottmv{ \ottmv{z_{{\mathrm{2}}}} }}{  \constraintfont{ \ottnt{Q_{{\mathrm{2}}}} }  ; \Gamma_{{\mathrm{2}}} ,\Varid{x}\mathop{:_{  \multiplicityfont{ \omega }  }} \tau_{{\mathrm{1}}}  \vdash  \ottnt{e_{{\mathrm{2}}}} \mathbin{:} \tau }\mskip1.5mu\}{}\<[E]%
\\
\>[B]{}\dsterm{\ottmv{\Varid{z}}}{  \multiplicityfont{ \omega }  \mathbin{⋅}  \constraintfont{ \ottnt{Q_{{\mathrm{1}}}} }   \qtensor   \constraintfont{ \ottnt{Q_{{\mathrm{2}}}} }  ;  \multiplicityfont{ \omega }  \mathbin{⋅} \Gamma_{{\mathrm{1}}} \mathbin{+} \Gamma_{{\mathrm{2}}}  \vdash  \kcase_  \multiplicityfont{ \ottsym{1} }  \;\Varid{e}\;\mathbf{of}\;\{\mskip1.5mu \overline{\ottmv{K}_i\ \overline{\ottmv{x}_i } \to \ottnt{e}_i }\mskip1.5mu\}\mathbin{:} \tau }{}\<[67]%
\>[67]{}\mathrel{=}{}\<[67E]%
\\
\>[B]{}\hsindent{3}{}\<[3]%
\>[3]{} \kcase_  \multiplicityfont{ \ottsym{1} }  \;\Varid{z}\;\mathbf{of}\;\{\mskip1.5mu (\underline{ \ottmv{z_{{\mathrm{1}}}} }, \ottmv{z_{{\mathrm{2}}}} )\to {}\<[E]%
\\
\>[3]{}\hsindent{2}{}\<[5]%
\>[5]{} \kcase_  \multiplicityfont{ \ottsym{1} }  \;(\dsterm{\ottmv{ \ottmv{z_{{\mathrm{1}}}} }}{  \constraintfont{ \ottnt{Q_{{\mathrm{1}}}} }  ; \Gamma_{{\mathrm{1}}}  \vdash \Varid{e}\mathbin{:}\Conid{T}\; \overline{\tau} })\;\mathbf{of}{}\<[E]%
\\
\>[5]{}\hsindent{2}{}\<[7]%
\>[7]{}\{\mskip1.5mu \overline{\Conid{K}\; \overline{\ottmv{x} }_{\ottmv{i}} \to \dsterm{\ottmv{ \ottmv{z_{{\mathrm{2}}}} }}{  \constraintfont{ \ottnt{Q_{{\mathrm{2}}}} }  ; \Gamma_{{\mathrm{2}}} ,\overline{ \ottmv{x_{\ottmv{i}}} \mathop{:_{(  \multiplicityfont{ \pi }  \mathbin{⋅}  \multiplicityfont{ \pi_{\ottmv{i}} }  )}} \upsilon_{\ottmv{i}} [ \overline{\tau} / \overline{\ottmv{a} } ]} \vdash  \ottnt{e_{\ottmv{i}}} \mathbin{:} \tau }}\mskip1.5mu\}\mskip1.5mu\}{}\<[E]%
\\
\>[B]{}\dsterm{\ottmv{\Varid{z}}}{  \constraintfont{ \ottnt{Q_{{\mathrm{1}}}} }   \qtensor   \constraintfont{ \ottnt{Q_{{\mathrm{2}}}} }  ; \Gamma_{{\mathrm{1}}} \mathbin{+} \Gamma_{{\mathrm{2}}}  \vdash  \kcase_  \multiplicityfont{ \omega }  \;\Varid{e}\;\mathbf{of}\;\{\mskip1.5mu \overline{\ottmv{K}_i\ \overline{\ottmv{x}_i } \to \ottnt{e}_i }\mskip1.5mu\}\mathbin{:} \tau }{}\<[53]%
\>[53]{}\mathrel{=}{}\<[53E]%
\\
\>[B]{}\hsindent{3}{}\<[3]%
\>[3]{} \kcase_  \multiplicityfont{ \ottsym{1} }  \;\Varid{z}\;\mathbf{of}\;\{\mskip1.5mu ( \ottmv{z_{{\mathrm{1}}}} , \ottmv{z_{{\mathrm{2}}}} )\to {}\<[E]%
\\
\>[3]{}\hsindent{2}{}\<[5]%
\>[5]{} \kcase_  \multiplicityfont{ \omega }  \;(\dsterm{\ottmv{ \ottmv{z_{{\mathrm{1}}}} }}{  \constraintfont{ \ottnt{Q_{{\mathrm{1}}}} }  ; \Gamma_{{\mathrm{1}}}  \vdash \Varid{e}\mathbin{:}\Conid{T}\; \overline{\tau} })\;\mathbf{of}{}\<[E]%
\\
\>[5]{}\hsindent{2}{}\<[7]%
\>[7]{}\{\mskip1.5mu \overline{\Conid{K}\; \overline{\ottmv{x} }_{\ottmv{i}} \to \dsterm{\ottmv{ \ottmv{z_{{\mathrm{2}}}} }}{  \constraintfont{ \ottnt{Q_{{\mathrm{2}}}} }  ; \Gamma_{{\mathrm{2}}} ,\overline{ \ottmv{x_{\ottmv{i}}} \mathop{:_{(  \multiplicityfont{ \pi }  \mathbin{⋅}  \multiplicityfont{ \pi_{\ottmv{i}} }  )}} \upsilon_{\ottmv{i}} [ \overline{\tau} / \overline{\ottmv{a} } ]} \vdash  \ottnt{e_{\ottmv{i}}} \mathbin{:} \tau }}\mskip1.5mu\}\mskip1.5mu\}{}\<[E]%
\ColumnHook
\end{hscode}\resethooks
\end{minipage}
\right.
$$

  \caption{Desugaring}
  \label{fig:full:desugaring}
\end{figure}

\section{Proofs}
\label{sec:appendix:proofs-lemmas}

\setcounter{subsection}{4}
\subsection{Lemmas on the qualified type system}
\label{sec:appendix:qual-type-syst}

\begin{proof}[Proof of \cref{lem:q:scaling}]
  Let us prove separately the cases $  \multiplicityfont{ \pi }  =  \multiplicityfont{ \ottsym{1} }  $ and
  $  \multiplicityfont{ \pi }  =  \multiplicityfont{ \omega }  $.
  \begin{itemize}
  \item When $  \multiplicityfont{ \pi }  =  \multiplicityfont{ \ottsym{1} }  $, then $  \constraintfont{   \multiplicityfont{ \pi }  \scale \ottnt{Q}  }  =  \constraintfont{ \ottnt{Q} }  $ for all
    $  \constraintfont{ \ottnt{Q} }  $, hence $ \constraintfont{ \ottnt{Q_{{\mathrm{1}}}} }   \Vdash   \constraintfont{ \ottnt{Q_{{\mathrm{2}}}} } $ implies $ \constraintfont{   \multiplicityfont{ \pi }  \scale \ottnt{Q_{{\mathrm{1}}}}  }   \Vdash   \constraintfont{   \multiplicityfont{ \pi }  \scale \ottnt{Q_{{\mathrm{2}}}}  } $.
  \item For the case $  \multiplicityfont{ \pi }  =  \multiplicityfont{ \omega }  $, let us consider a few
    properties. First note that, for any $  \constraintfont{ \ottnt{Q} }  $,
    $  \constraintfont{   \multiplicityfont{ \omega }  \scale \ottnt{Q}  }  =  \constraintfont{   \multiplicityfont{ \omega }  \scale \ottnt{Q}   \qtensor    \multiplicityfont{ \omega }  \scale \ottnt{Q}  }  $. From which it follows,
    using the laws of \cref{def:entailment-relation}, that
    $ \constraintfont{   \multiplicityfont{ \omega }  \scale \ottnt{Q}  }   \Vdash   \constraintfont{ \ottnt{Q_{{\mathrm{1}}}}  \qtensor  \ottnt{Q_{{\mathrm{2}}}} } $ if and only if $ \constraintfont{   \multiplicityfont{ \omega }  \scale \ottnt{Q}  }   \Vdash   \constraintfont{ \ottnt{Q_{{\mathrm{1}}}} } $ and
    $ \constraintfont{   \multiplicityfont{ \omega }  \scale \ottnt{Q}  }   \Vdash   \constraintfont{ \ottnt{Q_{{\mathrm{2}}}} } $.

    This means that to verify that
    $ \constraintfont{   \multiplicityfont{ \omega }  \scale \ottnt{Q_{{\mathrm{1}}}}  }   \Vdash   \constraintfont{   \multiplicityfont{ \omega }  \scale \ottnt{Q_{{\mathrm{2}}}}  } $, it is equivalent to prove that $ \constraintfont{   \multiplicityfont{ \omega }  \scale \ottnt{Q_{{\mathrm{1}}}}  }   \Vdash   \constraintfont{   \multiplicityfont{ \omega }  \scale \ottnt{q_{{\mathrm{2}}}}  } $ for each
    $  \constraintfont{ \ottnt{q_{{\mathrm{2}}}} }  ∈  \constraintfont{ \ottnt{U} }  $ (letting $  \constraintfont{   \multiplicityfont{ \omega }  \scale \ottnt{Q_{{\mathrm{2}}}}  }  =  \constraintfont{ \ottsym{(}  \ottnt{U}  \ottsym{,}  \emptyset  \ottsym{)} }  $). In turn, by \cref{def:entailment-relation} and
    observing that $  \constraintfont{   \multiplicityfont{ \omega }  \scale \ottsym{(}    \multiplicityfont{ \omega }  \scale \ottnt{Q_{{\mathrm{1}}}}   \ottsym{)}  }   =   \constraintfont{ \ottnt{Q_{{\mathrm{1}}}} }  $, this is
    equivalent to $ \constraintfont{   \multiplicityfont{ \omega }  \scale \ottnt{Q_{{\mathrm{1}}}}  }   \Vdash   \constraintfont{   \multiplicityfont{ \ottsym{1} }  \scale \ottnt{q_{{\mathrm{2}}}}  } $.

    This follows from the fact that $ \constraintfont{ \ottnt{Q_{{\mathrm{1}}}} }   \Vdash   \constraintfont{ \ottnt{Q_{{\mathrm{2}}}} } $ implies
    $ \constraintfont{   \multiplicityfont{ \omega }  \scale \ottnt{Q_{{\mathrm{1}}}}  }   \Vdash   \constraintfont{ \ottnt{Q_{{\mathrm{2}}}} } $ (\cref{def:entailment-relation}) and the
    property, shown above, that $ \constraintfont{   \multiplicityfont{ \omega }  \scale \ottnt{Q_{{\mathrm{1}}}}  }   \Vdash   \constraintfont{ \ottnt{Q_{{\mathrm{2}}}}  \qtensor  \ottnt{Q'_{{\mathrm{2}}}} } $ if and
    only if $ \constraintfont{   \multiplicityfont{ \omega }  \scale \ottnt{Q_{{\mathrm{1}}}}  }   \Vdash   \constraintfont{ \ottnt{Q_{{\mathrm{2}}}} } $ and $ \constraintfont{   \multiplicityfont{ \omega }  \scale \ottnt{Q_{{\mathrm{1}}}}  }   \Vdash   \constraintfont{ \ottnt{Q'_{{\mathrm{2}}}} } $.
  \end{itemize}
\end{proof}

\begin{proof}[Proof of \cref{lem:q:scaling-inversion}]
  Let us prove separately the cases $  \multiplicityfont{ \pi }  =  \multiplicityfont{ \ottsym{1} }  $ and
  $  \multiplicityfont{ \pi }  =  \multiplicityfont{ \omega }  $.
  \begin{itemize}
  \item When $  \multiplicityfont{ \pi }  =  \multiplicityfont{ \ottsym{1} }  $, then $  \constraintfont{   \multiplicityfont{ \pi }  \scale \ottnt{Q}  }  =  \constraintfont{ \ottnt{Q} }  $ for all
    $  \constraintfont{ \ottnt{Q} }  $, in particular $ \constraintfont{ \ottnt{Q_{{\mathrm{1}}}} }   \Vdash   \constraintfont{   \multiplicityfont{ \ottsym{1} }  \scale \ottnt{Q_{{\mathrm{2}}}}  } $ implies that
    $  \constraintfont{ \ottnt{Q_{{\mathrm{1}}}} }  =  \constraintfont{   \multiplicityfont{ \ottsym{1} }  \scale \ottnt{Q_{{\mathrm{1}}}}  }  $ with $ \constraintfont{ \ottnt{Q_{{\mathrm{1}}}} }   \Vdash   \constraintfont{ \ottnt{Q_{{\mathrm{2}}}} } $.
  \item When $  \multiplicityfont{ \pi }  =  \multiplicityfont{ \omega }  $, then let us first remark, letting
    $  \constraintfont{   \multiplicityfont{ \omega }  \scale \ottnt{Q_{{\mathrm{2}}}}  }  =  \constraintfont{ \ottsym{(}  \ottnt{U}  \ottsym{,}  \emptyset  \ottsym{)} }  $ that, by a straightforward
    induction on the cardinality of $  \constraintfont{ \ottnt{U} }  $ it is sufficient to
    prove that the result holds for atomic constraints.

    That is, we need to prove that if $ \constraintfont{ \ottnt{Q_{{\mathrm{1}}}} }   \Vdash   \constraintfont{   \multiplicityfont{ \omega }  \scale \ottnt{q_{{\mathrm{2}}}}  } $ then
    there exists $  \constraintfont{ \ottnt{Q'} }  $ such that $  \constraintfont{ \ottnt{Q_{{\mathrm{1}}}} }  =  \constraintfont{   \multiplicityfont{ \omega }  \scale \ottnt{Q'}  }  $ and
    $ \constraintfont{ \ottnt{Q'} }   \Vdash   \constraintfont{   \multiplicityfont{ \rho }  \scale \ottnt{q_{{\mathrm{2}}}}  } $ (for all $  \multiplicityfont{ \rho }  $).

    This result, in turns, holds by \cref{def:entailment-relation}.
  \end{itemize}
\end{proof}

\begin{lemma}\label{lem:simples:monoid-action}
  The following equality holds $  \constraintfont{   \multiplicityfont{ \pi }  \scale \ottsym{(}    \multiplicityfont{ \rho }  \scale \ottnt{Q}   \ottsym{)}  }  =  \constraintfont{   \multiplicityfont{ \ottsym{(}   \pi {⋅} \rho   \ottsym{)} }  \scale \ottnt{Q}  }  $
\end{lemma}
\begin{proof}
  Immediate by case analysis of $  \multiplicityfont{ \pi }  $ and $  \multiplicityfont{ \rho }  $.
\end{proof}

\setcounter{subsection}{5}
\subsection{Lemmas on constraint inference}
\label{sec:appendix:constraint-inference}

\begin{lemma}[$ \constraintfont{\mathcal{D} } $ discarding]
  \label{lem:dup-weakening}
  The two following, equivalent, properties hold
  \begin{itemize}
  \item if $  \constraintfont{ Q_{\mathcal{D} } }   \in  \constraintfont{\mathcal{D} } $, then $ \constraintfont{ Q_{\mathcal{D} } }   \Vdash   \constraintfont{  \mathbf{\varepsilon}  } $
  \item if $ \constraintfont{ \ottnt{Q_{{\mathrm{1}}}} }   \Vdash   \constraintfont{ \ottnt{Q_{{\mathrm{2}}}} } $ and $  \constraintfont{ Q_{\mathcal{D} } }   \in  \constraintfont{\mathcal{D} } $, then $ \constraintfont{ \ottnt{Q_{{\mathrm{1}}}}  \qtensor  Q_{\mathcal{D} } }   \Vdash   \constraintfont{ \ottnt{Q_{{\mathrm{2}}}} } $
  \end{itemize}
\end{lemma}
\begin{proof}
  The two properties are equivalent
  \begin{itemize}
  \item using $  \constraintfont{ \constraintfont{C_{{\mathrm{1}}}} }  =  \constraintfont{ \constraintfont{C_{{\mathrm{2}}}} }  =  \constraintfont{  \mathbf{\varepsilon}  }  $, the latter implies the
    former
  \item The former implies the latter by tensoring together
    $ \constraintfont{ \ottnt{Q_{{\mathrm{1}}}} }   \Vdash   \constraintfont{ \ottnt{Q_{{\mathrm{2}}}} } $ and $ \constraintfont{ Q_{\mathcal{D} } }   \Vdash   \constraintfont{  \mathbf{\varepsilon}  } $ following the rules
    of~\cref{fig:entailment-relation}.
  \end{itemize}
  Let $  \constraintfont{ Q_{\mathcal{D} } }   \in  \constraintfont{\mathcal{D} } $, then for each $  \constraintfont{   \multiplicityfont{ \ottsym{1} }  \scale \ottnt{q}  }   \in   \constraintfont{ Q_{\mathcal{D} } }  $,
  $ \constraintfont{   \multiplicityfont{ \ottsym{1} }  \scale \ottnt{q}  }   \Vdash   \constraintfont{  \mathbf{\varepsilon}  } $ (per~\cref{fig:entailment-relation}), tensoring
  each of these entailments together and with the $  \constraintfont{   \multiplicityfont{ \omega }  \scale \ottnt{q}  }   \in   \constraintfont{ Q_{\mathcal{D} } }  $, we get $ \constraintfont{ Q_{\mathcal{D} } }   \Vdash   \constraintfont{  \mathbf{\varepsilon}  } $.
\end{proof}

\begin{lemma}[$ \constraintfont{\mathcal{D} } $ duplication]
  \label{lem:dup-contraction}
  The two following, equivalent, properties hold
  \begin{itemize}
  \item if $  \constraintfont{ Q_{\mathcal{D} } }   \in  \constraintfont{\mathcal{D} } $, then $ \constraintfont{ Q_{\mathcal{D} } }   \Vdash   \constraintfont{ Q_{\mathcal{D} }  \qtensor  Q_{\mathcal{D} } } $
  \item if $ \constraintfont{ \ottnt{Q_{{\mathrm{1}}}}  \qtensor  Q_{\mathcal{D} } }   \Vdash   \constraintfont{ \ottnt{Q_{{\mathrm{2}}}} } $, $ \constraintfont{ Q_{\mathcal{D} }  \qtensor  \ottnt{Q'_{{\mathrm{1}}}} }   \Vdash   \constraintfont{ \ottnt{Q'_{{\mathrm{2}}}} } $, and
    $  \constraintfont{ Q_{\mathcal{D} } }   \in  \constraintfont{\mathcal{D} } $, then $ \constraintfont{ \ottnt{Q_{{\mathrm{1}}}}  \qtensor  Q_{\mathcal{D} }  \qtensor  \ottnt{Q'_{{\mathrm{1}}}} }   \Vdash   \constraintfont{ \ottnt{Q_{{\mathrm{2}}}}  \qtensor  \ottnt{Q'_{{\mathrm{2}}}} } $
  \end{itemize}
\end{lemma}
\begin{proof}
  The proof is similar to that of~\cref{lem:dup-weakening}
\end{proof}

\begin{lemma}[Transitive tensor decomposition]
  \label{lem:transitive-tensor-decomposition}
  if $  \constraintfont{ Q_{\mathcal{D} } }   \in  \constraintfont{\mathcal{D} } $ and $ \constraintfont{ \ottnt{Q} }   \Vdash   \constraintfont{ \ottnt{Q_{{\mathrm{1}}}}  \qtensor  Q_{\mathcal{D} }  \qtensor  \ottnt{Q_{{\mathrm{2}}}} } $, then there
  exists $\ottnt{Q'_{{\mathrm{1}}}}$, $Q_{\mathcal{D} }'$, $\ottnt{Q'_{{\mathrm{2}}}}$, such that
  \begin{itemize}
  \item $ \constraintfont{ \ottnt{Q'_{{\mathrm{1}}}}  \qtensor  Q_{\mathcal{D} }' }   \Vdash   \constraintfont{ \ottnt{Q_{{\mathrm{1}}}} } $
  \item $ \constraintfont{ Q_{\mathcal{D} }'  \qtensor  \ottnt{Q'_{{\mathrm{2}}}} }   \Vdash   \constraintfont{ \ottnt{Q_{{\mathrm{2}}}} } $
  \item $ \constraintfont{ Q_{\mathcal{D} }' }   \Vdash   \constraintfont{ Q_{\mathcal{D} } } $
  \end{itemize}
\end{lemma}
\begin{proof}
  By the inversion-of-tensor rule from~\cref{fig:entailment-relation},
  we get that ther exists $  \constraintfont{ \ottnt{Q'} }  $, $  \constraintfont{ Q_{\mathcal{D} }' }  $, and $  \constraintfont{ \ottnt{Q'_{{\mathrm{2}}}} }  $,
  such that
  \begin{itemize}
  \item $  \constraintfont{ \ottnt{Q} }   =   \constraintfont{ \ottnt{Q'}  \qtensor  Q_{\mathcal{D} }'  \qtensor  \ottnt{Q'_{{\mathrm{2}}}} }  $
  \item $ \constraintfont{ Q_{\mathcal{D} }'  \qtensor  \ottnt{Q'_{{\mathrm{2}}}} }   \Vdash   \constraintfont{ \ottnt{Q_{{\mathrm{2}}}} } $
  \item $ \constraintfont{ \ottnt{Q'}  \qtensor  Q_{\mathcal{D} }' }   \Vdash   \constraintfont{ \ottnt{Q_{{\mathrm{1}}}}  \qtensor  Q_{\mathcal{D} } } $. The inversion-of-tensor rule
    applies further to this case: there exists $  \constraintfont{ \ottnt{Q'_{{\mathrm{1}}}} }  $,
    $  \constraintfont{ Q_{\mathcal{D} }'' }  $, and $  \constraintfont{ \ottnt{Q''} }  $ such that
    \begin{itemize}
    \item $  \constraintfont{ \ottnt{Q'}  \qtensor  Q_{\mathcal{D} }' }   =   \constraintfont{ \ottnt{Q'_{{\mathrm{1}}}}  \qtensor  Q_{\mathcal{D} }''  \qtensor  \ottnt{Q''} }  $
    \item $ \constraintfont{ \ottnt{Q'_{{\mathrm{1}}}}  \qtensor  Q_{\mathcal{D} }'' }   \Vdash   \constraintfont{ \ottnt{Q'_{{\mathrm{1}}}} } $
    \item $ \constraintfont{ Q_{\mathcal{D} }''  \qtensor  \ottnt{Q''} }   \Vdash   \constraintfont{ Q_{\mathcal{D} }' } $
    \end{itemize}
  \end{itemize}
  Observe the following
  \begin{itemize}
  \item $  \constraintfont{ Q_{\mathcal{D} }''  \qtensor  \ottnt{Q''} }   \in  \constraintfont{\mathcal{D} } $ (because of the requirements
    of~\cref{fig:entailment-relation}). Let's write
    $  \constraintfont{ Q_{\mathcal{D} }''' }   =   \constraintfont{ Q_{\mathcal{D} }''  \qtensor  \ottnt{Q''} }  $.
  \item $  \constraintfont{ \ottnt{Q} }   =   \constraintfont{ \ottnt{Q'_{{\mathrm{1}}}}  \qtensor  Q_{\mathcal{D} }'''  \qtensor  \ottnt{Q'_{{\mathrm{2}}}} }  $
  \item We have
    \begin{itemize}
    \item $ \constraintfont{ \ottnt{Q'_{{\mathrm{1}}}}  \qtensor  Q_{\mathcal{D} }''' }   \Vdash   \constraintfont{ \ottnt{Q_{{\mathrm{1}}}} } $. Because
      \begin{itemize}
      \item $  \constraintfont{ \ottnt{Q'_{{\mathrm{1}}}}  \qtensor  Q_{\mathcal{D} }''' }   =   \constraintfont{ \ottnt{Q'}  \qtensor  Q_{\mathcal{D} }' }  $
      \item therefore $ \constraintfont{ \ottnt{Q'_{{\mathrm{1}}}}  \qtensor  Q_{\mathcal{D} }''' }   \Vdash   \constraintfont{ \ottnt{Q_{{\mathrm{1}}}}  \qtensor  Q_{\mathcal{D} } } $
      \item by~\cref{lem:dup-weakening}, and by transitivity of the
        entailment relation (per~\cref{fig:entailment-relation}), we
        can drop $  \constraintfont{ Q_{\mathcal{D} } }  $ from the conclusion.
      \end{itemize}
    \item $ \constraintfont{ Q_{\mathcal{D} }''' }   \Vdash   \constraintfont{ Q_{\mathcal{D} }' } $ (by definition of $  \constraintfont{ Q_{\mathcal{D} }''' }  $)
    \item $ \constraintfont{ Q_{\mathcal{D} }'''  \qtensor  \ottnt{Q'_{{\mathrm{2}}}} }   \Vdash   \constraintfont{ \ottnt{Q_{{\mathrm{2}}}} } $
      \begin{itemize}
      \item By tensoring together, per~\cref{fig:entailment-relation},
        $ \constraintfont{ Q_{\mathcal{D} }''' }   \Vdash   \constraintfont{ Q_{\mathcal{D} }' } $ and $ \constraintfont{ \ottnt{Q'_{{\mathrm{2}}}} }   \Vdash   \constraintfont{ \ottnt{Q'_{{\mathrm{2}}}} } $, we get
        $ \constraintfont{ Q_{\mathcal{D} }'''  \qtensor  \ottnt{Q'_{{\mathrm{2}}}} }   \Vdash   \constraintfont{ Q_{\mathcal{D} }'  \qtensor  \ottnt{Q'_{{\mathrm{2}}}} } $
      \item then we get the desired result by transitivity of the
        entailment relation.
      \end{itemize}
    \end{itemize}
  \end{itemize}

  This concludes the proof
\end{proof}

\begin{proof}[Proof of \cref{lem:inversion}]
  The cases $ \constraintfont{ \ottnt{Q} }   \vdash   \constraintfont{ \constraintfont{C_{{\mathrm{1}}}}  \aand  \constraintfont{C_{{\mathrm{2}}}} } $ and $ \constraintfont{ \ottnt{Q} }   \vdash   \constraintfont{   \multiplicityfont{ \pi }  \scale( \ottnt{Q_{{\mathrm{2}}}}  \Lolly  \constraintfont{C} )  } $ are
  straightforward by induction, so let us prove them first
  \begin{itemize}
  \item Suppose $ \constraintfont{ \ottnt{Q} }   \vdash   \constraintfont{ \constraintfont{C_{{\mathrm{1}}}}  \aand  \constraintfont{C_{{\mathrm{2}}}} } $, then there are two cases
    \begin{itemize}
    \item either it is the conclusion of a \rref*{C-With} rule,
      and the result is immediate.
    \item or it is the result of a \rref*{C-Dom} rule, then, there
      exists $  \constraintfont{ \ottnt{Q'} }  $, such that $ \constraintfont{ \ottnt{Q} }   \Vdash   \constraintfont{ \ottnt{Q'} } $ and
      $ \constraintfont{ \ottnt{Q'} }   \vdash   \constraintfont{ \constraintfont{C_{{\mathrm{1}}}}  \aand  \constraintfont{C_{{\mathrm{2}}}} } $.

      By induction $ \constraintfont{ \ottnt{Q'} }   \vdash   \constraintfont{ \constraintfont{C_{{\mathrm{2}}}} } $ and $ \constraintfont{ \ottnt{Q'} }   \vdash   \constraintfont{ \constraintfont{C_{{\mathrm{2}}}} } $, applying
      \rref*{C-Dom} to both gives $ \constraintfont{ \ottnt{Q} }   \vdash   \constraintfont{ \constraintfont{C_{{\mathrm{2}}}} } $ and $ \constraintfont{ \ottnt{Q} }   \vdash   \constraintfont{ \constraintfont{C_{{\mathrm{2}}}} } $ as
      required.
    \end{itemize}

  \item Suppose $ \constraintfont{ \ottnt{Q} }   \vdash   \constraintfont{   \multiplicityfont{ \pi }  \scale( \ottnt{Q_{{\mathrm{2}}}}  \Lolly  \constraintfont{C} )  } $, then there are two cases
    \begin{itemize}
    \item either it is the conclusion of a \rref*{C-Impl} rule,
      and the result is immediate.
    \item or it is the result of a \rref*{C-Dom} rule, then, there
      exists $  \constraintfont{ \ottnt{Q'} }  $, such that $ \constraintfont{ \ottnt{Q} }   \Vdash   \constraintfont{ \ottnt{Q'} } $ and
      $ \constraintfont{ \ottnt{Q'} }   \vdash   \constraintfont{   \multiplicityfont{ \pi }  \scale( \ottnt{Q_{{\mathrm{2}}}}  \Lolly  \constraintfont{C} )  } $.

      By induction, there exists $  \constraintfont{ \ottnt{Q'_{{\mathrm{1}}}} }  $ such that $ \constraintfont{ \ottnt{Q'_{{\mathrm{1}}}}  \qtensor  \ottnt{Q_{{\mathrm{2}}}} }   \vdash   \constraintfont{ \constraintfont{C} } $ and $  \constraintfont{ \ottnt{Q'} }   =   \constraintfont{   \multiplicityfont{ \pi }  \scale \ottnt{Q'_{{\mathrm{1}}}}  }  $,
      by~\cref{fig:entailment-relation}, there exists $  \constraintfont{ \ottnt{Q_{{\mathrm{1}}}} }  $ such
      that $  \constraintfont{ \ottnt{Q} }   =   \constraintfont{   \multiplicityfont{ \pi }  \scale \ottnt{Q_{{\mathrm{1}}}}  }  $ and $ \constraintfont{ \ottnt{Q_{{\mathrm{1}}}} }   \Vdash   \constraintfont{ \ottnt{Q'_{{\mathrm{1}}}} } $. Hence $ \constraintfont{ \ottnt{Q_{{\mathrm{1}}}}  \qtensor  \ottnt{Q_{{\mathrm{2}}}} }   \Vdash   \constraintfont{ \ottnt{Q'_{{\mathrm{1}}}}  \qtensor  \ottnt{Q_{{\mathrm{2}}}} } $, which lets us conclude with \rref*{C-Dom}.
    \end{itemize}
  \end{itemize}

  For $ \constraintfont{ \ottnt{Q} }   \vdash   \constraintfont{ \constraintfont{C_{{\mathrm{1}}}}  \qtensor  \constraintfont{C_{{\mathrm{2}}}} } $ we have the following cases:
  \begin{itemize}
  \item either it is the conclusion of a \rref*{C-Tensor} rule, and
    the result is immediate
  \item or it is the result of a \rref*{C-Id} rule, in which case
    $  \constraintfont{ \ottnt{Q} }   =   \constraintfont{ \constraintfont{C_{{\mathrm{1}}}}  \qtensor  \constraintfont{C_{{\mathrm{2}}}} }  $, which proves the result
  \item or it is the result of a \rref*{C-Dom} rule, in which case
    there is $  \constraintfont{ \ottnt{Q'} }  $ such that $ \constraintfont{ \ottnt{Q} }   \Vdash   \constraintfont{ \ottnt{Q'} } $ and $ \constraintfont{ \ottnt{Q'} }   \vdash   \constraintfont{ \constraintfont{C_{{\mathrm{1}}}}  \qtensor  \constraintfont{C_{{\mathrm{2}}}} } $.

    By induction, there exist $  \constraintfont{ \ottnt{Q'_{{\mathrm{1}}}} }  $, $  \constraintfont{ Q_{\mathcal{D} }' }  $, and
    $  \constraintfont{ \ottnt{Q'_{{\mathrm{2}}}} }  $, such that $  \constraintfont{ Q_{\mathcal{D} }' }   \in  \constraintfont{\mathcal{D} } $, $ \constraintfont{ \ottnt{Q'_{{\mathrm{1}}}}  \qtensor  Q_{\mathcal{D} }' }   \vdash   \constraintfont{ \constraintfont{C_{{\mathrm{1}}}} } $,
    $ \constraintfont{ Q_{\mathcal{D} }'  \qtensor  \ottnt{Q'_{{\mathrm{2}}}} }   \vdash   \constraintfont{ \constraintfont{C_{{\mathrm{2}}}} } $, and $  \constraintfont{ \ottnt{Q'} }   =   \constraintfont{ \ottnt{Q'_{{\mathrm{1}}}}  \qtensor  Q_{\mathcal{D} }'  \qtensor  \ottnt{Q'_{{\mathrm{2}}}} }  $.

    Then~\cref{lem:transitive-tensor-decomposition}, gives us
    $  \constraintfont{ \ottnt{Q_{{\mathrm{1}}}} }  $, $  \constraintfont{ Q_{\mathcal{D} } }  $, and $  \constraintfont{ \ottnt{Q_{{\mathrm{2}}}} }  $ such that
    \begin{itemize}
    \item $  \constraintfont{ \ottnt{Q} }   =   \constraintfont{ \ottnt{Q_{{\mathrm{1}}}}  \qtensor  Q_{\mathcal{D} }  \qtensor  \ottnt{Q_{{\mathrm{2}}}} }  $
    \item $  \constraintfont{ Q_{\mathcal{D} } }   \in  \constraintfont{\mathcal{D} } $
    \item $ \constraintfont{ \ottnt{Q_{{\mathrm{1}}}}  \qtensor  Q_{\mathcal{D} } }   \Vdash   \constraintfont{ \ottnt{Q'_{{\mathrm{1}}}} } $
    \item $ \constraintfont{ Q_{\mathcal{D} }  \qtensor  \ottnt{Q_{{\mathrm{2}}}} }   \Vdash   \constraintfont{ \ottnt{Q'_{{\mathrm{2}}}} } $
    \item $ \constraintfont{ Q_{\mathcal{D} } }   \Vdash   \constraintfont{ Q_{\mathcal{D} }' } $
    \end{itemize}
    By~\ref{lem:dup-contraction}, we can further deduce that
    \begin{itemize}
    \item $ \constraintfont{ \ottnt{Q_{{\mathrm{1}}}}  \qtensor  Q_{\mathcal{D} } }   \Vdash   \constraintfont{ \ottnt{Q'_{{\mathrm{1}}}}  \qtensor  Q_{\mathcal{D} }' } $
    \item $ \constraintfont{ Q_{\mathcal{D} }  \qtensor  \ottnt{Q_{{\mathrm{2}}}} }   \Vdash   \constraintfont{ Q_{\mathcal{D} }'  \qtensor  \ottnt{Q'_{{\mathrm{2}}}} } $
    \end{itemize}
    Which concludes the proof, by the \rref*{C-Dom} rule
  \end{itemize}
\end{proof}

\begin{proof}[Proof of \cref{lem:wanted:promote}]
  By induction on the syntax of $  \constraintfont{ \constraintfont{C} }  $
  \begin{itemize}
  \item If $  \constraintfont{ \constraintfont{C} }  =  \constraintfont{ \ottnt{Q'} }  $, then the result follows
    from \cref{lem:q:scaling}
  \item If $  \constraintfont{ \constraintfont{C} }  =  \constraintfont{ \constraintfont{C_{{\mathrm{1}}}}  \qtensor  \constraintfont{C_{{\mathrm{2}}}} }  $, then we can prove the result like we
    proved the corresponding case in \cref{lem:q:scaling},
    using \cref{lem:inversion}.
  \item If $  \constraintfont{ \constraintfont{C} }  =  \constraintfont{ \constraintfont{C_{{\mathrm{1}}}}  \aand  \constraintfont{C_{{\mathrm{2}}}} }  $, then we the case where $  \multiplicityfont{ \pi }  =  \multiplicityfont{ \ottsym{1} }  $ is
    immediate, so we can assume without loss of generality that
    $  \multiplicityfont{ \pi }  =  \multiplicityfont{ \omega }  $, and, therefore, that $  \constraintfont{   \multiplicityfont{ \pi }  \scale \constraintfont{C}  }   =   \multiplicityfont{ \pi }  \scale \constraintfont{C_{{\mathrm{1}}}}   \qtensor    \multiplicityfont{ \pi }  \scale \constraintfont{C_{{\mathrm{2}}}} $.
    By \cref{lem:inversion}, we have that $ \constraintfont{ \ottnt{Q} }   \vdash   \constraintfont{ \constraintfont{C_{{\mathrm{1}}}} } $ and
    $ \constraintfont{ \ottnt{Q} }   \vdash   \constraintfont{ \constraintfont{C_{{\mathrm{2}}}} } $; hence, by induction, $ \constraintfont{   \multiplicityfont{ \omega }  \scale \ottnt{Q}  }   \vdash   \constraintfont{   \multiplicityfont{ \omega }  \scale \constraintfont{C_{{\mathrm{1}}}}  } $ and
    $ \constraintfont{   \multiplicityfont{ \omega }  \scale \ottnt{Q}  }   \vdash   \constraintfont{   \multiplicityfont{ \omega }  \scale \constraintfont{C_{{\mathrm{1}}}}  } $.
    Then, by definition of the entailment relation, we have $ \constraintfont{   \multiplicityfont{ \omega }  \scale \ottnt{Q}   \qtensor    \multiplicityfont{ \omega }  \scale \ottnt{Q}  }   \vdash   \constraintfont{   \multiplicityfont{ \omega }  \scale \constraintfont{C_{{\mathrm{1}}}}   \qtensor    \multiplicityfont{ \omega }  \scale \constraintfont{C_{{\mathrm{2}}}}  } $, which concludes,
    since $  \constraintfont{   \multiplicityfont{ \omega }  \scale \ottnt{Q}  }   =   \constraintfont{   \multiplicityfont{ \omega }  \scale \ottnt{Q}   \qtensor    \multiplicityfont{ \omega }  \scale \ottnt{Q}  }  $.
  \item If $  \constraintfont{ \constraintfont{C} }  =  \constraintfont{   \multiplicityfont{ \rho }  \scale( \ottnt{Q_{{\mathrm{1}}}}  \Lolly  \constraintfont{C'} )  }  $, then by
    \cref{lem:inversion}, there is a $  \constraintfont{ \ottnt{Q'} }  $ such that
    $  \constraintfont{ \ottnt{Q} }  =  \constraintfont{   \multiplicityfont{ \pi }  \scale \ottnt{Q'}  }  $ and $ \constraintfont{ \ottnt{Q'}  \qtensor  \ottnt{Q_{{\mathrm{1}}}} }   \vdash   \constraintfont{ \constraintfont{C'} } $. Applying
    rule~\rref*{C-Impl} with $  \multiplicityfont{  \pi {⋅} \rho  }  $, we get
    $ \constraintfont{   \multiplicityfont{ \ottsym{(}   \pi {⋅} \rho   \ottsym{)} }  \scale \ottnt{Q'}  }   \vdash   \constraintfont{   \multiplicityfont{ \ottsym{(}   \pi {⋅} \rho   \ottsym{)} }  \scale( \ottnt{Q_{{\mathrm{1}}}}  \Lolly  \constraintfont{C'} )  } $.

    In other words: $ \constraintfont{   \multiplicityfont{ \pi }  \scale \ottnt{Q}  }   \vdash   \constraintfont{   \multiplicityfont{ \pi }  \scale \ottsym{(}    \multiplicityfont{ \rho }  \scale( \ottnt{Q}  \Lolly  \constraintfont{C} )   \ottsym{)}  } $ as expected.
  \end{itemize}
\end{proof}

\begin{proof}[Proof of \cref{lem:wanted:demote}]
  By induction on the syntax of $  \constraintfont{ \constraintfont{C} }  $
  \begin{itemize}
  \item If $  \constraintfont{ \constraintfont{C} }  =  \constraintfont{ \ottnt{Q'} }  $, then the result follows from
    \cref{lem:q:scaling-inversion}
  \item If $  \constraintfont{ \constraintfont{C} }  =  \constraintfont{ \constraintfont{C_{{\mathrm{1}}}}  \qtensor  \constraintfont{C_{{\mathrm{2}}}} }  $, then we can prove the result like we
    proved the corresponding case in
    \cref{lem:q:scaling-inversion} using
    \cref{lem:inversion}.
  \item If $  \constraintfont{ \constraintfont{C} }  =  \constraintfont{ \constraintfont{C_{{\mathrm{1}}}}  \aand  \constraintfont{C_{{\mathrm{2}}}} }  $, then we the case where $  \multiplicityfont{ \pi }  =  \multiplicityfont{ \ottsym{1} }  $ is
    immediate, so we can assume without loss of generality that
    $  \multiplicityfont{ \pi }  =  \multiplicityfont{ \omega }  $, and, therefore, that
    $  \constraintfont{   \multiplicityfont{ \pi }  \scale \constraintfont{C}  }   =   \constraintfont{   \multiplicityfont{ \pi }  \scale \constraintfont{C_{{\mathrm{1}}}}   \qtensor    \multiplicityfont{ \pi }  \scale \constraintfont{C_{{\mathrm{2}}}}  }  $. By \cref{lem:inversion},
    there exist $  \constraintfont{ \ottnt{Q_{{\mathrm{1}}}} }  $ and $  \constraintfont{ \ottnt{Q_{{\mathrm{2}}}} }  $ such that $ \constraintfont{ \ottnt{Q_{{\mathrm{1}}}} }   \vdash   \constraintfont{   \multiplicityfont{ \omega }  \scale \constraintfont{C_{{\mathrm{1}}}}  } $,
    $ \constraintfont{ \ottnt{Q_{{\mathrm{2}}}} }   \vdash   \constraintfont{   \multiplicityfont{ \omega }  \scale \constraintfont{C_{{\mathrm{2}}}}  } $ and $  \constraintfont{ \ottnt{Q} }  =  \constraintfont{ \ottnt{Q_{{\mathrm{1}}}}  \qtensor  \ottnt{Q_{{\mathrm{2}}}} }  $. By induction
    hypothesis, we get $  \constraintfont{ \ottnt{Q_{{\mathrm{1}}}} }   =   \constraintfont{   \multiplicityfont{ \omega }  \scale \ottnt{Q'_{{\mathrm{1}}}}  }  $ and $  \constraintfont{ \ottnt{Q_{{\mathrm{2}}}} }   =   \constraintfont{   \multiplicityfont{ \omega }  \scale \ottnt{Q'_{{\mathrm{2}}}}  }  $
    such that $ \constraintfont{ \ottnt{Q'_{{\mathrm{1}}}} }   \vdash   \constraintfont{ \constraintfont{C_{{\mathrm{1}}}} } $ and $ \constraintfont{ \ottnt{Q'_{{\mathrm{2}}}} }   \vdash   \constraintfont{ \constraintfont{C_{{\mathrm{2}}}} } $. From which it
    follows that $ \constraintfont{   \multiplicityfont{ \omega }  \scale \ottnt{Q'_{{\mathrm{1}}}}   \qtensor    \multiplicityfont{ \omega }  \scale \ottnt{Q'_{{\mathrm{2}}}}  }   \vdash   \constraintfont{ \constraintfont{C_{{\mathrm{1}}}} } $ and
    $ \constraintfont{   \multiplicityfont{ \omega }  \scale \ottnt{Q'_{{\mathrm{1}}}}   \qtensor    \multiplicityfont{ \omega }  \scale \ottnt{Q'_{{\mathrm{2}}}}  }   \vdash   \constraintfont{ \constraintfont{C_{{\mathrm{1}}}} } $ (by
    \cref{lem:wanteds:weakening}) and, finally,
    $  \constraintfont{ \ottnt{Q} }  =  \constraintfont{   \multiplicityfont{ \omega }  \scale \ottnt{Q}  }  $ (by \cref{lem:wanteds:module-action})
    and $ \constraintfont{ \ottnt{Q} }   \vdash   \constraintfont{ \constraintfont{C_{{\mathrm{1}}}}  \aand  \constraintfont{C_{{\mathrm{2}}}} } $.
  \item If $  \constraintfont{ \constraintfont{C} }  =  \constraintfont{   \multiplicityfont{ \rho }  \scale( \ottnt{Q_{{\mathrm{1}}}}  \Lolly  \constraintfont{C'} )  }  $, then
    $  \constraintfont{   \multiplicityfont{ \pi }  \scale \constraintfont{C}  }   =   \constraintfont{   \multiplicityfont{ \ottsym{(}   \pi {⋅} \rho   \ottsym{)} }  \scale( \ottnt{Q_{{\mathrm{1}}}}  \Lolly  \constraintfont{C'} )  }  $. The result follows
    immediately by \cref{lem:inversion}.
  \end{itemize}
\end{proof}

\begin{proof}[Proof of \cref{lem:generation-soundness}]
  By induction on $\Gamma  \vdashi  \ottnt{e}  \ottsym{:}  \tau  \leadsto   \constraintfont{ \constraintfont{C} } $
  \begin{description}
  \item[\rref*{G-Var}] We have
    \begin{itemize}
    \item $\Gamma_{{\mathrm{1}}}  \ottsym{=}   \ottmv{x} {:}_{  \multiplicityfont{ \ottsym{1} }  }  \forall   \overline{\ottmv{a} } .   \constraintfont{ \ottnt{Q} }   \Lolly  \upsilon  $
    \item $\Gamma_{{\mathrm{1}}}  \ottsym{+}    \multiplicityfont{ \omega }  \scale \Gamma_{{\mathrm{2}}}   \vdashi  \ottmv{x}  \ottsym{:}  \upsilon  \ottsym{[}  \overline{\tau}  \ottsym{/}  \overline{\ottmv{a} }  \ottsym{]}  \leadsto   \constraintfont{ \ottnt{Q}  \ottsym{[}  \overline{\tau}  \ottsym{/}  \overline{\ottmv{a} }  \ottsym{]} } $
    \item $ \constraintfont{ \ottnt{Q_{\ottmv{g}}} }   \vdash   \constraintfont{ \ottnt{Q}  \ottsym{[}  \overline{\tau}  \ottsym{/}  \overline{\ottmv{a} }  \ottsym{]} } $
    \end{itemize}
    Therefore, by rules~\rref*{E-Var} and~\rref*{E-Sub}, it follows
    immediately that $ \constraintfont{ \ottnt{Q_{\ottmv{g}}} }   \ottsym{;}  \Gamma_{{\mathrm{1}}}  \ottsym{+}    \multiplicityfont{ \omega }  \scale \Gamma_{{\mathrm{2}}}   \vdash  \ottmv{x}  \ottsym{:}  \upsilon  \ottsym{[}  \overline{\tau}  \ottsym{/}  \overline{\ottmv{a} }  \ottsym{]}$
  \item[\rref*{G-Abs}] We have
    \begin{itemize}
    \item $\Gamma  \vdashi  \lambda  \ottmv{x}  \ottsym{.}  \ottnt{e}  \ottsym{:}   \tau_{{\mathrm{0}}}  \to_{  \multiplicityfont{ \pi }  }  \tau   \leadsto   \constraintfont{ \constraintfont{C} } $
    \item $ \constraintfont{ \ottnt{Q_{\ottmv{g}}} }   \vdash   \constraintfont{ \constraintfont{C} } $
    \item $\Gamma  \ottsym{,}   \ottmv{x} {:}_{  \multiplicityfont{ \pi }  } \tau_{{\mathrm{0}}}   \vdashi  \ottnt{e}  \ottsym{:}  \tau  \leadsto   \constraintfont{ \constraintfont{C} } $
    \end{itemize}
    By induction hypothesis we have
    \begin{itemize}
    \item $ \constraintfont{ \ottnt{Q_{\ottmv{g}}} }   \ottsym{;}  \Gamma  \ottsym{,}   \ottmv{x} {:}_{  \multiplicityfont{ \pi }  } \tau_{{\mathrm{0}}}   \vdash  \ottnt{e}  \ottsym{:}  \tau$
    \end{itemize}
    From which follows that $ \constraintfont{ \ottnt{Q_{\ottmv{g}}} }   \ottsym{;}  \Gamma  \vdash  \lambda  \ottmv{x}  \ottsym{.}  \ottnt{e}  \ottsym{:}   \tau_{{\mathrm{0}}}  \to_{  \multiplicityfont{ \pi }  }  \tau $.
  \item[\rref*{G-Let}] We have
    \begin{itemize}
    \item $  \multiplicityfont{ \pi }  \scale \Gamma_{{\mathrm{1}}}   \ottsym{+}  \Gamma_{{\mathrm{2}}}  \vdashi   \klet_  \multiplicityfont{ \pi }   \, \ottmv{x}  \ottsym{=}  \ottnt{e_{{\mathrm{1}}}} \, \ottkw{in} \, \ottnt{e_{{\mathrm{2}}}}  \ottsym{:}  \tau  \leadsto   \constraintfont{   \multiplicityfont{ \pi }  \scale \constraintfont{C_{{\mathrm{1}}}}   \qtensor  \constraintfont{C_{{\mathrm{2}}}} } $
    \item $ \constraintfont{ \ottnt{Q_{\ottmv{g}}} }   \vdash   \constraintfont{   \multiplicityfont{ \pi }  \scale \constraintfont{C_{{\mathrm{1}}}}   \qtensor  \constraintfont{C_{{\mathrm{2}}}} } $
    \item $\Gamma_{{\mathrm{2}}}  \ottsym{,}   \ottmv{x} {:}_{  \multiplicityfont{ \pi }  } \tau_{{\mathrm{1}}}   \vdashi  \ottnt{e_{{\mathrm{2}}}}  \ottsym{:}  \tau  \leadsto   \constraintfont{ \constraintfont{C_{{\mathrm{2}}}} } $
    \item $\Gamma_{{\mathrm{1}}}  \vdashi  \ottnt{e_{{\mathrm{1}}}}  \ottsym{:}  \tau_{{\mathrm{1}}}  \leadsto   \constraintfont{ \constraintfont{C_{{\mathrm{1}}}} } $
    \end{itemize}
    By \cref{lem:inversion,lem:wanted:demote}, there exist $  \constraintfont{ \ottnt{Q_{{\mathrm{1}}}} }  $,
    $  \constraintfont{ Q_{\mathcal{D} } }  $ and $  \constraintfont{ \ottnt{Q_{{\mathrm{2}}}} }  $ such that\info{Here I used,
      implicitly (and without proof), the easy fact that if there
      exists $  \constraintfont{ \ottnt{Q'} }  $ such that $  \constraintfont{ \ottnt{Q} }  =  \constraintfont{   \multiplicityfont{ \pi }  \scale \ottnt{Q'}  }  $, then
      $  \constraintfont{ \ottnt{Q} }  =  \constraintfont{   \multiplicityfont{ \pi }  \scale \ottnt{Q}  }  $ (because $  \multiplicityfont{  \pi {⋅} \pi  }   =   \multiplicityfont{ \pi }  $) and $ \constraintfont{ \ottnt{Q} }   \Vdash   \constraintfont{ \ottnt{Q'} } $ (identity for $  \multiplicityfont{ \pi }  =  \multiplicityfont{ \ottsym{1} }  $ and dereliction for $  \multiplicityfont{ \pi }  =  \multiplicityfont{ \omega }  $).}
    \begin{itemize}
    \item $ \constraintfont{ \ottnt{Q_{{\mathrm{1}}}}  \qtensor  Q_{\mathcal{D} } }   \vdash   \constraintfont{ \constraintfont{C_{{\mathrm{1}}}} } $
    \item $ \constraintfont{ Q_{\mathcal{D} }  \qtensor  \ottnt{Q_{{\mathrm{2}}}} }   \vdash   \constraintfont{ \constraintfont{C_{{\mathrm{2}}}} } $
    \item $  \constraintfont{ \ottnt{Q_{\ottmv{g}}} }   =   \constraintfont{   \multiplicityfont{ \pi }  \scale \ottnt{Q_{{\mathrm{1}}}}   \qtensor  Q_{\mathcal{D} }  \qtensor  \ottnt{Q_{{\mathrm{2}}}} }  $
    \item $  \constraintfont{ Q_{\mathcal{D} } }   \in  \constraintfont{\mathcal{D} } $
    \item $  \constraintfont{   \multiplicityfont{ \pi }  \scale Q_{\mathcal{D} }  }   =   \constraintfont{ Q_{\mathcal{D} } }  $
    \end{itemize}
    By induction hypothesis we have
    \begin{itemize}
    \item $ \constraintfont{ \ottnt{Q_{{\mathrm{1}}}}  \qtensor  Q_{\mathcal{D} } }   \ottsym{;}  \Gamma_{{\mathrm{1}}}  \vdash  \ottnt{e_{{\mathrm{1}}}}  \ottsym{:}  \tau_{{\mathrm{1}}}$
    \item $ \constraintfont{ Q_{\mathcal{D} }  \qtensor  \ottnt{Q_{{\mathrm{2}}}} }   \ottsym{;}  \Gamma_{{\mathrm{2}}}  \ottsym{,}   \ottmv{x} {:}_{  \multiplicityfont{ \pi }  } \tau_{{\mathrm{1}}}   \vdash  \ottnt{e_{{\mathrm{1}}}}  \ottsym{:}  \tau_{{\mathrm{1}}}$
    \end{itemize}
    From which follows that $ \constraintfont{ \ottnt{Q_{\ottmv{g}}} }   \ottsym{;}    \multiplicityfont{ \pi }  \scale \Gamma_{{\mathrm{1}}}   \ottsym{+}  \Gamma_{{\mathrm{2}}}  \vdash   \klet_  \multiplicityfont{ \pi }   \, \ottmv{x}  \ottsym{=}  \ottnt{e_{{\mathrm{1}}}} \, \ottkw{in} \, \ottnt{e_{{\mathrm{2}}}}  \ottsym{:}  \tau$.
  \item[\rref*{G-LetSig}] We have
    \begin{itemize}
    \item $  \multiplicityfont{ \pi }  \scale \Gamma_{{\mathrm{1}}}   \ottsym{+}  \Gamma_{{\mathrm{2}}}  \vdashi   \klet_  \multiplicityfont{ \pi }   \, \ottmv{x}  \ottsym{:}   \forall   \overline{\ottmv{a} } .   \constraintfont{ \ottnt{Q} }   \Lolly  \tau_{{\mathrm{1}}}   \ottsym{=}  \ottnt{e_{{\mathrm{1}}}} \, \ottkw{in} \, \ottnt{e_{{\mathrm{2}}}}  \ottsym{:}  \tau  \leadsto   \constraintfont{ \constraintfont{C_{{\mathrm{2}}}}  \qtensor    \multiplicityfont{ \pi }  \scale( \ottnt{Q}  \Lolly  \constraintfont{C_{{\mathrm{1}}}} )  } $
    \item $ \constraintfont{ \ottnt{Q_{\ottmv{g}}} }   \vdash   \constraintfont{ \constraintfont{C_{{\mathrm{2}}}}  \qtensor    \multiplicityfont{ \pi }  \scale( \ottnt{Q}  \Lolly  \constraintfont{C_{{\mathrm{1}}}} )  } $
    \item $\Gamma_{{\mathrm{1}}}  \vdashi  \ottnt{e_{{\mathrm{1}}}}  \ottsym{:}  \tau_{{\mathrm{1}}}  \leadsto   \constraintfont{ \constraintfont{C_{{\mathrm{1}}}} } $
    \item $\Gamma_{{\mathrm{2}}}  \ottsym{,}   \ottmv{x} {:}_{  \multiplicityfont{ \pi }  }  \forall   \overline{\ottmv{a} } .   \constraintfont{ \ottnt{Q} }   \Lolly  \tau_{{\mathrm{1}}}    \vdashi  \ottnt{e_{{\mathrm{2}}}}  \ottsym{:}  \tau  \leadsto   \constraintfont{ \constraintfont{C_{{\mathrm{2}}}} } $
    \end{itemize}
    By \cref{lem:inversion,lem:wanted:demote}, there exist $  \constraintfont{ \ottnt{Q_{{\mathrm{1}}}} }  $,
    $Q_{\mathcal{D} }$, $  \constraintfont{ \ottnt{Q_{{\mathrm{2}}}} }  $ such
    that
    \begin{itemize}
    \item $ \constraintfont{ Q_{\mathcal{D} }  \qtensor  \ottnt{Q_{{\mathrm{2}}}} }   \vdash   \constraintfont{ \constraintfont{C_{{\mathrm{2}}}} } $
    \item $ \constraintfont{ \ottnt{Q_{{\mathrm{1}}}}  \qtensor  Q_{\mathcal{D} }  \qtensor  \ottnt{Q} }   \vdash   \constraintfont{ \constraintfont{C} } $
    \item $  \constraintfont{ \ottnt{Q_{\ottmv{g}}} }   =   \constraintfont{   \multiplicityfont{ \pi }  \scale \ottnt{Q_{{\mathrm{1}}}}   \qtensor  Q_{\mathcal{D} }  \qtensor  \ottnt{Q_{{\mathrm{2}}}} }  $
    \item $  \constraintfont{ Q_{\mathcal{D} } }   \in  \constraintfont{\mathcal{D} } $
    \item $  \constraintfont{   \multiplicityfont{ \pi }  \scale Q_{\mathcal{D} }  }   =   \constraintfont{ Q_{\mathcal{D} } }  $
    \end{itemize}
    By induction hypothesis
    \begin{itemize}
    \item $ \constraintfont{ \ottnt{Q_{{\mathrm{1}}}}  \qtensor  Q_{\mathcal{D} }  \qtensor  \ottnt{Q} }   \ottsym{;}  \Gamma_{{\mathrm{1}}}  \vdash  \ottnt{e_{{\mathrm{1}}}}  \ottsym{:}  \tau_{{\mathrm{1}}}$
    \item $ \constraintfont{ Q_{\mathcal{D} }  \qtensor  \ottnt{Q_{{\mathrm{2}}}} }   \ottsym{;}  \Gamma_{{\mathrm{2}}}  \ottsym{,}   \ottmv{x} {:}_{  \multiplicityfont{ \pi }  }  \forall   \overline{\ottmv{a} } .   \constraintfont{ \ottnt{Q} }   \Lolly  \tau_{{\mathrm{1}}}    \vdash  \ottnt{e_{{\mathrm{2}}}}  \ottsym{:}  \tau$
    \end{itemize}
    Hence $ \constraintfont{ \ottnt{Q_{\ottmv{g}}} }   \ottsym{;}    \multiplicityfont{ \pi }  \scale \Gamma_{{\mathrm{1}}}   \ottsym{+}  \Gamma_{{\mathrm{2}}}  \vdash   \klet_  \multiplicityfont{ \pi }   \, \ottmv{x}  \ottsym{:}   \forall   \overline{\ottmv{a} } .   \constraintfont{ \ottnt{Q} }   \Lolly  \tau_{{\mathrm{1}}}   \ottsym{=}  \ottnt{e_{{\mathrm{1}}}} \, \ottkw{in} \, \ottnt{e_{{\mathrm{2}}}}  \ottsym{:}  \tau$
  \item[\rref*{G-App}] \info{Most of the linearity problems are in the App
      rule. Unpack is also relevant.}
    We have
    \begin{itemize}
    \item $\Gamma_{{\mathrm{1}}}  \ottsym{+}    \multiplicityfont{ \pi }  \scale \Gamma_{{\mathrm{2}}}   \vdashi  \ottnt{e_{{\mathrm{1}}}} \, \ottnt{e_{{\mathrm{2}}}}  \ottsym{:}  \tau  \leadsto   \constraintfont{ \constraintfont{C_{{\mathrm{1}}}}  \qtensor    \multiplicityfont{ \pi }  \scale \constraintfont{C_{{\mathrm{2}}}}  } $
    \item $ \constraintfont{ \ottnt{Q_{\ottmv{g}}} }   \vdash   \constraintfont{ \constraintfont{C_{{\mathrm{1}}}}  \qtensor    \multiplicityfont{ \pi }  \scale \constraintfont{C_{{\mathrm{2}}}}  } $
    \item $\Gamma_{{\mathrm{1}}}  \vdashi  \ottnt{e_{{\mathrm{1}}}}  \ottsym{:}   \tau_{{\mathrm{2}}}  \to_{  \multiplicityfont{ \pi }  }  \tau   \leadsto   \constraintfont{ \constraintfont{C_{{\mathrm{1}}}} } $
    \item $\Gamma_{{\mathrm{2}}}  \vdashi  \ottnt{e_{{\mathrm{2}}}}  \ottsym{:}  \tau_{{\mathrm{2}}}  \leadsto   \constraintfont{ \constraintfont{C_{{\mathrm{2}}}} } $
    \end{itemize}
    By \cref{lem:inversion,lem:wanted:demote}, there exist
    $  \constraintfont{ \ottnt{Q_{{\mathrm{1}}}} }  $, $  \constraintfont{ Q_{\mathcal{D} } }  $, $  \constraintfont{ \ottnt{Q_{{\mathrm{2}}}} }  $ such that
    \begin{itemize}
    \item $ \constraintfont{ \ottnt{Q_{{\mathrm{1}}}}  \qtensor  Q_{\mathcal{D} } }   \vdash   \constraintfont{ \constraintfont{C_{{\mathrm{1}}}} } $
    \item $ \constraintfont{ Q_{\mathcal{D} }  \qtensor  \ottnt{Q_{{\mathrm{2}}}} }   \vdash   \constraintfont{ \constraintfont{C_{{\mathrm{2}}}} } $
    \item $  \constraintfont{ \ottnt{Q_{\ottmv{g}}} }   =   \constraintfont{ \ottnt{Q_{{\mathrm{1}}}}  \qtensor  Q_{\mathcal{D} }  \qtensor    \multiplicityfont{ \pi }  \scale \ottnt{Q_{{\mathrm{2}}}}  }  $
    \item $  \constraintfont{ Q_{\mathcal{D} } }   \in  \constraintfont{\mathcal{D} } $
    \item $  \constraintfont{   \multiplicityfont{ \pi }  \scale Q_{\mathcal{D} }  }   =   \constraintfont{ Q_{\mathcal{D} } }  $
    \end{itemize}
    By induction hypothesis
    \begin{itemize}
    \item $ \constraintfont{ \ottnt{Q_{{\mathrm{1}}}}  \qtensor  Q_{\mathcal{D} } }   \ottsym{;}  \Gamma_{{\mathrm{1}}}  \vdash  \ottnt{e_{{\mathrm{1}}}}  \ottsym{:}   \tau_{{\mathrm{2}}}  \to_{  \multiplicityfont{ \pi }  }  \tau $
    \item $ \constraintfont{ Q_{\mathcal{D} }  \qtensor  \ottnt{Q_{{\mathrm{2}}}} }   \ottsym{;}  \Gamma_{{\mathrm{2}}}  \vdash  \ottnt{e_{{\mathrm{2}}}}  \ottsym{:}  \tau_{{\mathrm{2}}}$
    \end{itemize}
    Hence $ \constraintfont{ \ottnt{Q_{\ottmv{g}}} }   \ottsym{;}  \Gamma_{{\mathrm{1}}}  \ottsym{+}    \multiplicityfont{ \pi }  \scale \Gamma_{{\mathrm{2}}}   \vdash  \ottnt{e_{{\mathrm{1}}}} \, \ottnt{e_{{\mathrm{2}}}}  \ottsym{:}  \tau$.
  \item[\rref*{G-Pack}] We have
    \begin{itemize}
    \item $\Gamma  \vdashi  \packbox \, \ottnt{e}  \ottsym{:}   \exists   \overline{\ottmv{a} } .  \tau  \RLolly   \constraintfont{ \ottnt{Q} }    \leadsto   \constraintfont{ \constraintfont{C}  \qtensor  \ottnt{Q}  \ottsym{[}  \overline{\upsilon}  \ottsym{/}  \overline{\ottmv{a} }  \ottsym{]} } $
    \item $ \constraintfont{ \ottnt{Q_{\ottmv{g}}} }   \vdash   \constraintfont{ \constraintfont{C}  \qtensor  \ottnt{Q}  \ottsym{[}  \overline{\upsilon}  \ottsym{/}  \overline{\ottmv{a} }  \ottsym{]} } $
    \item $\Gamma  \vdashi  \ottnt{e}  \ottsym{:}  \tau  \ottsym{[}  \overline{\upsilon}  \ottsym{/}  \overline{\ottmv{a} }  \ottsym{]}  \leadsto   \constraintfont{ \constraintfont{C} } $
    \end{itemize}
    By \cref{lem:inversion}, there exist $  \constraintfont{ \ottnt{Q_{{\mathrm{1}}}} }  $, $  \constraintfont{ Q_{\mathcal{D} } }  $, $  \constraintfont{ \ottnt{Q_{{\mathrm{2}}}} }  $
    such that
    \begin{itemize}
    \item $ \constraintfont{ \ottnt{Q_{{\mathrm{1}}}}  \qtensor  Q_{\mathcal{D} } }   \vdash   \constraintfont{ \constraintfont{C} } $
    \item $ \constraintfont{ Q_{\mathcal{D} }  \qtensor  \ottnt{Q_{{\mathrm{2}}}} }   \vdash   \constraintfont{ \ottnt{Q}  \ottsym{[}  \overline{\upsilon}  \ottsym{/}  \overline{\ottmv{a} }  \ottsym{]} } $
    \item $  \constraintfont{ \ottnt{Q_{\ottmv{g}}} }   =   \constraintfont{ \ottnt{Q_{{\mathrm{1}}}}  \qtensor  Q_{\mathcal{D} }  \qtensor  \ottnt{Q_{{\mathrm{2}}}} }  $
    \item $  \constraintfont{ Q_{\mathcal{D} } }   \in  \constraintfont{\mathcal{D} } $
    \end{itemize}
    By induction hypothesis
    \begin{itemize}
    \item $ \constraintfont{ \ottnt{Q_{{\mathrm{1}}}}  \qtensor  Q_{\mathcal{D} } }   \ottsym{;}  \Gamma  \vdash  \ottnt{e}  \ottsym{:}  \tau  \ottsym{[}  \overline{\upsilon}  \ottsym{/}  \overline{\ottmv{a} }  \ottsym{]}$
    \end{itemize}
    So we have $ \constraintfont{ \ottnt{Q_{{\mathrm{1}}}}  \qtensor  Q_{\mathcal{D} }  \qtensor  \ottnt{Q}  \ottsym{[}  \overline{\upsilon}  \ottsym{/}  \overline{\ottmv{a} }  \ottsym{]} }   \ottsym{;}  \Gamma  \vdash  \packbox \, \ottnt{e}  \ottsym{:}   \exists   \overline{\ottmv{a} } .  \tau  \RLolly   \constraintfont{ \ottnt{Q} }  $. By~\cref{lem:dup-contraction} rule~\rref*{E-Sub}, we conclude
    $ \constraintfont{ \ottnt{Q_{\ottmv{g}}} }   \ottsym{;}    \multiplicityfont{ \omega }  \scale \Gamma   \vdash  \packbox \, \ottnt{e}  \ottsym{:}   \exists   \overline{\ottmv{a} } .  \tau  \RLolly   \constraintfont{ \ottnt{Q} }  $.
  \item[\rref*{G-Unpack}] We have
    \begin{itemize}
    \item $\Gamma_{{\mathrm{1}}}  \ottsym{+}  \Gamma_{{\mathrm{2}}}  \vdashi   \klet\ \packbox  \ottmv{x}  =  \ottnt{e_{{\mathrm{1}}}}  \  \ottkw{in}  \  \ottnt{e_{{\mathrm{2}}}}   \ottsym{:}  \tau  \leadsto   \constraintfont{ \constraintfont{C_{{\mathrm{1}}}}  \qtensor    \multiplicityfont{ \ottsym{1} }  \scale( \ottnt{Q'}  \Lolly  \constraintfont{C_{{\mathrm{2}}}} )  } $
    \item $ \constraintfont{ \ottnt{Q_{\ottmv{g}}} }   \vdash   \constraintfont{ \constraintfont{C_{{\mathrm{1}}}}  \qtensor    \multiplicityfont{ \ottsym{1} }  \scale( \ottnt{Q'}  \Lolly  \constraintfont{C_{{\mathrm{2}}}} )  } $
    \item $\Gamma_{{\mathrm{1}}}  \vdashi  \ottnt{e_{{\mathrm{1}}}}  \ottsym{:}   \exists   \overline{\ottmv{a} } .  \tau_{{\mathrm{1}}}  \RLolly   \constraintfont{ \ottnt{Q'} }    \leadsto   \constraintfont{ \constraintfont{C_{{\mathrm{1}}}} } $
    \item $\Gamma_{{\mathrm{2}}}  \ottsym{,}   \ottmv{x} {:}_{  \multiplicityfont{ \pi }  } \tau_{{\mathrm{1}}}   \vdashi  \ottnt{e_{{\mathrm{2}}}}  \ottsym{:}  \tau  \leadsto   \constraintfont{ \constraintfont{C_{{\mathrm{2}}}} } $
    \end{itemize}
    By \cref{lem:inversion}, there exist $  \constraintfont{ \ottnt{Q_{{\mathrm{1}}}} }  $, $  \constraintfont{ Q_{\mathcal{D} } }  $, $  \constraintfont{ \ottnt{Q_{{\mathrm{2}}}} }  $
    such that
    \begin{itemize}
    \item $ \constraintfont{ \ottnt{Q_{{\mathrm{1}}}}  \qtensor  Q_{\mathcal{D} } }   \vdash   \constraintfont{ \constraintfont{C_{{\mathrm{1}}}} } $
    \item $ \constraintfont{ Q_{\mathcal{D} }  \qtensor  \ottnt{Q_{{\mathrm{2}}}}  \qtensor  \ottnt{Q'} }   \vdash   \constraintfont{ \constraintfont{C_{{\mathrm{2}}}} } $
    \item $  \constraintfont{ \ottnt{Q_{\ottmv{g}}} }   =   \constraintfont{ \ottnt{Q_{{\mathrm{1}}}}  \qtensor  Q_{\mathcal{D} }  \qtensor  \ottnt{Q_{{\mathrm{2}}}} }  $
    \item $  \constraintfont{ Q_{\mathcal{D} } }   \in  \constraintfont{\mathcal{D} } $
    \end{itemize}
    By induction hypothesis
    \begin{itemize}
    \item $ \constraintfont{ \ottnt{Q_{{\mathrm{1}}}}  \qtensor  Q_{\mathcal{D} } }   \ottsym{;}  \Gamma_{{\mathrm{1}}}  \vdash  \ottnt{e_{{\mathrm{1}}}}  \ottsym{:}   \exists   \overline{\ottmv{a} } .  \tau_{{\mathrm{1}}}  \RLolly   \constraintfont{ \ottnt{Q'} }  $
    \item $ \constraintfont{ Q_{\mathcal{D} }  \qtensor  \ottnt{Q_{{\mathrm{2}}}}  \qtensor  \ottnt{Q} }   \ottsym{;}  \Gamma_{{\mathrm{2}}}  \vdash  \ottnt{e_{{\mathrm{2}}}}  \ottsym{:}  \tau$
    \end{itemize}
    Therefore $ \constraintfont{ \ottnt{Q_{\ottmv{g}}} }   \ottsym{;}  \Gamma_{{\mathrm{1}}}  \ottsym{+}  \Gamma_{{\mathrm{2}}}  \vdash   \klet\ \packbox  \ottmv{x}  =  \ottnt{e_{{\mathrm{1}}}}  \  \ottkw{in}  \  \ottnt{e_{{\mathrm{2}}}}   \ottsym{:}  \tau$.
  \item[\rref*{G-Case}] We have
    \begin{itemize}
    \item $  \multiplicityfont{ \pi }  \scale \Gamma   \ottsym{+}  \Delta  \vdashi   \kcase_  \multiplicityfont{ \pi }   \, \ottnt{e} \, \ottkw{of} \, \ottsym{\{}  \overline{\ottmv{K}_i\ \overline{\ottmv{x}_i } \to \ottnt{e}_i }  \ottsym{\}}  \ottsym{:}  \tau  \leadsto   \constraintfont{   \multiplicityfont{ \pi }  \scale \constraintfont{C}   \qtensor  \bigaand  \constraintfont{C_{\ottmv{i}}} } $
    \item $ \constraintfont{ \ottnt{Q_{\ottmv{g}}} }   \vdash   \constraintfont{   \multiplicityfont{ \pi }  \scale \constraintfont{C}   \qtensor  \bigaand  \constraintfont{C_{\ottmv{i}}} } $
    \item $\Gamma  \vdashi  \ottnt{e}  \ottsym{:}  \ottmv{T} \, \overline{\sigma}  \leadsto   \constraintfont{ \constraintfont{C} } $
    \item For each $i$, $\Delta  \ottsym{,}   \overline{  \ottmv{x_{\ottmv{i}}} {:}_{  \multiplicityfont{ \ottsym{(}   \pi {⋅} \pi_{\ottmv{i}}   \ottsym{)} }  } \upsilon_{\ottmv{i}}  \ottsym{[}  \overline{\sigma}  \ottsym{/}  \overline{\ottmv{a} }  \ottsym{]}  }   \vdashi  \ottnt{e_{\ottmv{i}}}  \ottsym{:}  \tau  \leadsto   \constraintfont{ \constraintfont{C_{\ottmv{i}}} } $
    \end{itemize}
    By repeated uses of \cref{lem:inversion} as well as \cref{lem:wanted:demote}, there exist
    $  \constraintfont{ \ottnt{Q} }  $, $  \constraintfont{ Q_{\mathcal{D} } }  $, $  \constraintfont{ \ottnt{Q'} }  $ such that
    \begin{itemize}
    \item $ \constraintfont{ \ottnt{Q}  \qtensor  Q_{\mathcal{D} } }   \vdash   \constraintfont{ \constraintfont{C} } $
    \item For each $i$, $ \constraintfont{ Q_{\mathcal{D} }  \qtensor  \ottnt{Q'} }   \vdash   \constraintfont{ \constraintfont{C_{\ottmv{i}}} } $
    \item $  \constraintfont{ \ottnt{Q_{\ottmv{g}}} }   =   \constraintfont{   \multiplicityfont{ \pi }  \scale \ottnt{Q}   \qtensor  Q_{\mathcal{D} }  \qtensor  \ottnt{Q'} }  $
    \item $  \constraintfont{ Q_{\mathcal{D} } }   \in  \constraintfont{\mathcal{D} } $
    \item $  \constraintfont{   \multiplicityfont{ \pi }  \scale Q_{\mathcal{D} }  }   =   \constraintfont{ Q_{\mathcal{D} } }  $
    \end{itemize}
    By induction hypothesis
    \begin{itemize}
    \item $ \constraintfont{ \ottnt{Q}  \qtensor  Q_{\mathcal{D} } }   \ottsym{;}  \Gamma  \vdash  \ottnt{e}  \ottsym{:}  \ottmv{T} \, \overline{\sigma}$
    \item For each $i$, $ \constraintfont{ Q_{\mathcal{D} }  \qtensor  \ottnt{Q'} }   \ottsym{;}  \Delta  \ottsym{,}   \overline{  \ottmv{x_{\ottmv{i}}} {:}_{  \multiplicityfont{ \ottsym{(}   \pi {⋅} \pi_{\ottmv{i}}   \ottsym{)} }  } \upsilon_{\ottmv{i}}  \ottsym{[}  \overline{\sigma}  \ottsym{/}  \overline{\ottmv{a} }  \ottsym{]}  }   \vdash  \ottnt{e_{\ottmv{i}}}  \ottsym{:}  \tau$
    \end{itemize}
    Therefore $ \constraintfont{ \ottnt{Q_{\ottmv{g}}} }   \ottsym{;}    \multiplicityfont{ \pi }  \scale \Gamma   \ottsym{+}  \Delta  \vdash   \kcase_  \multiplicityfont{ \pi }   \, \ottnt{e} \, \ottkw{of} \, \ottsym{\{}  \overline{\ottmv{K}_i\ \overline{\ottmv{x}_i } \to \ottnt{e}_i }  \ottsym{\}}  \ottsym{:}  \tau$.
  \end{description}
\end{proof}

\begin{proof}[Proof of \cref{lem:solver-soundness}]
  By induction on $ \constraintfont{ \ottnt{U} }   \ottsym{;}   \constraintfont{ \ottnt{D} }   \ottsym{;}   \constraintfont{ \ottnt{L}_{\ottmv{i}} }   \vdashs   \constraintfont{ \constraintfont{C} }   \leadsto   \constraintfont{ \ottnt{L}_{\ottmv{o}} } $
  \begin{description}
  \item[\rref*{S-Atom}] We have
  \begin{itemize}
          \item $ \constraintfont{ \ottnt{U} }   \ottsym{;}   \constraintfont{ \ottnt{D} }   \ottsym{;}   \constraintfont{ \ottnt{L}_{\ottmv{i}} }   \vdashs   \constraintfont{   \multiplicityfont{ \pi }  \scale \ottnt{q}  }   \leadsto   \constraintfont{ \ottnt{L}_{\ottmv{o}} } $
          \item $  \constraintfont{ \ottnt{U} }  ;   \constraintfont{ \ottnt{D} }  ;   \constraintfont{ \ottnt{L}_{\ottmv{i}} }    \vdashsimp    \multiplicityfont{ \pi }   \scale   \constraintfont{ \ottnt{q} }    \leadsto    \constraintfont{ \ottnt{L}_{\ottmv{o}} }  $
  \end{itemize}
  By \cref{prop:atomic-solver-soundness} we have
  \begin{enumerate}
  \item $  \constraintfont{ \ottnt{L}_{\ottmv{o}} }  \subseteq   \constraintfont{ \ottnt{L}_{\ottmv{i}} }  $
  \item $ \constraintfont{ \ottsym{(}  \ottnt{U}  \ottsym{,}   \ottnt{D}  \uplus  \ottnt{L}_{\ottmv{i}}   \ottsym{)} }   \Vdash   \constraintfont{   \multiplicityfont{ \pi }  \scale \ottnt{q}   \qtensor  \ottsym{(}  \emptyset  \ottsym{,}  \ottnt{L}_{\ottmv{o}}  \ottsym{)} } $
  \end{enumerate}
  Then by \rref*{C-Dom} we have $ \constraintfont{ \ottsym{(}  \ottnt{U}  \ottsym{,}  \ottnt{L}_{\ottmv{i}}  \ottsym{)} }   \vdash   \constraintfont{   \multiplicityfont{ \pi }  \scale \ottnt{q}   \qtensor  \ottsym{(}  \emptyset  \ottsym{,}  \ottnt{L}_{\ottmv{o}}  \ottsym{)} } $.
  \item[\rref*{S-Add}] We have
  \begin{itemize}
          \item $ \constraintfont{ \ottnt{U} }   \ottsym{;}   \constraintfont{ \ottnt{D} }   \ottsym{;}   \constraintfont{ \ottnt{L}_{\ottmv{i}} }   \vdashs   \constraintfont{ \constraintfont{C_{{\mathrm{1}}}}  \aand  \constraintfont{C_{{\mathrm{2}}}} }   \leadsto   \constraintfont{ \ottnt{L}_{\ottmv{o}} } $
          \item $ \constraintfont{ \ottnt{U} }   \ottsym{;}   \constraintfont{ \ottnt{D} }   \ottsym{;}   \constraintfont{ \ottnt{L}_{\ottmv{i}} }   \vdashs   \constraintfont{ \constraintfont{C_{{\mathrm{1}}}} }   \leadsto   \constraintfont{ \ottnt{L}_{\ottmv{o}} } $
          \item $ \constraintfont{ \ottnt{U} }   \ottsym{;}   \constraintfont{ \ottnt{D} }   \ottsym{;}   \constraintfont{ \ottnt{L}_{\ottmv{i}} }   \vdashs   \constraintfont{ \constraintfont{C_{{\mathrm{2}}}} }   \leadsto   \constraintfont{ \ottnt{L}_{\ottmv{o}} } $
  \end{itemize}
  By induction hypothesis we have
  \begin{itemize}
          \item $  \constraintfont{ \ottnt{L}_{\ottmv{o}} }  \subseteq   \constraintfont{ \ottnt{L}_{\ottmv{i}} }  $
          \item $ \constraintfont{ \ottsym{(}  \ottnt{U}  \ottsym{,}   \ottnt{D}  \uplus  \ottnt{L}_{\ottmv{i}}   \ottsym{)} }   \vdash   \constraintfont{ \constraintfont{C_{{\mathrm{1}}}}  \qtensor  \ottsym{(}  \emptyset  \ottsym{,}  \ottnt{L}_{\ottmv{o}}  \ottsym{)} } $
          \item $ \constraintfont{ \ottsym{(}  \ottnt{U}  \ottsym{,}   \ottnt{D}  \uplus  \ottnt{L}_{\ottmv{i}}   \ottsym{)} }   \vdash   \constraintfont{ \constraintfont{C_{{\mathrm{2}}}}  \qtensor  \ottsym{(}  \emptyset  \ottsym{,}  \ottnt{L}_{\ottmv{o}}  \ottsym{)} } $
  \end{itemize}
  Then by \rref*{C-With} we have $ \constraintfont{ \ottsym{(}  \ottnt{U}  \ottsym{,}   \ottnt{D}  \uplus  \ottnt{L}_{\ottmv{i}}   \ottsym{)} }   \vdash   \constraintfont{ \constraintfont{C_{{\mathrm{1}}}}  \aand  \constraintfont{C_{{\mathrm{2}}}}  \qtensor  \ottsym{(}  \emptyset  \ottsym{,}  \ottnt{L}_{\ottmv{o}}  \ottsym{)} } $.
  \item[\rref*{S-Mult}] We have
  \begin{itemize}
          \item $ \constraintfont{ \ottnt{U} }   \ottsym{;}   \constraintfont{ \ottnt{D} }   \ottsym{;}   \constraintfont{ \ottnt{L}_{\ottmv{i}} }   \vdashs   \constraintfont{ \constraintfont{C_{{\mathrm{1}}}}  \qtensor  \constraintfont{C_{{\mathrm{2}}}} }   \leadsto   \constraintfont{ \ottnt{L}_{\ottmv{o}} } $
          \item $ \constraintfont{ \ottnt{U} }   \ottsym{;}   \constraintfont{ \ottnt{D} }   \ottsym{;}   \constraintfont{ \ottnt{L}_{\ottmv{i}} }   \vdashs   \constraintfont{ \constraintfont{C_{{\mathrm{1}}}} }   \leadsto   \constraintfont{ \ottnt{L}'_{\ottmv{o}} } $
          \item $ \constraintfont{ \ottnt{U} }   \ottsym{;}   \constraintfont{ \ottnt{D} }   \ottsym{;}   \constraintfont{ \ottnt{L}'_{\ottmv{o}} }   \vdashs   \constraintfont{ \constraintfont{C_{{\mathrm{2}}}} }   \leadsto   \constraintfont{ \ottnt{L}_{\ottmv{o}} } $
  \end{itemize}
  By induction hypothesis we have
  \begin{itemize}
          \item $  \constraintfont{ \ottnt{L}_{\ottmv{o}} }  \subseteq   \constraintfont{ \ottnt{L}'_{\ottmv{o}} }  $
          \item $  \constraintfont{ \ottnt{L}'_{\ottmv{o}} }  \subseteq   \constraintfont{ \ottnt{L}_{\ottmv{i}} }  $
          \item $ \constraintfont{ \ottsym{(}  \ottnt{U}  \ottsym{,}   \ottnt{D}  \uplus  \ottnt{L}_{\ottmv{i}}   \ottsym{)} }   \vdash   \constraintfont{ \constraintfont{C_{{\mathrm{1}}}}  \qtensor  \ottsym{(}  \emptyset  \ottsym{,}  \ottnt{L}'_{\ottmv{o}}  \ottsym{)} } $
          \item $ \constraintfont{ \ottsym{(}  \ottnt{U}  \ottsym{,}   \ottnt{D}  \uplus  \ottnt{L}'_{\ottmv{o}}   \ottsym{)} }   \vdash   \constraintfont{ \constraintfont{C_{{\mathrm{2}}}}  \qtensor  \ottsym{(}  \emptyset  \ottsym{,}  \ottnt{L}_{\ottmv{o}}  \ottsym{)} } $
  \end{itemize}
  Then
  \begin{itemize}
  \item by transitivity of $\subseteq$ we have
    $  \constraintfont{ \ottnt{L}_{\ottmv{o}} }  \subseteq   \constraintfont{ \ottnt{L}_{\ottmv{i}} }  $, and by \rref*{C-Tensor} we
    have
    $ \constraintfont{ \ottsym{(}  \ottnt{U}  \ottsym{,}   \ottnt{D}  \uplus  \ottnt{L}_{\ottmv{i}}   \ottsym{)}  \qtensor  \ottsym{(}  \ottnt{U}  \ottsym{,}   \ottnt{D}  \uplus  \ottnt{L}'_{\ottmv{o}}   \ottsym{)} }   \vdash   \constraintfont{ \constraintfont{C_{{\mathrm{1}}}}  \qtensor  \constraintfont{C_{{\mathrm{2}}}}  \qtensor  \ottsym{(}  \emptyset  \ottsym{,}  \ottnt{L}'_{\ottmv{o}}  \ottsym{)}  \qtensor  \ottsym{(}  \emptyset  \ottsym{,}  \ottnt{L}_{\ottmv{o}}  \ottsym{)} } $
  \item by~\ref{lem:dup-contraction} and the definition of tensor on
    unrestricted constraints, we have
    $ \constraintfont{ \ottsym{(}  \ottnt{U}  \ottsym{,}   \ottnt{D}  \uplus  \ottnt{L}_{\ottmv{i}}   \ottsym{)}  \qtensor  \ottsym{(}  \emptyset  \ottsym{,}  \ottnt{L}'_{\ottmv{o}}  \ottsym{)} }   \vdash   \constraintfont{ \constraintfont{C_{{\mathrm{1}}}}  \qtensor  \constraintfont{C_{{\mathrm{2}}}}  \qtensor  \ottsym{(}  \emptyset  \ottsym{,}  \ottnt{L}'_{\ottmv{o}}  \ottsym{)}  \qtensor  \ottsym{(}  \emptyset  \ottsym{,}  \ottnt{L}_{\ottmv{o}}  \ottsym{)} } $
  \item by~\cref{lem:inversion} we have
    $ \constraintfont{ \ottsym{(}  \ottnt{U}  \ottsym{,}  \ottnt{L}_{\ottmv{i}}  \ottsym{)} }   \vdash   \constraintfont{ \constraintfont{C_{{\mathrm{1}}}}  \qtensor  \constraintfont{C_{{\mathrm{2}}}}  \qtensor  \ottsym{(}  \emptyset  \ottsym{,}  \ottnt{L}_{\ottmv{o}}  \ottsym{)} } $.
  \end{itemize}
  \item[\rref*{S-ImplOne}] We have
  \begin{itemize}
          \item $ \constraintfont{ \ottnt{U} }   \ottsym{;}   \constraintfont{ \ottnt{D} }   \ottsym{;}   \constraintfont{ \ottnt{L}_{\ottmv{i}} }   \vdashs   \constraintfont{   \multiplicityfont{ \ottsym{1} }  \scale( \ottsym{(}  \ottnt{U}_{{\mathrm{0}}}  \ottsym{,}  \ottnt{L}_{{\mathrm{0}}}  \ottsym{)}  \Lolly  \constraintfont{C} )  }   \leadsto   \constraintfont{ \ottnt{L}_{\ottmv{o}} } $
          \item $ \constraintfont{  \ottnt{U}  \cup  \ottnt{U}_{{\mathrm{0}}}  }   \ottsym{;}   \constraintfont{  \ottnt{D}  \uplus  \ottsym{(}   \ottnt{L}_{{\mathrm{0}}}  \cap  \constraintfont{\mathcal{D} }   \ottsym{)}  }   \ottsym{;}   \constraintfont{  \ottnt{L}_{\ottmv{i}}  \uplus  \ottsym{(}   \ottnt{L}_{{\mathrm{0}}}  \setminus  \constraintfont{\mathcal{D} }   \ottsym{)}  }   \vdashs   \constraintfont{ \constraintfont{C} }   \leadsto   \constraintfont{ \ottnt{L}_{\ottmv{o}} } $
          \item $ \constraintfont{ \ottnt{L}_{\ottmv{o}} }   \subseteq   \constraintfont{ \ottnt{L}_{\ottmv{i}} } $
  \end{itemize}
  By induction hypothesis we have
  \begin{itemize}
          \item $ \constraintfont{ \ottsym{(}   \ottnt{U}  \cup  \ottnt{U}_{{\mathrm{0}}}   \ottsym{,}   \ottnt{D}  \uplus   \ottnt{L}_{\ottmv{i}}  \uplus  \ottnt{L}_{{\mathrm{0}}}    \ottsym{)} }   \vdash   \constraintfont{ \constraintfont{C}  \qtensor  \ottsym{(}  \emptyset  \ottsym{,}  \ottnt{L}_{\ottmv{o}}  \ottsym{)} } $
          \item $ \constraintfont{ \ottnt{L}_{\ottmv{o}} }   \subseteq   \constraintfont{  \ottnt{L}_{\ottmv{i}}  \uplus  \ottnt{L}_{{\mathrm{0}}}  } $
  \end{itemize}
  Then we know that
  $  \constraintfont{ \ottsym{(}  \emptyset  \ottsym{,}  \ottnt{L}_{\ottmv{i}}  \ottsym{)} }   =   \constraintfont{ \ottsym{(}  \emptyset  \ottsym{,}  \ottnt{L}_{\ottmv{o}}  \ottsym{)}  \qtensor  \ottsym{(}  \emptyset  \ottsym{,}  \ottnt{L}'_{\ottmv{i}}  \ottsym{)} }  $ for
  some $  \constraintfont{ \ottnt{L}'_{\ottmv{i}} }  $.
  Then by \cref{lem:inversion} we know that $ \constraintfont{ \ottsym{(}   \ottnt{U}  \cup  \ottnt{U}_{{\mathrm{0}}}   \ottsym{,}   \ottnt{D}  \uplus   \ottnt{L}'_{\ottmv{i}}  \uplus  \ottnt{L}_{{\mathrm{0}}}    \ottsym{)} }   \vdash   \constraintfont{ \constraintfont{C} } $ and by
  \rref*{C-Impl} we have $ \constraintfont{ \ottsym{(}  \ottnt{U}  \ottsym{,}  \ottnt{L}'_{\ottmv{i}}  \ottsym{)} }   \vdash   \constraintfont{   \multiplicityfont{ \ottsym{1} }  \scale( \ottsym{(}  \ottnt{U}_{{\mathrm{0}}}  \ottsym{,}  \ottnt{L}_{{\mathrm{0}}}  \ottsym{)}  \Lolly  \constraintfont{C} )  } $.
  Finally, by \rref*{C-Tensor} we conclude that $ \constraintfont{ \ottsym{(}  \ottnt{U}  \ottsym{,}  \ottnt{L}_{\ottmv{i}}  \ottsym{)} }   \vdash   \constraintfont{   \multiplicityfont{ \ottsym{1} }  \scale( \ottsym{(}  \ottnt{U}_{{\mathrm{0}}}  \ottsym{,}  \ottnt{L}_{{\mathrm{0}}}  \ottsym{)}  \Lolly  \constraintfont{C} )   \qtensor  \ottsym{(}  \emptyset  \ottsym{,}  \ottnt{L}_{\ottmv{o}}  \ottsym{)} } $
  \item[\rref*{S-ImplMany}] We have
  \begin{itemize}
          \item $ \constraintfont{ \ottnt{U} }   \ottsym{;}   \constraintfont{ \ottnt{D} }   \ottsym{;}   \constraintfont{ \ottnt{L}_{\ottmv{i}} }   \vdashs   \constraintfont{   \multiplicityfont{ \omega }  \scale( \ottsym{(}  \ottnt{U}_{{\mathrm{0}}}  \ottsym{,}  \ottnt{L}_{{\mathrm{0}}}  \ottsym{)}  \Lolly  \constraintfont{C} )  }   \leadsto   \constraintfont{ \ottnt{L}_{\ottmv{i}} } $
          \item $ \constraintfont{  \ottnt{U}  \cup  \ottnt{U}_{{\mathrm{0}}}  }   \ottsym{;}   \constraintfont{  \ottnt{L}_{{\mathrm{0}}}  \cap  \constraintfont{\mathcal{D} }  }   \ottsym{;}   \constraintfont{  \ottnt{L}_{{\mathrm{0}}}  \setminus  \constraintfont{\mathcal{D} }  }   \vdashs   \constraintfont{ \constraintfont{C} }   \leadsto   \constraintfont{ \emptyset } $
  \end{itemize}
  By induction hypothesis we have
  \begin{itemize}
          \item $ \constraintfont{ \ottsym{(}   \ottnt{U}  \cup  \ottnt{U}_{{\mathrm{0}}}   \ottsym{,}  \ottnt{L}_{{\mathrm{0}}}  \ottsym{)} }   \vdash   \constraintfont{ \constraintfont{C} } $
  \end{itemize}
  Then by
  \rref*{C-Impl} $ \constraintfont{ \ottsym{(}  \ottnt{U}  \ottsym{,}  \emptyset  \ottsym{)} }   \vdash   \constraintfont{   \multiplicityfont{ \omega }  \scale( \ottsym{(}  \ottnt{U}_{{\mathrm{0}}}  \ottsym{,}  \ottnt{L}_{{\mathrm{0}}}  \ottsym{)}  \Lolly  \constraintfont{C} )  } $ and finally by
  \rref{C-Tensor} we have $ \constraintfont{ \ottsym{(}  \ottnt{U}  \ottsym{,}  \ottnt{L}_{\ottmv{i}}  \ottsym{)} }   \vdash   \constraintfont{   \multiplicityfont{ \omega }  \scale( \ottsym{(}  \ottnt{U}_{{\mathrm{0}}}  \ottsym{,}  \ottnt{L}_{{\mathrm{0}}}  \ottsym{)}  \Lolly  \constraintfont{C} )   \qtensor  \ottsym{(}  \emptyset  \ottsym{,}  \ottnt{L}_{\ottmv{i}}  \ottsym{)} } $.
  $ \constraintfont{ \ottnt{L}_{\ottmv{i}} }   \subseteq   \constraintfont{ \ottnt{L}_{\ottmv{i}} } $ holds trivially.
  \end{description}
\end{proof}

\begin{lemma}[Weakening of wanteds]\label{lem:wanteds:weakening}
  If $ \constraintfont{ \ottnt{Q} }   \vdash   \constraintfont{ \constraintfont{C} } $, then $ \constraintfont{   \multiplicityfont{ \omega }  \scale \ottnt{Q'}   \qtensor  \ottnt{Q} }   \vdash   \constraintfont{ \constraintfont{C} } $
\end{lemma}
\begin{proof}
  This is proved by a straightforward induction on the derivation of
  $ \constraintfont{ \ottnt{Q} }   \vdash   \constraintfont{ \constraintfont{C} } $, using the corresponding property on the
  simple-constraint entailment relation from
  \cref{def:entailment-relation}, for the \rref*{C-Dom} case.
\end{proof}

\begin{lemma}\label{lem:wanteds:module-action}
  The following equality holds: $  \constraintfont{   \multiplicityfont{ \pi }  \scale \ottsym{(}    \multiplicityfont{ \rho }  \scale \constraintfont{C}   \ottsym{)}  }  =  \constraintfont{   \multiplicityfont{ \ottsym{(}   \pi {⋅} \rho   \ottsym{)} }  \scale \constraintfont{C}  }  $.
\end{lemma}
\begin{proof}
  This is proved by a straightforward induction on the structure of
  $  \constraintfont{ \constraintfont{C} }  $, using \cref{lem:simples:monoid-action} for the case
  $  \constraintfont{ \constraintfont{C} }  =  \constraintfont{ \ottnt{Q} }  $.
\end{proof}
\end{document}